\newtheorem{theorem}{Theorem}[section]
\newtheorem{ques}{Question}
\newtheorem{lemma}[theorem]{Lemma}
\newtheorem{cor}[theorem]{Corollary}
\newtheorem{definition}[theorem]{Definition}
\newtheorem{corollary}[theorem]{Corollary}
\newcommand{\quasipoly}{\ensuremath{\mathrm{quasipoly}}}
\newcommand{\poly}{\ensuremath{\mathrm{poly}}}
\newcommand{\polylog}{\ensuremath{\mathrm{polylog}}}
\newcommand{\sepsa}[1]{\ensuremath{\cC\cS(#1)}}
\newcommand{\sepse}[1]{\ensuremath{\cC\cS_{\text{\fontsize{1}{1}\selectfont\hspace{-0.05em}\scalebox{0.6}{$=$}\hspace{-0.15em}}}(#1)}}
\newcommand{\mult}{\ensuremath{\mu}}
\newcommand{\tw}{\ensuremath{\text{tw}}}
\newcommand{\cA}{\ensuremath{\mathcal{A}}}
\newcommand{\cB}{\ensuremath{\mathcal{B}}}
\newcommand{\cC}{\ensuremath{\mathcal{C}}}
\newcommand{\cF}{\ensuremath{\mathcal{F}}}
\newcommand{\cI}{\ensuremath{\mathcal{I}}}
\newcommand{\cM}{\ensuremath{\mathcal{M}}}
\newcommand{\cO}{\ensuremath{\mathcal{O}}}
\newcommand{\cS}{\ensuremath{\mathcal{S}}}
\newcommand{\into}{\hookrightarrow}
\newcommand{\dist}{\ensuremath{\mathrm{dist}}}
\newcommand{\ans}[2]{#1[#2]}
\DeclareRobustCommand{\extends}{\operatorname{extends}}
\DeclareRobustCommand{\picks}{\operatorname{picks}}
\newcommand{\algcomment}[1]{\colorbox{black!10}{#1}}
\newcommand{\LineIfElse}[3]{ \STATE \algorithmicif\ {#1}\ \algorithmicthen\ {#2} \algorithmicelse\ {#3} }
\newcommand{\priv}{\ensuremath{\mathsf{Priv}}}
\newcommand{\pub}{\ensuremath{\mathsf{Pub}}}
\newcommand{\auto}{\ensuremath{\mathrm{auto}}}
\newcommand{\treedecomp}{\ensuremath{\mathbb{T}}\xspace}
\newcommand{\pointleft}{\leftarrow}
\newcommand{\pointdown}{\downarrow}
\newcommand{\pointright}{\rightarrow}
\newcommand{\pointup}{\uparrow}
\newcommand{\ann}{\mathtt{ann}}
\newcommand{\reach}{\ensuremath{\mathrm{reach}}}
\renewcommand{\int}{\ensuremath{\mathrm{int}}}
\newcommand{\ext}{\ensuremath{\mathrm{ext}}}
\newcommand{\sint}{\ensuremath{\mathrm{sint}}}
\newcommand{\sext}{\ensuremath{\mathrm{sext}}}
\newcommand{\side}{\ensuremath{\mathrm{side}}}
\newcommand{\sub}{\ensuremath{\mathrm{sub}}}
\newcommand{\ind}{\ensuremath{\mathrm{ind}}}
\newcommand{\heavy}{\mathsf{he}}
\newcommand{\light}{\mathsf{li}}
\newcommand{\discard}{\mathsf{di}}
\newcommand{\low}{\ensuremath{\mathrm{low}}}
\newcommand{\upp}{\ensuremath{\mathrm{upp}}}
\newcommand{\smallM}{\ensuremath{\mathrm{small}}}
\newcommand{\largeM}{\ensuremath{\mathrm{large}}}
\newcommand{\can}{\ensuremath{\mathrm{can}}}
\newcommand{\feas}{\ensuremath{\mathrm{feas}}}
\newcommand{\boundary}[1]{\ensuremath{\partial{#1}}}
\newcommand{\tO}{\tilde{O}}
\title{\centering Detecting and Counting Small Patterns in Planar Graphs \\ in Subexponential Parameterized Time}
\author{\hspace{12em} Jesper Nederlof\thanks{Eindhoven University of Technology. \href{j.nederlof@tue.nl}{j.nederlof@tue.nl}. Supported by the Netherlands Organization for Scientific Research under project no. 024.002.003 and the European Research Council under project no. 617951}}
\begin{document}
\maketitle
\setcounter{page}{0}
\thispagestyle{empty}
\begin{abstract}
We present an algorithm that takes as input an $n$-vertex planar graph $G$ and a $k$-vertex pattern graph $P$, and computes the number of (induced) copies of $P$ in $G$ in $2^{O(k/\log k)}n^{O(1)}$ time. If $P$ is a matching, independent set, or connected bounded maximum degree graph, the runtime reduces to $2^{\tilde{O}(\sqrt{k})}n^{O(1)}$.

While our algorithm \emph{counts} all copies of $P$, it also improves the fastest algorithms that only \emph{detect} copies of $P$.
Before our work, no $2^{O(k/\log k)}n^{O(1)}$ time algorithms for detecting unrestricted patterns $P$ were known, and by a result of Bodlaender et al.~[ICALP 2016] a $2^{o(k/\log k)}n^{O(1)}$ time algorithm would violate the Exponential Time Hypothesis (ETH).
Furthermore, it was only known how to detect copies of a fixed connected bounded maximum degree pattern $P$ in $2^{\tilde{O}(\sqrt{k})}n^{O(1)}$ time \emph{probabilistically}.

For counting problems, it was a repeatedly asked open question whether $2^{o(k)}n^{O(1)}$ time algorithms exist that count even special patterns such as independent sets, matchings and paths in planar graphs.
The above results resolve this question in a strong sense by giving algorithms for counting versions of problems with running times equal to the ETH lower bounds for their \emph{decision} versions.

Generally speaking, our algorithm counts copies of $P$ in time proportional to its number of non-isomorphic separations of order $\tilde{O}(\sqrt{k})$.
This algorithm introduces a new recursive approach to construct families of balanced cycle separators in planar graphs that have limited overlap inspired by methods from Fomin et al.~[FOCS 2016], a new `efficient' inclusion-exclusion based argument and uses methods from Bodlaender et al.~[ICALP 2016].
\end{abstract}

\newpage

\section{Introduction}
The complexity of NP-hard problems on planar graphs has been a popular subject for a at least two decades, and its fruitful study resulted in seminal results such as the planar separator theorem by Lipton and Tarjan~\cite{doi:10.1137/0136016} and efficient approximation schemes by Baker~\cite{DBLP:journals/jacm/Baker94}.
An area in which planar graphs are especially a popular subject of study is \emph{Parameterized Complexity}.
A cornerstone result of parameterized complexity\footnote{Quoting its laudatio for the Myhil-Nerode prize~\cite{laudatio}.} by Fomin et al.~\cite{DBLP:journals/jacm/DemaineFHT05} shows that many NP-hard parameterized problems on planar graphs can be solved in \emph{subexponential time}, i.e. $f(k)n^{O(1)}$ time where $f$ is $2^{o(k)}$, $k$ is some (typically small) problem parameter, and $n$ denotes the number of vertices of $G$. Typically $f(k)$ is only $2^{\tilde{O}(\sqrt{k})}$ in this setting.
Thanks to the technique of~\cite{DBLP:journals/jacm/DemaineFHT05} and a large body of follow-up work (see e.g.~\cite{erickson_et_al:DR:2016:6722}), the fine-grained parameterized complexity of many decision problems on planar graphs is by now well understood.

The technique from~\cite{DBLP:journals/jacm/DemaineFHT05}, called \emph{bidimensionality}, is a win-win argument based on the \emph{grid minor theorem}.
The technique exploits that instances of the problem at hand defined by graphs with high treewidth are always YES/NO-instances.\footnote{See Section~\ref{sec:prel} for the definition of treewidth and~\cite[Section 7.7]{Cygan:2015:PA:2815661} for more discussion.}
For example, one can detect whether a planar graph $G$ has a simple path on at least $k$ vertices (called $k$-path) in time $2^{O(\sqrt{k})}n^{O(1)}$ time in this way: If $G$ has treewidth $\Omega(\sqrt{k})$, it has a $(\Omega(k)\times\Omega(k))$-grid as a minor that can be used to show $G$ has a $k$-path. Otherwise, $G$ has treewidth $O(\sqrt{k})$ and dynamic programming can be used to detect $k$-paths in $2^{O(\sqrt{k})}n^{O(1)}$ time.
While the bidimensionality technique is applicable to many problems, it requires that the solution of the instance can be deduced already from the fact that the graph has large treewidth.
This is a rather fragile assumption that often can not be made, and indeed for several important problems the approach turned out inadequate.

\paragraph{Subgraph Isomorphism}\hspace{-1em} on planar graphs is a basic NP-complete problem where the bidimensionality technique falls short perhaps most pressingly.
In the subgraph isomorphism problem we are given an $n$-vertex planar graph $G$ and a $k$-vertex planar graph $P$ and we need to determine whether there exists an (induced) copy of $P$ in $G$.
As mentioned above, bidimensionality does solve this problem in $2^{O(\sqrt{k})}n^{O(1)}$ if $P$ is a path on $k$ vertices, but if we slightly alter the pattern to, say, a cycle on $k$ vertices or a directed path\footnote{The subgraph isomorphism problem can be extended to directed graphs in a natural way.} the technique already breaks down.

Detecting such cycles patterns and directed paths turns out more complicated.
It was observed by Tazari~\cite{DBLP:journals/tcs/Tazari12} and Dorn et al. \cite{DBLP:journals/iandc/DornFLRS13} that variants of the layering technique by Baker~\cite{DBLP:journals/jacm/Baker94} can be used to design algorithms for detecting such patterns with running time $2^{\varepsilon k}n^{O(1/\varepsilon)}$.
Recently it was shown how to detect such patterns \emph{probabilistically} in $2^{\tilde{O}(\sqrt{k})}n^{O(1)}$ time by Fomin et al.~\cite{pattern}.
The following question suggests itself:

\begin{ques}\label{q1}
	Is there a $2^{\tilde{O}(\sqrt{k})}n^{O(1)}$ time \emph{deterministic} algorithm for Subgraph Isomorphism for the special case where the pattern is either a cycle or a directed path on $k$ vertices?
\end{ques}

The general planar subgraph isomorphism problem with unrestricted patterns also been the subject of several interesting works.
Eppstein~\cite{DBLP:journals/jgaa/Eppstein99} was the first to show that the problem is Fixed Parameter Tractable by giving an $k^{O(k)}n^{O(1)}$ time algorithm, and Dorn~\cite{DBLP:conf/stacs/Dorn10} improved this to an $2^{O(k)}n^{O(1)}$ time algorithm.
Afterwards Bodlaender et al.~\cite{DBLP:conf/icalp/BodlaenderNZ16} showed the problem can be solved in $2^{O(n / \log n)}$ time, and any $2^{o(n / \log n)}$ time would contradict the exponential time hypothesis.

By combining the techniques of Bodlaender et al.~\cite{DBLP:conf/icalp/BodlaenderNZ16} and Fomin et al.~\cite{pattern}, one can obtain a $2^{O(k/\log k)}n^{O(1)}$ time probabilistic algorithm that detects any fixed pattern with at most $O(\sqrt{k}/\log k)$ connected components.
However, this still does not settle the complexity of the general problem, and the following question (also mentioned in~\cite{pattern} and~\cite{DBLP:conf/icalp/BodlaenderNZ16}) remained open:
\begin{ques}\label{q2}
	Is there a $2^{O(k/\log k)}n^{O(1)}$ time (deterministic) algorithm for general Subgraph Isomorphism with unrestricted pattern?
\end{ques}
Note that such an algorithm can not be improved under the ETH by the lower bound $2^{o(n / \log n)}$ lower bound from~\cite{DBLP:conf/icalp/BodlaenderNZ16} since $k \leq n$ in any non-trivial instance.

\paragraph{Counting Problems}\hspace{-1em} are perhaps the largest category of problems for which bidimensionality is not applicable.
Counting problems on restricted graphs classes are well-motivated from (among others) seemingly distant areas such as statistical physics, and 
Counting problems on planar graphs are well-studied in terms of polynomial time approximation schemes (see e.g Goldberg~\cite{DBLP:journals/jcss/GoldbergJM15}), and several works showed that the study of 
approximation schemes and parameterized complexity of counting problems is intertwined: Both the methods from Yin~\cite{Yin:2013:ACV:2627817.2627821} and Patel and Regts~\cite{patelregts} give polynomial time approximation schemes that rely on completely unrelated fixed parameter tractable algorithms.

On the other hand, the number of purely parameterized complexity theoretical works on counting problems can be counted on one hand. For example, Frick~\cite{DBLP:journals/mst/Frick04} gave a fixed parameter tractable algorithm for a general class of fixed order logic problems, and Curticapean~\cite{DBLP:conf/esa/Curticapean16} showed it is fixed parameter tractable to count matchings with few unmatched vertices in planar graphs.

\paragraph{Counting Subgraph Isomorphisms}\hspace{-1em} is a very natural extension of subgraph isomorphism with close connections to partition functions~\cite{patelregts} and motif discovery (see e.g.~\cite{Milo824} or the discussion in~\cite{DBLP:conf/stoc/CurticapeanDM17}). 
The aforementioned algorithm of Dorn~\cite{DBLP:conf/stacs/Dorn10} also counts the number of copies of the given pattern, so patterns like independent sets, matchings and paths on $k$ vertices can be counted in $2^{O(k)}n^{O(1)}$ time.
Yet, this does not match the typical $2^{o(\sqrt{k})}n^{O(1)}$ running time that can be obtained for most decision problems via the bidimensionality technique and cannot be improved under the ETH by standard reductions.
Therefore the following natural question was repeatedly asked by several researchers:\footnote{For example, it was posed as open problem in a Dagstuhl report by Marx in~\cite{cygan_et_al:DR:2017:7247}, and talks by Fomin~\url{https://ims.nus.edu.sg/events/2017/asp/files/fedor.pdf} and Saurabh~\url{https://rapctelaviv.weebly.com/uploads/1/0/5/3/105379375/future.pdf}.}

\begin{ques}\label{q3}
	Is there a $2^{\tO(\sqrt{k})}n^{O(1)}$ time algorithm for counting (induced) copies of a pattern $P$, where $P$ is a path, matching, set of disjoint triangles or independent set\footnote{Naturally, if $P$ is an independent set, only counting induced copies of $P$ is interesting.}?
\end{ques}

We would like to stress that even for the special case of counting independent sets on $k$ vertices in subgraphs of grids it is unclear how to obtain an algorithm with $2^{o(k)}n^{O(1)}$ running time without using our techniques. Note that this specific counting problems on subgraphs of grids received attention already in the enumerative combinatorics community (see~\cite{Calkin:1998:NIS:288856.288860, KASTELEYN19611209}).

\renewcommand{\arraystretch}{1.35}
\newcommand{\chec}{\checkmark}
\newcommand{\cross}{$\times$}
\begin{table}
	\centering
	\begin{tabular}{|l |p{4cm} |c |c |l|}
		\hline
		\textbf{Reference} 	& \textbf{Pattern Restriction} & \textbf{Deterministic}	& \textbf{Counting} & \textbf{Runtime}\\
		\hline
		\cite{DBLP:journals/dm/MatousekT92} 	& connected bounded degree		& \chec			& \chec				& $n^{O(\sqrt{n})}$\\
		\hline
		\cite{DBLP:journals/jgaa/Eppstein99} 	& -	  							& \chec			& \chec			& $k^{O(k)}n^{O(1)}$\\
		\hline
		\cite{DBLP:conf/wg/Dorn07} 	& undirected path	  				& \chec			& \cross			& $2^{O(\sqrt{k})}n^{O(1)}$\\
		\hline
		\cite{DBLP:conf/stacs/Dorn10}		 	& -	  							& \chec			& \chec				& $2^{O(k)}n^{O(1)}$\\
		\hline
		\cite{DBLP:conf/icalp/BodlaenderNZ16}& -	 	 						& \chec		& \chec			& $2^{O(n/ \log n)}$\\
		\hline
		\multirow[c]{2}{*}{\cite{pattern}}& connected bounded degree 	& \cross		& \cross			& $2^{\tO(\sqrt{k})}n^{O(1)}$\\\cline{2-5}
		& directed path					& \cross			& \cross				& $2^{\tO(\sqrt{k})}n^{O(1)}$\\
		\hline
		\cite{DBLP:conf/icalp/BodlaenderNZ16,pattern}	& connected 	 	 						& \cross		& \cross			& $2^{O(k/ \log k)}n^{O(1)}$\\
		\hline
		\hline
		\multirow[c]{3}{*}{This paper}										& -	 	 						& \chec		& \chec			& $2^{O(k/ \log k)}n^{O(1)}$\\\cline{2-5}
		& connected bounded degree	 	 						& \chec		& \chec			& $2^{\tO(\sqrt{k})}n^{O(1)}$\\\cline{2-5}
		& independent set, matching			& \chec		& \chec			& $2^{\tO(\sqrt{k})}n^{O(1)}$\\\cline{2-5}
		\hline
	\end{tabular}
	\caption{Runtimes of algorithms for planar subgraph isomorphism.}
	\label{tab:results}
\end{table}
\renewcommand{\arraystretch}{1}

\subsection{Our Results}
We resolve the complexity of the decision and counting variants of Subgraph Isomorphism on planar graphs, and answer the above Questions~\ref{q1}, \ref{q2} and \ref{q3} affirmatively in a strong sense.

We now state our main result for counting subgraph isomorphisms.
Formally, if $G$ and $P$ are undirected graphs, we denote $\sub(P,G)$ for the set of injective functions $f: V(P) \rightarrow V(G)$ such that $\{f(v),f(w)\} \in E(G)$ for every $\{v,w\} \in E(P)$.
Similarly, we denote $\ind(P,G)$ for the set of injective functions $f: V(P) \rightarrow V(G)$ such that $\{f(v),f(w)\} \in E(G)$ if and only if $\{v,w\} \in E(P)$, for every distinct $v,w \in V(P)$.
The running time of our algorithm depends on a pattern-specific parameter $\sigma(P)$ that we define below. 

\begin{theorem}[Main Theorem]\label{thm:main}
	There is an algorithm that takes as input a $k$-vertex graph $P$ and an $n$-vertex planar graph $G$, and outputs $|\sub(P,G)|$ and $|\ind(P,G)|$ in $2^{\tO(\sqrt{k})}(\sigma(P)n)^{O(1)}$ time.
\end{theorem}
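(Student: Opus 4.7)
The plan is to set up a recursive divide-and-conquer on the planar host graph $G$ using a hierarchical family of balanced cycle separators, where each recursive call counts embeddings that respect a prescribed \emph{separation} of the pattern $P$. Following the construction suggested in the introduction (inspired by Fomin et al.), I would first build a tree of cycle separators of $G$: by the Lipton--Tarjan theorem, any planar subgraph $H$ admits a Jordan cycle separator of length $O(\sqrt{|V(H)|})$ partitioning $H$ into an inside and an outside of roughly balanced size. Iterating gives a hierarchical sphere-cut-like decomposition; the novelty required is that the separators have limited overlap, so that the \emph{total} boundary work across the recursion remains polynomial in $n$.

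For a fixed $f \in \sub(P,G)$, at some recursion level a cycle separator $C$ will balance the image $f(V(P))$. The preimage $f^{-1}(V(C))$ induces a separation $(A,S,B)$ of $P$ with $S = f^{-1}(V(C))$, and a key combinatorial claim is that among the possible recursion levels one can always choose $C$ so that $|S| = \tO(\sqrt{k})$: intuitively, once $f(V(P))$ is fixed, we are seeking a balanced separator of $P$ inside a planar drawing, and such patterns admit balanced separators of order $\tO(\sqrt{k})$. The algorithm therefore enumerates, for each non-isomorphic separation $(A,S,B)$ of $P$ of order $\tO(\sqrt{k})$ --- of which there are at most $\sigma(P)$ by definition --- and counts embeddings consistent with it. For each such $(A,S,B)$, it enumerates how $S$ is mapped onto $V(C)$ (with $|V(C)|$ typically much larger than $|S|$, handled via a cycle-based DP in the spirit of Bodlaender et al.) and multiplies the recursively computed counts for extending $A$ inside $C$ and $B$ outside $C$, giving a $2^{\tO(\sqrt{k})}$ overhead per level.

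For induced copies, the multiplication of inside and outside counts overcounts unless we also control non-edges of $P$ between $A \setminus S$ and $B \setminus S$; these are automatically absent because $S$ separates $A$ from $B$ in $P$, so only the image non-edges crossing $C$ in $G$ need care. I would handle these via an inclusion-exclusion over which edges of $G$ between the inside and outside sides are declared ``forbidden''; the efficient inclusion-exclusion argument referenced in the introduction is presumably a reorganization (likely along the lines of the subset convolution / zeta-transform framework) that keeps the sum bounded by $2^{\tO(\sqrt{k})}$ rather than the naive $2^{|S|^2}$ or $2^{k}$.

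The main obstacle, in my view, is twofold. First, establishing rigorously that \emph{some} recursion level yields a separator intersecting $f(V(P))$ in only $\tO(\sqrt{k})$ vertices, \emph{uniformly} over all embeddings $f$ --- this requires the overlap-controlled family of cycle separators, and likely a weighting / marking argument so that a single precomputed hierarchy works for every embedding simultaneously. Second, preventing the $\sigma(P)$ and the $2^{\tO(\sqrt{k})}$ factors from compounding multiplicatively across the $O(\log n)$ recursion levels: the counts must be aggregated so that the per-instance blowup is incurred only once, which is presumably achieved by memoizing counts indexed by (separator boundary configuration, separation type of the relevant sub-pattern) and amortizing via the overlap guarantees of the separator hierarchy.
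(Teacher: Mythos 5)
Your proposal captures the right high-level shape (divide and conquer via cycle separators, recursion indexed by non-isomorphic separations of $P$, hence the $\sigma(P)$ factor), and you correctly flag where the difficulties lie, but the two obstacles you name are exactly the ones the paper resolves with ideas your sketch does not contain. First, you start from Lipton--Tarjan separators of length $O(\sqrt{|V(H)|})$. This already breaks the time bound: a separator cycle of length $\Theta(\sqrt{n})$ forces you to enumerate which $\tilde{O}(\sqrt{k})$ of its vertices are pattern images, costing $n^{\tilde{O}(\sqrt{k})}$ rather than $2^{\tilde{O}(\sqrt{k})}n^{O(1)}$. The paper instead first reduces to a $k$-outerplanar host via Baker layering, so that all cycle separators used afterwards have length $O(k)$ and the boundary of every subproblem stays $k^{O(1)}$. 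Making that layering step work for \emph{counting} is itself nontrivial and is where the paper's ``efficient inclusion-exclusion'' actually lives (Lemma~\ref{lem:redkout} and Lemma~\ref{lem:techDP}): the exponential inclusion-exclusion sum over subsets of layers collapses to a polynomial-size chain DP because the layers are nested. Your proposal instead spends inclusion-exclusion on forbidden crossing edges for induced copies, which is a non-issue (the paper handles $\sub$ vs.\ $\ind$ by a trivial edge-subdivision gadget and the separation structure of $P$ already forbids edges between the two sides).

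Second, your ``key combinatorial claim'' --- that a single precomputed separator hierarchy contains, at some level, a cycle meeting every embedding's image in only $\tilde{O}(\sqrt{k})$ vertices --- is unsubstantiated and is not how the paper proceeds. The paper's mechanism is adaptive: starting from one balanced cycle $C$, it applies a Menger-like duality (Lemma~\ref{lem:duality}) to an alignment of $C$ to produce either many disjoint separators or many nearly-disjoint paths, and in either case at least one member of the resulting $\quasipoly(k)$-size family has few pattern vertices. Crucially, for \emph{counting} one cannot just ``guess'' which member is sparse; the paper introduces \emph{monitors} (cardinality constraints $M_\low \le |X\cap M| \le M_\upp$ attached to subproblems) so that the candidates partition the set of embeddings exactly, and a second application of efficient inclusion-exclusion for the nearly-disjoint case. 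Finally, your recursion has no mechanism for progress in $k$: the paper's win-win (``acquiring balance,'' Lemma~\ref{lem:acqbal}) ensures that whenever the graph does not shrink by a constant factor, the number of still-unmapped pattern vertices drops by $\Omega(\sqrt{k})$, which is what makes the recurrence $T(n,k)\le 2T(3n/4,k)+k^{O(\log k)}T(n,k-\Omega(\sqrt{k}))$ close to $2^{\tilde{O}(\sqrt{k})}n^{O(1)}$. Without that, your worry about factors compounding over $O(\log n)$ levels is well founded and your amortization sketch does not resolve it.
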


We can also count the number of vertex subsets inducing the sought copies by dividing $|\ind(P,P)|$ or $|\sub(P,P)|$. The parameter $\sigma(P)$ is the number of non-isomorphic separations of $P$ of order $\tilde{O}(\sqrt{k})$.
Formally speaking, a \emph{separation of $P$} is a pair of vertex subsets $X,Y \subseteq V(G)$ such that $X \cup Y = V$ and there are no edges in $P$ between vertices from $X \setminus Y$ and $Y \setminus X$, and the order of $(X,Y)$ is $|X \cap Y|$.
Separations $(X,Y)$ and $(X',Y')$ are isomorphic if there is an isomorphism $f$ of $P$ such that $f(X)=X'$, and $f(v)=v$ for every $v \in X \cap Y$.

This factor in our running time is a direct consequence of reverse-engineering the technique of Bodlaender et al~\cite{DBLP:conf/icalp/BodlaenderNZ16}.
In fact, it follows from the analysis of~\cite{DBLP:conf/icalp/BodlaenderNZ16} that $\sigma(P)$ is $2^{O(k / \log k)}$ for any pattern $P$ (we spell this out in Lemma~\ref{lem:boundsigmaBod}).
Thus, we obtain the following consequence of Theorem~\ref{thm:main} that answers Question~\ref{q2} positively:
\begin{cor}
	There is an algorithm that takes as input an $k$-vertex graph $P$, and an $n$-vertex planar graph $G$ and outputs $|\sub(P,G)|$ and $|\ind(P,G)|$ in $2^{O(k / \log k)}n^{O(1)}$ time.
\end{cor}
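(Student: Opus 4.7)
The plan is to derive the corollary directly from Theorem~\ref{thm:main} by bounding the factor $\sigma(P)^{O(1)}$ appearing in its runtime. Since every ordinary and every induced subgraph of the planar host $G$ is planar, I may assume $P$ itself is planar (otherwise $|\sub(P,G)|=|\ind(P,G)|=0$, which the algorithm recognizes after a polynomial-time planarity test on $P$). It therefore suffices to show that for every $k$-vertex planar graph $P$, the number $\sigma(P)$ of non-isomorphic separations of $P$ of order at most $\tilde{O}(\sqrt{k})$ is at most $2^{O(k/\log k)}$; this will be the content of the forthcoming Lemma~\ref{lem:boundsigmaBod}.

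For that lemma, the plan is to reverse-engineer the counting argument underlying the ETH lower bound of Bodlaender et al.~\cite{DBLP:conf/icalp/BodlaenderNZ16}: they showed that the number of inequivalent ``states'' of partial planar embeddings across a small cut is only $2^{O(k/\log k)}$ rather than the trivial $2^{O(k)}$. First I would fix the separator $Z = X \cap Y$, an (unordered) vertex set of size at most $\tilde{O}(\sqrt{k})$, and then count, modulo automorphisms of $P$ that fix $Z$ pointwise, the ways of distributing $V(P)\setminus Z$ between the two sides with no crossing edges. Planarity of $P$ together with the small boundary $Z$ will let me encode each side as a bounded-boundary labelled planar structure of exactly the kind already enumerated in~\cite{DBLP:conf/icalp/BodlaenderNZ16}, so that their $2^{O(k/\log k)}$ bound transfers.

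Given this bound the corollary follows by arithmetic: substituting $\sigma(P) \le 2^{O(k/\log k)}$ into the runtime of Theorem~\ref{thm:main} gives
\[
	2^{\tilde{O}(\sqrt{k})}\bigl(\sigma(P)\cdot n\bigr)^{O(1)} \;\le\; 2^{\tilde{O}(\sqrt{k})}\cdot 2^{O(k/\log k)}\cdot n^{O(1)} \;=\; 2^{O(k/\log k)}\,n^{O(1)},
\]
where in the last equality I use that $\sqrt{k}\,\polylog(k) = o(k/\log k)$, so the $2^{\tilde{O}(\sqrt{k})}$ factor is absorbed into $2^{O(k/\log k)}$.

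The only substantive step is the combinatorial bound on $\sigma(P)$, so the hard part will be adapting the counting lemma of~\cite{DBLP:conf/icalp/BodlaenderNZ16}, whose original target was branch-decomposition states in the ETH lower bound, to the slightly different task of counting separations of $P$ up to automorphism. Once that adaptation is in place, everything else is just unpacking Theorem~\ref{thm:main}.
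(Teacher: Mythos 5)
Your proposal follows essentially the same route as the paper: the corollary is obtained by substituting the bound $\sigma(P)=2^{O(k/\log k)}$ of Lemma~\ref{lem:boundsigmaBod} (itself imported from the analysis in Section~3.3 of Bodlaender et al.) into Theorem~\ref{thm:main} and absorbing the $2^{\tilde{O}(\sqrt{k})}$ factor. Your explicit preliminary planarity test on $P$ is a detail the paper leaves implicit (Lemma~\ref{lem:boundsigmaBod} is stated only for planar $P$), but it is a correct and harmless addition.
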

Recall that Bodlaender et al. showed that a $2^{o(n/\log n)}$ time algorithm contradicts ETH, and thus our algorithm can probably not be improved significantly easily.

We continue our discussion with specific choices for the pattern $P$.
If $P$ is a matching, set of disjoint triangles, independent set or any connected graph with bounded maximum degree on $k$ vertices, it can be shown\footnote{See Lemma~\ref{lem:boundsigmacon}} that $\sigma(P)$ is at most $2^{\tilde{O}(\sqrt{k})}$.
Thus, Theorem~\ref{thm:main} resolves Question~\ref{q3} in the following sense:
\begin{cor}\label{cor:simplepat}
	Let $P$ be a matching, set of disjoint triangles, independent set or any connected graph with bounded maximum degree on $k$ vertices.
	Then there is an algorithm that takes as input an $n$-vertex planar graph $G$ and outputs $|\sub(P,G)|$ and $|\ind(P,G)|$ in $2^{\tilde{O}(\sqrt{k})}n^{O(1)}$ time.
\end{cor}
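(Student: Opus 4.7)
The plan is to apply Theorem~\ref{thm:main}, which reduces the corollary to showing $\sigma(P) \leq 2^{\tO(\sqrt{k})}$ for each pattern class in the statement (this is essentially the content of Lemma~\ref{lem:boundsigmacon}). Fix a separation $(X,Y)$ of $P$ with $S = X\cap Y$, $A = X\setminus Y$, $B = Y\setminus X$, and $|S| = s = \tO(\sqrt{k})$. Since the equivalence defining $\sigma(P)$ only allows automorphisms of $P$ that fix $S$ pointwise, the isomorphism class of $(X,Y)$ is determined by $S$ together with a $2$-coloring of the connected components of $P-S$ (which side of the separation each lies on), taken modulo the subgroup of $\mathrm{Aut}(P)$ fixing $S$. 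Since $\binom{k}{s} = 2^{O(s \log k)} = 2^{\tO(\sqrt{k})}$ already accounts for the choice of $S$, it suffices to bound, for each fixed $S$, the number of inequivalent $2$-colorings by $2^{\tO(\sqrt{k})}$.

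For $P$ an independent set, matching, or set of disjoint triangles, every connected component of $P$ has constant size, so each interacts with $S$ in only $O(1)$ essentially different ways. Components of $P-S$ therefore fall into $O(1)$ attachment-isomorphism types among those meeting $S$, and $O(1)$ types among the ``free'' components disjoint from $S$. Up to automorphisms of $P$ fixing $S$, the $2$-coloring is determined by an integer per type (how many components of that type are colored $A$), giving only polynomially many equivalence classes --- well below $2^{\tO(\sqrt{k})}$.

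For $P$ connected with maximum degree $\Delta = O(1)$, the decisive observation is that connectivity of $P$ forces every component of $P-S$ to carry at least one edge to $S$, so $P-S$ has at most $\sum_{v\in S} \deg_P(v) \leq \Delta s$ components. Grouping these into classes of $m_i$ components that are pairwise isomorphic and have the same attachment pattern to $S$, the number of inequivalent $2$-colorings is $\prod_i (m_i + 1) \leq 2^{\sum_i m_i} \leq 2^{\Delta s} = 2^{\tO(\sqrt{k})}$, using $m + 1 \leq 2^m$ for $m \geq 1$. Combining this with the $\binom{k}{s}$ bound on the choice of $S$ gives $\sigma(P) = 2^{\tO(\sqrt{k})}$, and Theorem~\ref{thm:main} then yields the claimed running time in all three cases.

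The main obstacle lies in correctly quotienting by the isomorphism equivalence. The raw number of separations, or even of $2$-colorings of components for a fixed $S$, can be exponentially large --- most visibly for matchings, where $P-S$ has $\Omega(k)$ free components --- so one must argue carefully that two components of $P-S$ with identical attachment pattern to $S$ and isomorphic as attached subgraphs are genuinely interchangeable by an automorphism of $P$ fixing $S$ pointwise. Once this is checked, the per-isomorphism-class integer counting above collapses the count down to the required $2^{\tO(\sqrt{k})}$ bound and the corollary follows.
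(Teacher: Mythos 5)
Your proposal is correct and follows essentially the same route as the paper: invoke Theorem~\ref{thm:main} and bound $\sigma(P)$ by $2^{\tO(\sqrt{k})}$ by choosing the separator $S$ in $\binom{k}{\leq s}$ ways and then counting inequivalent distributions of the components of $P-S$, which is exactly the content of Lemma~\ref{lem:boundsigmacon} (the paper states and proves that lemma only for connected bounded-degree $P$ and leaves the disconnected patterns implicit, so your explicit quotient-by-type argument for matchings, triangles and independent sets --- where the raw separation count is $2^{\Theta(k)}$ and the quotient is genuinely needed --- is a welcome completion). One small imprecision: for a matching the singleton components of $P-S$ attached to distinct vertices of $S$ form up to $s$ distinct attachment types rather than $O(1)$, so the number of inequivalent colorings for a fixed $S$ is $2^{O(s)}$ rather than polynomial, but this is still comfortably within $2^{\tO(\sqrt{k})}$ and nothing breaks.
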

It is folklore knowledge that, assuming ETH, there is no $2^{o(\sqrt{n})}$ time algorithm that decides whether $|\sub(P,G)|>0$ when $P$ is a set of triangles or a path, or whether $|\ind(P,G)| > 0$ when $P$ is a matching, set of disjoint triangles or independent set or path.
Thus, our result resolves the running time of the fastest pattern counting algorithm assuming ETH for all listed pattern classes.

The algorithm for detecting a connected pattern with bounded maximum degree from Corollary~\ref{cor:simplepat} can be combined with a simple gadget\footnote{Replace each arc $(v,w)$ in the host/pattern graph with new vertices $\{a,b,c\}$ and edges $\{\{v,a\},\{a,b\},\{a,c\},\{c,w\}\}$.} to obtain the following result that resolves Question~\ref{q1}.
\begin{cor}
	There is an $2^{\tilde{O}(\sqrt{k})}n^{O(1)}$ time deterministic algorithm that detects (and in fact, even counts) the number of simple directed cycles or paths on $k$ vertices in an $n$-vertex directed planar graph.
\end{cor}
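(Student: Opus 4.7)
The plan is to reduce counting directed $k$-cycles or $k$-paths in a directed planar graph $G$ to counting copies of an undirected pattern in an undirected planar graph, so that Corollary~\ref{cor:simplepat} finishes the job. First, I would build an undirected graph $G'$ by applying the footnote's gadget to $G$: replace each arc $(v,w)$ with new vertices $a_{(v,w)}, b_{(v,w)}, c_{(v,w)}$ and edges $\{v,a_{(v,w)}\}$, $\{a_{(v,w)}, b_{(v,w)}\}$, $\{a_{(v,w)}, c_{(v,w)}\}$, $\{c_{(v,w)}, w\}$. Applying the same transformation to the directed pattern (either the $k$-cycle or the $k$-path) yields an undirected pattern $P'$ on $O(k)$ vertices. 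Planarity is preserved because the gadget fits inside a thin tube along each arc, and $P'$ is clearly connected (its gadgets are strung along the directed backbone) with maximum degree $3$, attained only at the $a$-vertices.

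Corollary~\ref{cor:simplepat} then produces $|\sub(P',G')|$ and $|\ind(P',G')|$ in $2^{\tO(\sqrt{|V(P')|})}n^{O(1)} = 2^{\tO(\sqrt{k})}n^{O(1)}$ time, since $P'$ is connected with bounded maximum degree. It remains to relate these counts back to the number of directed $k$-cycles or $k$-paths in $G$. Each directed $k$-cycle $v_1 \to \cdots \to v_k \to v_1$ in $G$ yields a natural embedding of $P'$ into $G'$ by sending the $i$-th pattern gadget to the gadget on arc $(v_i,v_{i+1})$. For the converse I would argue that the structural asymmetry of the gadget (the unique degree-$3$ vertex $a$ carries a pendant $b$ and a degree-$2$ extension $c$ towards the target) forces any embedding $f\colon V(P') \to V(G')$ to be \emph{gadget-respecting}: each $a$-vertex of $P'$ must map to an $a$-vertex of some gadget of $G'$, and the remaining gadget vertices of $P'$ map to the matching siblings in the same gadget, with the correct orientation. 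Once this is established, reading off the images of the $v_i$ recovers the underlying directed cycle/path in $G$, and dividing $|\sub(P',G')|$ by $|\mathrm{Aut}(P')|$ converts it into the desired count.

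The main obstacle is verifying this rigidity claim. A naive subgraph embedding could, a priori, map a pendant $b_i$ onto an original vertex of low degree, or route part of $P'$ through an original vertex of degree $\geq 3$ in place of an $a$-gadget. I would handle this by a local case analysis driven by $f(a_i)$: the three distinct neighbors required in $G'$, combined with the constraint that $f(c_i)$ itself have two distinct neighbors (one being $f(a_i)$, the other being $f(v_{i+1})$) and the analogous constraint on $f(v_i)$, pin down $f(a_i)$ to be an $a$-vertex of a gadget and force the remaining images uniquely. In case of residual ambiguity I would work with $|\ind(P',G')|$ instead of $|\sub(P',G')|$, since induced embeddings additionally forbid spurious edges between non-adjacent pattern vertices, which eliminates the degenerate configurations while still being delivered by Theorem~\ref{thm:main} at no extra cost.
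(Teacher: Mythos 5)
Your overall route is exactly the one the paper intends: the paper's entire proof of this corollary is the footnote gadget together with an appeal to Corollary~\ref{cor:simplepat} for connected bounded-degree patterns, and your construction of $G'$ and $P'$, the planarity and degree bounds, and the invocation of Corollary~\ref{cor:simplepat} all match that.

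However, the step you correctly single out as the main obstacle --- that every (induced) embedding of $P'$ into $G'$ is gadget-respecting --- is genuinely false for the gadget as stated, and neither your local case analysis nor the retreat to $|\ind(P',G')|$ repairs it. Concretely, let $G$ consist of two vertices $u,w$ with both arcs $(u,w)$ and $(w,u)$, producing gadget vertices $a_1,b_1,c_1$ and $a_2,b_2,c_2$ respectively, and let $P'$ be the pattern for the directed $2$-path, with vertices $v_1,a,b,c,v_2$ and edges $\{v_1,a\},\{a,b\},\{a,c\},\{c,v_2\}$. The map $v_1\mapsto c_1$, $a\mapsto a_1$, $b\mapsto b_1$, $c\mapsto u$, $v_2\mapsto c_2$ sends every pattern edge to an edge of $G'$ and every pattern non-edge to a non-edge, so it lies in $\ind(P',G')$, yet it exits the first gadget \emph{backwards} through $u$ rather than forwards through $c_1$, and it inflates the count beyond what dividing by $|\auto(P')|$ can correct. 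The source of the failure is a local symmetry: the branching vertex $a_1$ of the gadget on arc $(u,w)$ cannot distinguish its neighbour $u$ from its neighbour $c_1$ when $u$ has total degree $2$ in $G$, since both are then degree-$2$ vertices with exactly one further neighbour; note that $f(a)$ does land on an $a$-vertex here, so pinning down the images of the $a$-vertices is not enough. To make the argument go through you must strengthen the gadget so that the two non-pendant neighbours of the branching vertex are locally distinguishable --- for instance, subdivide the two sides of the gadget to different lengths and/or attach identifying pendant structures to the original vertices of $G$ and the backbone vertices of $P$ --- and then reprove rigidity for the modified gadget. The modified pattern remains connected, planar, of bounded degree and of size $O(k)$, so Corollary~\ref{cor:simplepat} still applies and the rest of your argument goes through unchanged.
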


Finally, we would like to mention that, via standard techniques, our algorithms can also be used to obtain uniform samples from the set $\ind(P,G)$ and $\sub(P,G)$ in similar running times. Such questions were also studied for $P$ being a path on $k$ vertices and $G$ being planar by Montanari~\cite{DBLP:conf/mfcs/MontanariP15}.

\subsection{Previous Related Work}

A seminal paper by Alon et al.~\cite{DBLP:journals/jacm/AlonYZ95} gave a $2^{O(k)}n^{\tw(P)}$ time algorithm for subgraph isomorphism in general graphs.
For the counting extension, Flum and Grohe showed there is no $f(k)n^{O(1)}$ time algorithm that counts occurrences of $P$ in $G$ even in the special case that $P$ is a path on $k$ vertices.

On the other hand Patel and Regts~\cite{DBLP:journals/corr/PatelR17} gave an algorithm that counts the number of induced copies of $P$ in time $O(c(\Delta)^kn)$.
Curticapean et al.~\cite{DBLP:conf/stoc/CurticapeanDM17} gave an algorithm that counts the number of copies of $P$ in $G$ in $2^{O(|E(P)|\log |E(P)|)} n^{0.174|E(P)|}$ time.
See also a survey by Curticapean~\cite{DBLP:conf/iwpec/Curticapean18} on counting and parameterized complexity.

The special case of $G$ (and thus, also $P$) being \emph{planar} was studied first by Eppstein~\cite{DBLP:journals/jgaa/Eppstein99}.
His approach was to follow the layering technique of Baker~\cite{DBLP:journals/jacm/Baker94}.
Briefly speaking, this is to partition the vertex set into $k+1$ parts $V_1,\ldots,V_{k+1}$ such that for every $i$, the graph $G[V \setminus V_i]$ has treewidth $O(k)$.
Then one can try to count all occurrences of $P$ by summing over all $i$, and count all occurrences of $P$ in $G[V \setminus V_i]$ using dynamic programming on the treedecomposition.
However, this overcounts occurrences of $P$ that are disjoint from more than one part.
To avoid this overcounting, we use additional table indices to keep track of whether vertices from some part where included, and only count pattern occurrences where some fixed part $V_i$ is disjoint from the pattern occurrence, but where all $V_j$ with $j < i$ intersect with the pattern occurrence. In this way Eppstein obtained an algorithm that counts the number of occurrences of $P$ in $k^{O(k)}n$ time.
Dorn~\cite{DBLP:conf/stacs/Dorn10} later sharpened the running time to $2^{O(k)}n$ by exploiting planarity in the dynamic programming subroutine.

Bodlaender et al.~\cite{DBLP:conf/icalp/BodlaenderNZ16} settled the complexity of subgraph isomorphism with large patterns in a curious way: They showed that occurrences of $P$ can be detected in $2^{O(n / \log n)}$ time, and that any $2^{o(n/ \log n)}$ time algorithm would contradict ETH.
Their algorithm builds on a natural dynamic programming algorithm that is indexed by separations of order $O(\sqrt{n})$, but exploits that many table entries computed by this algorithm will be equal whenever the associated separations are isomorphic.
The curious running time follows from an upper bound on the number of non-isomorphic separations of order $\tilde{O}(\sqrt{n})$.

Fomin et al.~\cite{pattern} provided a new robust tool: given a planar graph\footnote{The result of Fomin et al.~\cite{pattern} applies to the more general class of apex-minor free graphs, but we restrict our discussion to planar graphs.} $G$ and an integer $k$, they sample a subset $S \subseteq V(G)$ such that $\tw(G[S])= \tO(\sqrt{k})$ and for every $X \subseteq V(G)$ such that $G[X]$ has $O(\sqrt{k}/\log k)$ connected components it holds that $X \subseteq S$ with probability at least $1/2^{\tO(\sqrt{k})}$.
Their technique to achieve this result is a combination of elements of Baker's approach, an extension of Menger's theorem and an intricate divide and conquer scheme.

A combination of the techniques from~\cite{DBLP:conf/icalp/BodlaenderNZ16} and~\cite{pattern} gave a $2^{O(k/\log k)}$ time randomized algorithm for detecting occurrences of a given connected pattern.

\subsection{Our Approach}
We now briefly describe the high level intuition behind our approach to obtain our main result, Theorem~\ref{thm:main}.
As mentioned above, our approach employs aspects of the relatively new works of Fomin et al.~\cite{pattern} and Bodlaender et al.~\cite{DBLP:conf/icalp/BodlaenderNZ16}, but also uses more classic techniques such as Baker's partitioning to reduce the treewidth, as already proposed by Eppstein for subgraph isomorphism~\cite{DBLP:journals/jgaa/Eppstein99}. We now give a brief outline of our approach, with an emphasis on our main innovations.

\paragraph{Detecting Patterns: Sparsifying Balanced Cycle Separators}
We follow the approach from~\cite{pattern} that employs a Menger-like lemma (Lemma~\ref{lem:duality}) as a crucial ingredient, but we employ this lemma differently.
In Algorithms~\ref{alg:mainreduction1} and~\ref{alg:halvewaste} we will use the a more involved version to prove Lemma's~\ref{lem:mainred} and~\ref{lem:cleanstep}.

We start by preprocessing the graph via a standard argument~\cite{DBLP:journals/jacm/Baker94, DBLP:journals/jgaa/Eppstein99} to ensure it is $O(k)$-outerplanar. 
This implies we can find a small balanced cycle separator $C$ (after triangulation).

Then we use that for given any balanced (with respect to an unknown weight function) cycle separator $C$, we can construct a family of $\quasipoly(k)$ balanced cycle separators such that at least one cycle of the output family has small intersection with the (unknown) pattern $P$.

To obtain this family, we partition the cycle in four equally-sized consecutive parts $C^\pointleft,C^\pointdown,C^\pointright,C^\pointup$, inspired by a proof of the planar grid-minor theorem (following a version of the proof by Grigoriev~\cite{DBLP:journals/dmtcs/Grigoriev11}).
Then we consider $G'$ which either is the interior or the exterior of $C$ (depending on which has higher weight, and we can try both if the weight is unknown).
Applying Lemma~\ref{lem:duality} in $G'$, we either get a family of mutually disjoint $(C^\pointleft-C^\pointright)$-separators (which are $(C^\pointdown-C^\pointup)$-paths) or nearly-disjoint $(C^\pointdown-C^\pointup)$-separators (which are $(C^\pointleft-C^\pointright)$-paths).
In either case, we (non-deterministically) guess a path $S_i$ with little intersection with the pattern $P$ (which exists as the paths have limited mutual overlap).
Now we form two different cycles from $S_i \cup C$. The cycle with smallest weight in its exterior can be shown to be sufficiently balanced.
Repeating the procedure $O(\log(k))$ times suffices to prove the lemma.

\paragraph{Detecting Patterns: Acquiring Balance}
If we would apply the above approach recursively in a direct way to obtain a good tree decomposition-like divide and conquer scheme for running a dynamic programming to detect patterns, we quickly would arrive at running times of the type $n^{\log(k)}2^{\tilde{O}(\sqrt{k})}$, for problems like directed longest path or longest cycle on $k$ vertices.
This is already a very strong indication that a $2^{\tilde{O}(\sqrt{k})}$ running time is within reach, and indeed the following simple additional new idea allows such running time:
When given a balanced cycle separator $C$, we first guess whether $C$ has at most or at least $\sqrt{k}$ vertices from the pattern $P$.

If $C$ has at most $\sqrt{k}$ vertices from $P$, there are only $\tbinom{O(k)}{\sqrt{k}}$ possibilities for the image of the mapping of $C$ to $P$, and the associated dynamic programming table will be small enough. Thus $C$ can be used to decompose the problem into two subproblems with both only a constant fraction of the vertices of $G$. Otherwise, the assertion that $C$ has at least $\sqrt{k}$ pattern vertices can be used to construct another cycle $C'$ that has $\Omega(\sqrt{k})$ vertices from $P$ in both its interior and exterior. Then we use $C'$ as basis for the procedure outlined above to construct a family of cycles that separate at least $\Omega(\sqrt{k})$ pattern vertices. Oversimplifying things, the number of recursive calls $T(n,k)$ of a divide and conquer scheme applying this strategy exhaustively in terms of graphs with $n$ vertices and patterns with $k$ vertices satisfies the following upper bound:
\[
	T(n,k) \leq 2T(2n/3,k)+\quasipoly(k)\cdot T\left(n,k-\Omega(\sqrt{k})\right) = n^{O(1)}\left(\log(n)\right)^{\tO(\sqrt{k})} \leq 2^{\tilde{O}(\sqrt{k})}n^{O(1)},
\]
whether the latter upper bound can be shown by a case distinction on whether $\log(n) \leq k$.

\paragraph{Counting Patterns: Efficient Inclusion-Exclusion}
Note it is even not clear how to make the preprocessing step by Eppstein~\cite{DBLP:journals/jgaa/Eppstein99} and Baker~\cite{DBLP:journals/jacm/Baker94} to make the graph $O(k)$-outerplanar work in the counting setting as the natural extension (sum over all blocks of the partition, remove the block and count the number of pattern occurrences) will over count pattern occurrences.
Moreover, extending the dynamic programming table by keeping track of which blocks of the partition vertices have been selected as done in~\cite{DBLP:conf/stacs/Dorn10,DBLP:journals/jgaa/Eppstein99} increases the number of table entries to $2^{\Omega(k)}$.

Instead, we present a new approach based on inclusion-exclusion. Indeed, to avoid over count it is natural to compensate by summing over all subsets of the $k$ blocks in the partition and count the number of pattern occurrences exactly using inclusion-exclusion.
To avoid summing over all $2^{O(k)}$ sums in the inclusion-exclusion formula, we make the crucial observation that it's summands are algebraically dependent in a strong sense.
We show that we only need to compute pattern occurrences in $O(k^2)$ subgraphs that are $O(k)$-outer planar, and can evaluate the inclusion-exclusion formula in polynomial time given these values.
We call this (to our best knowledge, new\footnote{Let us remark that inclusion exclusion was used before for counting problems in planar graphs by Curticapean in~\cite{DBLP:conf/esa/Curticapean16} to reducing counting non-perfect matchings to non-perfect matchings, but in a very different way.}) idea \emph{Efficient inclusion-exclusion}.

We point out that without this new idea it would even be hard to get very special subcases of our general theorem, such as to count $k$-vertex independent sets in subgraphs of grids in $2^{\tilde{O}(\sqrt{k})}$ time.

\paragraph{Counting Patterns: Combinining all ideas}
To prove Theorem~\ref{thm:main} in its full generality, we combine all above new insights with the isomorphism check as exploited by Bodlaender et al.~\cite{DBLP:conf/icalp/BodlaenderNZ16}.
But to combine all above steps, still a number of technical hurdles need to be overcome.

First, when the step in which we (non-deterministically) guess a path $S_i$ with little intersection is replaced with summing over all $i$, we will over count.
We resolve this in different ways depending on whether the set of paths is completely disjoint or nearly-disjoint.
In the first case we can avoid over counting by keeping track of that we need to intersect some paths in the recursion.
We implement idea by associating a set of \emph{monitors} with a recursive call.
Specifically, a monitor is a set $M$ with two associated integers $M_{\low}$ and $M_{\upp}$.
We distinguish `small' monitors (with $\poly(k)$ vertices), and large monitors (with an unbounded number of vertices).
Given a set of monitors $\cM$ in a subproblem, we count, for every vector $r \in \mathbb{Z}^\cM$ such that $M_{\low}\leq r_M \leq M_{\upp}$, the number of occurrences of $P$ in $G$ on vertex set $X \subseteq V(G)$ such that $|X \cap M|=r_M$ for every $\cM \in \cM$.
In the second case, we apply the efficient inclusion-exclusion idea (in a slightly more technical, but essentially same, way as we did to reduce the outerplanarity of $G$)

Before we continue with sparsifying a balanced cycle separator, we need to ensure that the subproblem is `clean'.
Specifically, we need that the number of monitors and the number of `boundary vertices' (i.e. vertices of which we need to track how the pattern maps to them) are $\tO(\sqrt{k})$. 
To ensure this, we employ a \emph{cleaning step} that aims at sparsifying separators that balance the number of vertices in `small' monitors and boundary vertices.
After $O(\log k)$ of such steps, there will only $\tO(\sqrt{k})$ of such vertices left.

\subsection*{Organization}
Notation, useful standard tools, and other preliminaries are described in Section~\ref{sec:prel}.
In Section~\ref{sec:cis} we set up main building blocks of Theorem~\ref{thm:main}, which we subsequently prove in Section~\ref{sec:mainred}.

\section{Preliminaries}
\label{sec:prel}
\paragraph{Notation and Basic Definitions}
With a triangulated graph we mean a graph with a given embedding in which all faces are of size $3$.
Let $\mathbb{N}:=\{1,2,\ldots\}$, $[k]:=\{1,\ldots,k\}$, denote $a \% b$ for the remainder of $a/b$, and $ \equiv_p$ for being congruent mod $p$.
We use $\binom{X}{\leq s}$ and $\binom{X}{s}$ for all subsets of $X$ of size at most and respectively equal to $s$.
In this paper $O^*(\cdot)$ suppresses factors polynomial in the problem instance size.
Let $G$ be an undirected graph.
Whenever $X\subseteq V(G)$, we let $G-X$ denote $G[V(G) \setminus X]$.
Similarly, if $X\subseteq E(G)$, we let $G-X$ denote the graph $(V(G),E(G) \setminus X)$.
If $w:V(G)\rightarrow \mathbb{R}$, we shorthand $w(G):= w(V(G)):=\sum_{v \in V(G)}w(v)$.
If $\cF \subseteq 2^U$ is a set family and $X \subseteq U$ we denote $\cF[X]:= \{ F \cap X: F \in \cF \}$.
If $A,B$ are sets we denote $A^B$ for all vectors indexed by $B$ with values from $A$.
We use both the $a_i$ and $a[i]$ index notation in this paper, to occasionally avoid subscripts.
Given two vectors $a,b \in \mathbb{Z}^B$, we let denote $a \preceq b$ that $a_x \leq b_x$ for every $x \in B$.
If $v \in V(G)$, $\dist(v_0,v)$ denotes the length of (that is, the number of edges on) the shortest path from $v_0$ to $v$.

If $X \subseteq V(G)$, we denote $\partial_G X:= \{x \in X: N(x) \not\subseteq X\}$. If $G$ is clear from the context it will be omitted.
We use $\tO(f(k))$ to omit $\polylog(k)$ factors, let $\poly(k)$ denote all functions of the type $k^{O(k)}$ and $\quasipoly(k)$ denote all functions of the type $k^{\polylog(k)}$.

\paragraph{Functions}
Given a function $f: A \rightarrow B$ and $b \in B$, we let $f^{-1}(b)=\{a \in A:f(a)=b \}$.
We let $f: A \into B$ denote that $f$ is injective (that is $f(a)=f(a')$ implies that $a=a'$).
If $f': A' \rightarrow B$ for a superset $A \subseteq A'$, we say $f'$ extends $f$ if $f'(a)=f(a)$ for every $a \in A$. In this case we also say $f$ is the projection of $f'$ on $A$.
If $f: A\rightarrow B$ and $g: C \rightarrow B$, we say $f$ and $g$ \emph{agree} if $f(x)=g(x)$ for every $x \in A \cap B$.
If $f^{-1}(b)$ is a singleton set, we may also interpret it as a single element of $A$.
If $G$ and $P$ are undirected graphs, we denote $\sub(P,G)$ for the set of injective functions $\{f: V(P) \into V(G): \{v,w\} \in E(P) \rightarrow \{f(v),f(w)\} \in E(G) \}$, and $\ind(P,G)$ for the set of injective functions $\{f: V(P) \into V(G): \{v,w\} \in E(P) \leftrightarrow \{f(v),f(w)\} \in E(G) \}$.
A bijection $f: V(G) \rightarrow V(G)$ is an \emph{isomorphism} if $\{a,b\} \in E(P)$ if and only if $\{f(a),f(b)\} \in E(G)$.
An \emph{automorphism} is an isomorphism from a graph $G$ to itself.
We let $\auto(G)$ denote the set of automorphisms of $G$.

\paragraph{Separations and Their Isomorphism Classes}
A \emph{colored graph} if a pair $(G,c)$ where $G$ is a graph and $c:V(G)\rightarrow \mathbb{N}$ is a coloring function.
Two colored graphs $(G,c)$ and $(G',c')$ are \emph{isomorphic} if there exists an isomorphism $f$ from $V(G)$ to $V(G')$ such that $c(v)=c'(f(v))$ for every $v \in V$. 
It is known that testing whether two colored $n$-vertex planar graphs are isomorphic can be done in $\quasipoly(n)$ time: Using standard techniques (see e.g.~\cite[Theorem 1]{thesisSchweitzer}) one can reduce planar colored subgraph isomorphisms to normal planar subgraph isomorphism, which can be solved in planar graphs in polynomial time~\cite{Hopcroft:1974:LTA:800119.803896}. 
By the same standard reduction from colored subgraph isomorphism to subgraph isomorphism, and the canonization algorithm for planar subgraph isomorphism by Datta et al.~\cite{DBLP:conf/coco/DattaLNTW09}, we also have the following:

\begin{theorem}
	There exists a polynomial time algorithm $\can(G,c)$ that given a colored planar graphs $(G,c)$ outputs a string $s$ such that $\can(G,c)=\can(G',c')$ if and only if $(G,c)$ is isomorphic to $(G',c')$.
\end{theorem}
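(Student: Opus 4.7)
The plan is to reduce colored planar graph canonization to uncolored planar graph canonization, and then apply the canonization algorithm of Datta et al.\ as a black box. Given $(G,c)$, I would construct an uncolored planar graph $G^*$ by attaching to each vertex $v \in V(G)$ a small pendant gadget $\Gamma_{c(v)}$ that uniquely encodes the color $c(v)$. A concrete choice of gadget: attach to $v$ a single new edge to an ``anchor'' vertex $a_v$ of degree $2$, and from $a_v$ hang a cycle of length $c(v)+k_0$ for a sufficiently large $k_0$ (chosen larger than any cycle appearing in $G$, which can be computed in polynomial time). Planarity is preserved since each pendant gadget can be drawn inside an incident face of $v$ in the original planar embedding of $G$.

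The next step is to verify rigidity of the reduction: $(G,c) \cong (G',c')$ as colored planar graphs if and only if $G^* \cong G'^*$ as uncolored planar graphs. The forward direction is immediate by extending any color-preserving isomorphism of $G$ to the pendant gadgets. For the converse, one argues that in $G^*$ the original vertices are exactly those that are not part of any pendant gadget structure, which can be characterized purely combinatorially (e.g.\ a vertex lies on a pendant cycle of length $\geq k_0$ reachable through a degree-$2$ anchor), and that the length of the attached cycle uniquely recovers the color. Hence any isomorphism of uncolored expansions restricts to a color-preserving isomorphism of the original colored graphs. This is the standard reduction alluded to in \cite[Theorem~1]{thesisSchweitzer}.

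Once the equivalence is established, I would define $\can(G,c)$ to be the canonical string produced by the Datta et al.\ algorithm applied to $G^*$. Since $G^* \cong G'^*$ iff the canonical strings agree, and since $G^* \cong G'^*$ iff $(G,c) \cong (G',c')$, the defining property of $\can$ follows. The total running time is polynomial: constructing $G^*$ takes $O(|V(G)| \cdot \max_v c(v))$ time (which is polynomial provided we assume the colors are given in unary, or more carefully we first remap colors to the range $\{1,\ldots,|V(G)|\}$ in a canonical fashion before building the gadgets), and then we invoke the polynomial-time uncolored canonization routine.

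The main obstacle is the rigidity argument: one must pick the gadget carefully so that any automorphism of $G^*$ has no choice but to map original vertices to original vertices and to preserve colors. Simpler gadgets such as a path of length $c(v)$ would collide with paths present in $G$ and would not be identifiable as color markers. The cycle-of-length-$c(v)+k_0$ trick, with $k_0$ exceeding the girth bound and with a dedicated anchor vertex of degree $2$, gives a structure that cannot be confused with anything in $G$ itself and whose length uniquely determines the color; this is what makes the reduction go through cleanly.
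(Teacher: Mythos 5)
Your proposal follows essentially the same route as the paper, which simply invokes the standard gadget reduction from colored to uncolored planar graph isomorphism (citing Schweitzer's thesis) together with the Datta et al.\ canonization algorithm; you have merely made the standard reduction explicit. One small nit: you should not claim to compute "the longest cycle of $G$" in polynomial time (that is NP-hard) --- just take $k_0 = |V(G)|+3$, which trivially exceeds the length of any cycle in $G$ and keeps the construction canonical.
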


A \emph{separation} of a graph $G$ is pair of two subsets $X,Y \subseteq V(G)$ with no edges between $X \setminus Y$ and $Y \setminus X$ in $G$.
We say $X \cap Y$ is the \emph{separator} of this separation and that $|X \cap Y|$ is the \emph{order} of this separation.
If $Z\subseteq V(G)$, we say $(X,Y)$ is \emph{below} $Z$ if $X \subseteq Z$.
Two separations $(X,Y)$ and $(X',Y')$ are \emph{isomorphic} if $X\cap Y=X'\cap Y'$ and there exists an automorphism $a \in \auto(G)$ such that $a(v)=v$ for every $v \in X \cap Y$.
We let $\sepsa{l,P,Z}$ (respectively, $\sepse{l,P,Z}$) denote an (arbitrarily fixed) maximal set of pair-wise non-isomorphic separations of $P$ below $Z$ of order at most $l$ (respectively, exactly $l$).
We also shorthand $\sepsa{l,Z}=\sepsa{l,P,Z}$, $\sepse{l,Z}=\sepse{l,P,Z}$ since the input pattern will be fixed throughout this paper, and shorthand $\sepsa{l}=\sepsa{l,P,V(P)}$ and $\sepse{l}=\sepse{l,P,V(P)}$.
Define $\mult((X,Y),Z)$ to be the number of separations $(X',Y')$ of $P$ below $Z$ such that $\can(X',Y')=\can(X,Y)$.

\begin{lemma}
	Given $P, l$ and $Z$, we can enumerate $\sepsa{l,Z}$ and $\sepse{l,Z}$ in $|\sepsa{l,Z}||V(P)|^{O(1)}$ time.
	In the same time we can also compute $\mult((X,Y),Z)$ of each separation $(X,Y) \in \sepsa{l,Z}$.
\end{lemma}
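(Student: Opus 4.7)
The plan is to iterate over all possible separators $S \subseteq Z$ with $|S| \leq l$ (respectively $|S|=l$ for $\sepse{l,Z}$), and for each $S$ to enumerate one representative from every isomorphism class of separations whose separator is exactly $S$, together with its multiplicity. Since separation isomorphism insists on $X \cap Y = X' \cap Y'$, separations with distinct separators are automatically non-isomorphic, so this decomposition over $S$ incurs no over counting; moreover every such $S$ already yields at least the trivial separation $(S, V(P))$, so the number of separators we examine is at most $|\sepsa{l,Z}|$ and we can afford polynomial time per $S$.

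For a fixed $S$, the key observation is that a valid separation $(X,Y)$ below $Z$ with separator $S$ is determined by a subset $A$ of the set $\mathcal{F}$ of connected components $C$ of $P-S$ satisfying $V(C) \subseteq Z \setminus S$ (the ``free'' components): put $X := S \cup \bigcup_{C \in A} V(C)$ and $Y := V(P) \setminus (X \setminus S)$, while components $C$ with $V(C) \not\subseteq Z$ are forced into $Y \setminus S$. Two such choices $A, A'$ yield isomorphic separations iff some automorphism of $P$ fixing $S$ pointwise sends $A$ to $A'$ as a collection of components, or equivalently iff the two colored planar graphs $(P, c_A)$ and $(P, c_{A'})$ are isomorphic, where $c_A$ assigns pairwise distinct colors to the vertices of $S$, a color \emph{in} to $X \setminus S$, and a color \emph{out} to $Y \setminus S$.

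To avoid the $2^{|\mathcal{F}|}$ blow-up of naively enumerating all $A$, the plan is to use an orderly-generation (McKay-style canonical construction) scheme on top of $\can$: fix a total order on components via their canonical labels and extend candidate subsets one component at a time, using $\can$ on $(P,c_A)$ at each step to check whether the current partial extension is the canonical representative of its orbit under $\mathrm{Stab}(S) = \{\sigma \in \mathrm{Aut}(P): \sigma(v)=v \text{ for all } v \in S\}$, pruning non-canonical branches. Standard orderly-generation analysis then produces each canonical representative in $|V(P)|^{O(1)}$ time, for a total of $|\sepsa{l,Z}| \cdot |V(P)|^{O(1)}$. Multiplicities are then obtained by orbit-stabilizer as $\mult((X,Y),Z) = |\mathrm{Stab}(S)|/|\mathrm{Stab}(X,Y)|$, where $\mathrm{Stab}(X,Y) \leq \mathrm{Stab}(S)$ is the subgroup additionally fixing $X$ (equivalently $Y$) setwise; both stabilizer orders are automorphism-group sizes of colored planar graphs and are computable in polynomial time, e.g.\ via the canonization itself or the classical planar automorphism algorithms.

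The main obstacle is precisely this enumeration step: when $\mathrm{Stab}(S)$ acts on $\mathcal{F}$ with large orbits (the extreme case being $P$ an independent set, where $\mathrm{Stab}(S)$ is the full symmetric group on the $k-|S|$ free singletons), the number of orbits of $2^{\mathcal{F}}$ is much smaller than $2^{|\mathcal{F}|}$ and a naive enumerate-and-hash strategy cannot meet the $|\sepsa{l,Z}| \cdot |V(P)|^{O(1)}$ budget. Orderly generation circumvents this by committing only to canonical extensions, so its running time scales with the output size; the sole ingredient required beyond standard orderly-generation machinery is a polynomial-time canonicity test on colored planar graphs, which is exactly what $\can$ supplies.
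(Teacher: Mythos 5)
Your setup coincides with the paper's up to the point where the real work starts: both loop over separators $S\subseteq Z$ of size at most $l$, pass to the connected components of $P-S$, and observe that isomorphism of separations with separator $S$ is isomorphism of colored graphs in which $S$ is individually colored. The divergence is in how orbit representatives are enumerated, and this is where your argument has a genuine gap. The paper's key observation — absent from your proposal — is that if two components $V_i,V_j$ of $P-S$ have isomorphic colored graphs $(P[V_i\cup S],c_i)\cong(P[V_j\cup S],c_j)$, then the swap extends by the identity to an automorphism of $P$ fixing $S$ pointwise (there are no edges between distinct components). Hence $\mathrm{Stab}(S)$ induces the full symmetric group on the components of each canonical type, and the orbits of component-subsets are \emph{exactly} the count vectors $q^X=(q^X_s)_s$ recording how many components of each type $s$ are placed in $X$. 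Enumerating these vectors subject to $q^X\preceq q^Z$ lists one representative per isomorphism class in time proportional to the output, with no group-theoretic machinery at all. Your orderly-generation route is under-specified precisely where it must do work: $\can$ certifies whether two colored graphs are isomorphic, but it does not by itself decide whether a given $A$ is \emph{the} distinguished member of its orbit under $\mathrm{Stab}(S)$, and the hereditary property that orderly generation needs (that every one-element restriction of a canonical subset is again canonical, so that the pruned search tree has size polynomial in the number of outputs) is asserted rather than established. Once you supply the orbit-equals-count-vector fact needed to make any canonicity test rigorous, the orderly generation collapses into the paper's direct enumeration.

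The multiplicity formula is also wrong whenever $Z\neq V(P)$. By definition $\mult((X,Y),Z)$ counts only those separations isomorphic to $(X,Y)$ that are \emph{below} $Z$, whereas $|\mathrm{Stab}(S)|/|\mathrm{Stab}(X,Y)|$ is the size of the full orbit of $X$ under $\mathrm{Stab}(S)$, namely $\prod_s\binom{q_s}{q^X_s}$ with $q_s$ the total number of type-$s$ components. If some type $s$ has components both inside and outside $Z$, an automorphism fixing $S$ can carry a $Z$-contained component of $X$ onto one not contained in $Z$, yielding an isomorphic separation that is not below $Z$; your quotient counts it anyway. The correct value is $\prod_s\binom{q^Z_s}{q^X_s}$, where $q^Z_s$ counts only the type-$s$ components contained in $Z$, which is what the paper computes.
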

\begin{proof}
	Iterate over all $\binom{k}{\leq l}$ possibilities for the separator $S= X \cap Y$.
	Subsequently, for each connected component $V_i$ of $P - S$, create a colored graph $(G_i,c_i)$ on vertex set $V_i \cup S$ with colors $1,\ldots,|S|$ assigned to the vertices in $S$ and a single color to all vertices in $V_i$.
	To enumerate $\sepse{l,Z}$ and compute $\mult((X,Y),Z)$, we can iterate over all possibilities of $S$, label each connected component with their canonical string $\can(G_i,c_i)$. And compute the number $q_s$
	which we define as the number of connected components satisfying $\can(G_i,c_i)=s$, and $q^Z_s$ be the number of connected components with $\can(G_i,c_i)=s$ and $V_i \subseteq Z$.
	
	Subsequently, we enumerate over all non-negative vectors $q^X,q^Y$ such that $q^X_s+q^Y_s=q_s$ and $q^X \preceq q^Z$.
	Note that each such $q^X_s$ uniquely defines gives a non-isomorphic separation $(X,Y)$ in which $X \cap Y = S$ and $P - S$ contains exactly $q^X_s$ connected components $V_i$ such that $\can(V_i \cap S,c_i)=s$. 
	For each such $S$ and $q^X_s$, we add the separation $(X,Y)$ to $\sepse{l,Z}$.
	Moreover, by the above discussion it also follows that $\mult((X,Y),Z)=\prod_{i}\binom{q^Z_s}{q^X_s}$, as for every connected component with canonical string $s$ we have $\binom{q^Z_s}{q^X_s}$ options to choose the $q^X_s$ connected components included in $X$ from the $q^Z_s$ available connected components.
\end{proof}

\begin{lemma}\label{lem:boundsigmacon}
	If $P$ is connected and has bounded degree, $|\sepsa{\sqrt{k}}| = 2^{O(\sqrt{k}\log k)}$.
\end{lemma}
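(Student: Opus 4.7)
\medskip
\noindent\textbf{Proof proposal.}
The plan is to upper bound the number of \emph{all} separations $(X,Y)$ of $P$ of order at most $\sqrt{k}$ directly, which is an upper bound on $|\sepsa{\sqrt{k}}|$ since the latter counts isomorphism classes. A separation $(X,Y)$ is uniquely determined by its separator $S := X \cap Y$ together with, for each connected component $C$ of $P - S$, a choice of whether $C \subseteq X \setminus Y$ or $C \subseteq Y \setminus X$ (the defining property that there are no edges between $X\setminus Y$ and $Y\setminus X$ forces each component to lie on a single side). So I will bound these two ingredients separately.

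First, the number of choices of $S \subseteq V(P)$ with $|S| \leq \sqrt{k}$ is at most
\[
\binom{k}{\leq \sqrt{k}} \;\leq\; (\sqrt{k}+1)\binom{k}{\sqrt{k}} \;=\; 2^{O(\sqrt{k}\log k)}.
\]
Second, I would use connectedness and bounded maximum degree to bound the number of components of $P-S$. Let $\Delta = O(1)$ be the maximum degree of $P$. Since $P$ is connected, every component of $P - S$ must contain a vertex adjacent (in $P$) to some vertex of $S$; therefore the number of components of $P - S$ is at most the number of edges between $S$ and $V(P) \setminus S$, which is at most $\Delta \cdot |S| \leq \Delta \sqrt{k} = O(\sqrt{k})$. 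Hence, for a fixed $S$, the number of ways to partition the components of $P-S$ among the two sides is at most $2^{O(\sqrt{k})}$.

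Multiplying the two estimates yields
\[
|\sepsa{\sqrt{k}}| \;\leq\; 2^{O(\sqrt{k}\log k)} \cdot 2^{O(\sqrt{k})} \;=\; 2^{O(\sqrt{k}\log k)},
\]
as claimed. I expect there to be no real obstacle here: the isomorphism reduction in the definition of $\sepsa{\cdot}$ can only improve the bound, and the two ingredients above are both standard consequences of the hypotheses. The only place one must be slightly careful is the argument that a component of $P-S$ must touch $S$, which uses connectedness of $P$ in an essential way.
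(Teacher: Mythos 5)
Your proof is correct and follows essentially the same route as the paper: bound the $\binom{k}{\leq\sqrt{k}}=2^{O(\sqrt{k}\log k)}$ choices of the separator $S$, observe that connectedness and bounded degree force $P-S$ to have only $O(\sqrt{k})$ components, and multiply by the $2^{O(\sqrt{k})}$ ways to distribute these components over the two sides. You merely spell out the component-counting step (each component must send an edge into $S$) that the paper leaves implicit.
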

\begin{proof}
	There are at most $\binom{k}{\sqrt{k}}$ possibilities for the set $S = X \cap Y$. The graph $P-S$ has at most $O(\sqrt{k})$ connected components, and these can be distributed among $X$ and $Y$ in $2^{O(\sqrt{k})}$ ways.
\end{proof}

The following lemma is a direct consequence of the proof from Section~3.3 of~\cite{DBLP:conf/icalp/BodlaenderNZ16}:
\begin{lemma}\label{lem:boundsigmaBod}
	For any planar $P$,  $|\sepsa{\sqrt{k}}| = 2^{O(k / \log k)}$.
\end{lemma}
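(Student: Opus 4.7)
The plan is to follow the counting scheme from Section~3.3 of Bodlaender et al.~\cite{DBLP:conf/icalp/BodlaenderNZ16}. First I would stratify $|\sepsa{\sqrt{k}}|$ by the separator: every $(X,Y)\in\sepsa{\sqrt{k}}$ determines a separator $S=X\cap Y$ of size at most $\sqrt{k}$, so
\[
|\sepsa{\sqrt{k}}|\ \leq\ \sum_{S\in\binom{V(P)}{\leq\sqrt{k}}} \Sigma(S),
\]
where $\Sigma(S)$ denotes the number of non-isomorphic separations with separator exactly $S$. There are at most $\binom{k}{\leq\sqrt{k}}=2^{O(\sqrt{k}\log k)}$ choices of $S$, and since $\sqrt{k}\log k\leq k/\log k$ for $k$ sufficiently large, this outer factor already fits into the claimed $2^{O(k/\log k)}$ bound.

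Next, for a fixed $S$, I would mirror the enumeration lemma proved just above: assign to each connected component $C$ of $P-S$ the type $\can(C\cup S,c)$ where $c$ gives each vertex of $S$ a distinct color and colors every vertex of $C$ uniformly. By the same argument used to compute $\mult((X,Y),Z)$, two separations with separator $S$ are isomorphic iff they place the same number of type-$t$ components on the $X$-side for every type $t$. Hence $\Sigma(S)=\prod_t(n_t+1)$, where $n_t$ counts the type-$t$ components of $P-S$.

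The remaining task, which is the technical heart of the proof, is to show $\prod_t(n_t+1)\leq 2^{O(k/\log k)}$ for every $S$ with $|S|\leq\sqrt{k}$; this is where planarity of $P$ enters essentially. I would split the types by the size $s_t$ of a representative component. For $s_t\geq\log k$, the constraint $\sum_t n_t s_t\leq k$ limits the total number of such components to $O(k/\log k)$ and each contributes only $O(\log k)$ to $\sum_t\log(n_t+1)$, so the combined contribution of this regime is controllable. For $s_t<\log k$, planarity of $P$ implies there are only $k^{O(1)}$ non-isomorphic colored types of size $s_t$ (a connected planar graph on at most $\log k$ vertices attached to a bounded-size subset of $S$ admits only polynomially many isomorphism classes), and one can then use concavity of $\log$ together with the budget $\sum_t n_t\leq k$ to bound the contribution of this regime by $O(k/\log k)$ as well.

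I expect the main obstacle to be this last step: naive bounds such as $\log(n_t+1)\leq n_t$ only give $\sum_t\log(n_t+1)\leq k$, which corresponds to $2^{O(k)}$ rather than $2^{O(k/\log k)}$, and without planarity the number of ``small'' types could be as large as $2^{\Omega(s_t)}$ and break the argument. The refined accounting of BNZ Section~3.3 is designed precisely to extract the $1/\log k$ saving by simultaneously exploiting (i) the bounded number of non-isomorphic planar types of a given size with a small $S$-interface, and (ii) the budget constraint $\sum_t n_t s_t\leq k$, carefully balancing the ``few types with many copies'' regime against the ``many types with few copies'' regime.
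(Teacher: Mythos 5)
The paper itself gives no proof of this lemma; it is stated as a direct consequence of Section~3.3 of Bodlaender et al., so there is no in-paper argument to compare against line by line. Your skeleton is the right reconstruction of that argument: stratifying by the separator $S$ (the $2^{O(\sqrt{k}\log k)}$ outer factor is harmless), and reducing $\Sigma(S)$ to $\prod_t(n_t+1)$ over component types is correct and consistent with the paper's enumeration lemma. The large-component regime is also fine, though the clean statement is $\sum_{s_t\geq\log k}\log(n_t+1)\leq\sum_{s_t\geq\log k}n_t\leq k/\log k$ via $\log(1+x)\leq x$; your phrasing ``$O(k/\log k)$ components, each contributing $O(\log k)$'' literally yields only $O(k)$.

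The genuine gap is in the small-component regime, exactly where you predicted it, and it is twofold. First, the premise that planarity gives only $k^{O(1)}$ non-isomorphic colored types of each size $s_t<\log k$ is unjustified as stated: a component of size $1$ can be adjacent to up to $|S|=\sqrt{k}$ separator vertices, so the number of \emph{abstract} colored types of a fixed small size is $2^{\tilde{\Omega}(\sqrt{k})}$, not $k^{O(1)}$; what must be bounded is the number of types \emph{simultaneously realized} inside one planar $P$, which needs an Euler-formula/$K_{3,3}$-freeness argument on the bipartite contact structure between the contracted components and $S$, together with the edge budget $\sum_t n_t d_t=O(k)$. Second, and more fundamentally, even granting $k^{O(1)}$ types per size, the concavity step does not deliver the bound: with $T$ available types and $N\leq k$ components, $\sum_t\log(n_t+1)\leq T\log(N/T+1)$ degenerates to $\Theta(N)=\Theta(k)$ once $T\gtrsim N$, the worst case being one component per type, where $\prod_t(n_t+1)=2^{\#\{\text{types present}\}}$ can a priori be $2^{\Theta(k)}$. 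The $1/\log k$ saving must come from showing that the number of pairwise non-isomorphic small colored components \emph{coexisting} in $P$ is $O(k/\log k)$, which requires the single-exponential bound $2^{O(s)}$ on the number of connected planar graphs on $s$ vertices (so that $m$ distinct internal structures force total size $\Omega(m\log m)$) combined with the realized-attachment bound above. Without these two ingredients the argument only yields $2^{O(k)}$.
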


\paragraph{Planar Graphs}
A \emph{cycle} of a graph $V(G)$ is a sequence $v_1,\ldots,v_l \in V(G)$ such that $\{v_l,v_1\} \in E(G)$ and $\{v_i,v_{i+1}\} \in E(G)$ for every $i=1,\ldots,l-1$. The cycle $C$ is \emph{simple} if every vertex appears at most once in it. The \emph{length} of $C$ is $\ell$.
If $G$ is planar and its embedding in $\mathbb{R}^2$ is clear from the context, we let $\int_G(C)$ denote the subgraph of $G$ consisting of $C$ and all edges and vertices enclosed by $C$ (the `interior of $C$'), and let $\ext_G(C)$ denote the subgraph of $G$ consisting of $C$ and all vertices and edges not enclosed by $C$ (the `exterior' of $C$).
The strict interior (exterior) of $C$ is the interior (exterior) except $C$, and are denoted with $\sint_G(C)$ and $\sext_G(C)$.

\begin{definition}
	For a graph $H$ and a vertex $u$ in $H$, by $\reach(u,H)$ we denote the set of vertices of $H$ reachable from $u$ in $H$. If $u$ is not in $H$, $\reach(u,H)$ should be read as the empty set.
	If $U\subseteq V(H)$ this is extended in the natural way, i.e. $\reach(U,H):=\cup_{u \in U}\reach(u,H)$. Suppose $G$ is a connected graph, and $s,t$ are different vertices of $G$. An \emph{$(s,t)$-separator} is a subset $S$ of vertices of $G$ such that $s,t \notin S$ and $t \notin \reach(s,G\setminus S)$. Moreover, $S$ is said to be \emph{a minimal $(s,t)$-separator} if no strict subset of $S$ is an $(s,t)$-separator, and $S$ is \emph{minimal} if it is a minimal $(s,t)$-separator for some $s,t$.
\end{definition}

\paragraph{Outerplanarity} A planar embedding of a graph is $1$-outerplanar if all its vertices are on the outerface, and it is $k$-outerplanar if after the removal of the vertices on the outerface an $(k-1)$-outerplanar embedding remains. A graph is $k$-outerplanar is it admits a $k$-outerplanar embedding.
We will need the following facts on outerplanarity:

\begin{lemma}[\cite{BIEDL2015275}]\label{lem:outerplanar}
	Every $k$-outerplanar graph can be triangulated to a $(k+1)$-outerplanar graph.
\end{lemma}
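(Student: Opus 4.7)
The plan is to keep the given $k$-outerplanar embedding fixed and add chord edges inside each bounded face until every face is a triangle, then argue that this same embedding is $(k+1)$-outerplanar. Write $L_1,\ldots,L_k$ for the layers of the embedding, so that $L_i$ is the vertex set of the outer face of the ``peeled'' graph $G_i := G-(L_1\cup\cdots\cup L_{i-1})$.

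The first step is a structural observation: for every bounded face $F$ of the embedding, the vertices on the boundary of $F$ lie in at most two consecutive layers $L_i,L_{i+1}$. Intuitively, if $F$ had boundary vertices at both layer $i$ and layer $i+2$, then $F$ would have to cross the entire ring $L_{i+1}$, contradicting $F$ being a single face. For $2$-connected $G$ this follows directly from the nested-ring picture of the onion peeling, where each $L_j$ forms a collection of cycles separating $L_{j-1}$ from $L_{j+1}$. For general $G$ one either argues with closed boundary walks and the block-cut tree or first augments $G$ with edges within a single layer to make it $2$-connected without raising its outerplanarity.

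Given this observation, I would triangulate each face $F$ by a fan from a boundary vertex $u_F$ whose layer index equals the maximum over the boundary of $F$. Every added diagonal then joins two vertices inside $L_i\cup L_{i+1}$, so no diagonal can shortcut through more than one layer. Tracking the peeling of the augmented graph $G'$, one checks that the new outer face coincides with the old one (so $L_1(G')=L_1(G)$), and that each subsequent peeling step strips off at most the same set of vertices as in the original embedding, with the innermost ring potentially being split across two new layers. This yields the $(k+1)$ bound.

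The main obstacle is making the ``at most two consecutive layers'' observation rigorous when $G$ is not $2$-connected, since faces are then bounded by closed walks rather than by cycles and can re-enter themselves at cut vertices of unexpected layer. I would handle this by reducing to the $2$-connected case through the block-cut tree: each biconnected block $B$ inherits a $k_B$-outerplanar embedding with $k_B\le k$, so applying the fan-triangulation above block-by-block and then gluing the triangulations at cut vertices (which appear on the outer face of each block they belong to) produces a $(k+1)$-outerplanar triangulation of the whole graph.
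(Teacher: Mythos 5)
This lemma is imported by the paper from Biedl's work and is not proved in the text, so your proposal has to stand on its own. Your structural observation that the boundary of any inner face meets at most two consecutive layers is correct, and in fact it does not require $2$-connectivity: if $j$ is the minimum layer on $\partial F$, then after peeling layers $1,\dots,j$ the face $F$ merges into the outer face (it is incident to a deleted outer-face vertex of the peeled graph), so every surviving boundary vertex lies on the new outer face and hence in layer $j+1$. The block-cut-tree detour you propose for this step is unnecessary.

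The genuine gap is in the claim that the fan triangulation is $(k+1)$-outerplanar, i.e.\ that ``each subsequent peeling step strips off at most the same set of vertices\dots with the innermost ring potentially being split across two new layers.'' Adding edges can only shrink the set of outer-face vertices of each peeled graph, and your construction does shrink it at intermediate layers, not just the innermost one. Concretely, suppose a face $F$ has boundary vertices in layers $i$ and $i+1$, with three layer-$(i+1)$ vertices $u,a,w$ appearing on $\partial F$ and a further layer-$(i+1)$ vertex $z$ hanging off $a$ into $F$ (e.g.\ $G_{i+1}$ contains a cycle through $u,a,w$ with a pendant $z$ at $a$ pointing into $F$). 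Fanning $F$ from $u$ adds the chord $uw$ inside $F$; after peeling layers $1,\dots,i$, this chord together with the path $u\hbox{-}a\hbox{-}w$ encloses $z$ (and possibly $a$), so layer $i+1$ of $G$ splits into two layers of $G'$ and every deeper layer is pushed down by one. Nesting this configuration at several depths pushes the outerplanarity of the triangulation well beyond $k+1$. In this example a different apex ($a$) avoids the bad chord, but your rule ``any vertex of maximum layer'' does not select it, and you give no argument that a good apex always exists or that the damage cannot cascade; this is exactly the difficulty that makes the cited result a theorem rather than an observation. A secondary issue: the paper's notion of ``triangulated'' requires \emph{all} faces, including the outer face, to be triangles, which your construction never addresses; triangulating the outer face collapses $L_1$ to three vertices, so the claim $L_1(G')=L_1(G)$ cannot be the starting point of the analysis.
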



\begin{lemma}[\cite{DBLP:journals/jacm/Baker94}]\label{lem:outerplanarityreduction}
	There is an algorithm that given a planar graph $G$ and integer $k$, outputs subsets $A_1,\ldots,A_{k+1} \subseteq V(G)$ such that $G[A_i]$ is $k$-outerplanar for every $i$ and for every $P \subseteq \binom{V(G)}{k}$ there exists an $i$ such that $P \subseteq A_i$.
\end{lemma}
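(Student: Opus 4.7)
The plan is to prove this via the classical Baker layering technique. First I would fix an arbitrary vertex $v_0 \in V(G)$ (for concreteness, one lying on the outer face of some planar embedding of $G$) and compute a breadth-first search from $v_0$, assigning each vertex $v$ the level $\ell(v) := \dist(v_0,v)$. Denote $L_j := \{v \in V(G) : \ell(v)=j\}$. Then, for every $i \in \{0,1,\ldots,k\}$, define
\[
A_i := \{v \in V(G) : \ell(v) \not\equiv i \pmod{k+1}\}.
\]
This gives the $k+1$ sets whose existence the lemma asserts, and both parts are verified separately.

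For the covering property, fix any $P \in \binom{V(G)}{k}$. The residue classes modulo $k+1$ partition $V(G)$ into $k+1$ groups, but $|P|=k$, so by pigeonhole there exists at least one residue class $i \in \{0,\ldots,k\}$ that contains no vertex of $P$; for this $i$ we have $P \cap L_j = \emptyset$ for all $j \equiv i \pmod{k+1}$, hence $P \subseteq A_i$, as required.

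For the outerplanarity property, I would argue that $G[A_i]$ is a disjoint union of \emph{slabs} $S_m := G[L_{a_m} \cup L_{a_m+1} \cup \cdots \cup L_{b_m}]$ formed by maximal runs of consecutive BFS levels whose indices are all not congruent to $i$ modulo $k+1$; by construction every such run has length at most $k$. These slabs are vertex-disjoint, and because edges of a BFS layering only connect vertices at the same level or at consecutive levels, removing the deleted layers $L_j$ with $j \equiv i \pmod{k+1}$ destroys all edges between different slabs, so $G[A_i]$ is indeed the disjoint union of the $S_m$. Since the disjoint union of $k$-outerplanar graphs is $k$-outerplanar, it suffices to show each slab is $k$-outerplanar.

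The remaining step, which I view as the main technical obstacle, is to exhibit a $k$-outerplanar embedding of a slab consisting of at most $k$ consecutive BFS layers. I would inherit the embedding of $G$ restricted to the slab and argue inductively: in a BFS layering of a connected planar graph with root on the outer face, the vertices of the outermost remaining layer $L_{a_m}$ can be routed to lie on the outer face of the slab's induced embedding (as they form the ``boundary'' towards the deleted layer $L_{a_m-1}$, which previously surrounded them). Peeling off these vertices of $L_{a_m}$ produces a subgraph whose outer face now contains $L_{a_m+1}$, and repeating this argument $b_m - a_m + 1 \leq k$ times exhausts the slab. This yields a $k$-outerplanar embedding of each $S_m$ and hence of $G[A_i]$, completing the proof. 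The construction is algorithmic and runs in linear time in $|V(G)|$, since BFS and the residue computation are both linear.
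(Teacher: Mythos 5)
Your proposal is correct and is exactly the standard Baker layering argument that the cited reference uses and that the paper itself repeats in the proof of Lemma~\ref{lem:redkout}: BFS levels from $v_0$, delete one residue class modulo $k+1$, pigeonhole for the covering property, and the fact that at most $k$ consecutive BFS levels induce a $k$-outerplanar graph. The only loose ends are minor and standard: if $G$ is disconnected you should root a BFS in each component (the pigeonhole and the disjoint-union argument still go through), and the ``peeling'' step is more precisely justified by noting that each component of a slab lies in a single face of the connected graph $G[L_0\cup\dots\cup L_{a_m-1}]$, so all its level-$a_m$ vertices lie on the boundary of that one face, which can be chosen as the outer face.
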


\begin{lemma}[\cite{DBLP:journals/tcs/Bodlaender98}]\label{lem:twkout}
	Given a $k$-outerplanar graph $G$, we can construct a treedecomposition of $G$ of width $O(k)$ in polynomial time.
\end{lemma}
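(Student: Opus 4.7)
The plan is to prove the sharper quantitative statement that every $k$-outerplanar graph admits a tree decomposition of width at most $3k-1$, by induction on $k$, while tracking that the construction runs in polynomial time.

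As a preprocessing step I would invoke Lemma~\ref{lem:outerplanar} to triangulate $G$ into a $(k+1)$-outerplanar supergraph $G^*$; since adding edges can only increase treewidth, a tree decomposition of $G^*$ is a tree decomposition of $G$, and the advantage is that after triangulation the boundary between the $i$-th and $(i+1)$-th onion-layers consists of disjoint simple cycles, which is the key structural property for combining decompositions across layers. For the base case $k=1$ the graph is outerplanar, and it is classical that such graphs have treewidth at most $2$ (for example, one can iteratively locate a vertex of degree at most $2$ on the outer face, emit a leaf bag containing it and its one or two neighbors, delete it, and repeat, giving a polynomial-time construction).

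For the inductive step let $L$ be the set of vertices on the outer face of the embedding of $G^*$. Then $G^* - L$ is $(k-1)$-outerplanar, and by induction I obtain in polynomial time a tree decomposition $\treedecomp'$ of $G^* - L$ of width at most $3(k-1)-1$. I would build $\treedecomp$ from $\treedecomp'$ by enlarging each bag $B$ of $\treedecomp'$ with a subset $L_B \subseteq L$ of size at most three, chosen so that each edge of $G^*$ incident to $L$ is covered by some bag and, for every $v \in L$, the set of bags containing $v$ induces a connected subtree. Concretely, walking along the (triangulated) outer cycle, each $v \in L$ has a contiguous arc of neighbours on the second layer; I assign $v$ to the bags of $\treedecomp'$ that meet this arc, closed up along the unique tree paths in $\treedecomp'$ to guarantee connectivity.

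The main obstacle is to show that no bag receives more than three vertices from $L$, and this is the heart of Bodlaender's argument. One has to exploit that after triangulation the second-layer boundary is a disjoint union of cycles respected by a carefully chosen linearisation of $\treedecomp'$, so the arcs coming from distinct $L$-vertices form intervals with constant point-overlap; a cyclic-to-linear reduction (splitting each boundary cycle at one point and duplicating the split vertex in bags) costs only an additive constant per layer, yielding the clean $3k-1$ bound. Once this bound on $|L_B|$ is in hand, verifying the three tree-decomposition axioms is routine: $L$-vertices are covered by construction, intra-$L$ edges lie on the outer cycle and are covered by adjacent assignments, cross-layer edges are covered because each $v \in L$ is placed in some bag that already contains its second-layer neighbour, and the subtree-connectivity condition is enforced by closing up along tree paths. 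The inductive construction is clearly polynomial time.
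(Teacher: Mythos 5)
The paper does not actually prove this statement; it imports it verbatim from Bodlaender's survey, so there is no in-paper argument to compare against, and your proposal has to stand on its own. Judged that way, it has a genuine gap exactly where you yourself locate ``the heart of Bodlaender's argument.'' The claim that each bag of the inner decomposition $\mathbb{T}'$ can be enlarged by at most three outer-layer vertices is asserted, not proved, and it does not follow from the plain induction hypothesis you are using. A vertex $v\in L$ does see its second-layer neighbours consecutively in the embedding, but an \emph{arbitrary} width-$(3(k-1)-1)$ tree decomposition of $G^*-L$ need not place those neighbours in a contiguous, path-like portion of the tree; after you ``close up along tree paths'' to restore the connectivity condition for each $v\in L$, the subtrees assigned to distinct outer vertices can overlap heavily, and a single bag can be forced to absorb far more than three of them. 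To make your scheme work you would need a strengthened induction hypothesis --- that the decomposition of the inner graph can be chosen so that the bags meeting its outermost layer are arranged consistently with the boundary cycles --- and establishing that is essentially the entire content of the theorem, which your sketch defers rather than supplies.

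For reference, the proof in the cited source does not proceed by augmenting bags layer by layer. Bodlaender bounds treewidth via a spanning forest $T$ of $G$ with small \emph{vertex} and \emph{edge remember numbers} (the maximum number of fundamental cycles of non-tree edges passing through a given vertex, respectively edge, of $T$), proves that $\tw(G)\le\max\{\mathrm{vr}(G,T),\,\mathrm{er}(G,T)+1\}$, and then constructs, peeling one layer at a time, a spanning forest of a $k$-outerplanar graph whose remember numbers are $O(k)$. Either route would yield the $O(k)$ bound the paper needs (note also that your triangulation step makes $G^*$ only $(k+1)$-outerplanar, so your induction would give $3k+2$ rather than $3k-1$ --- harmless here, but inconsistent with the sharp constant you announce). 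As written, however, your proposal leaves the decisive combinatorial step unestablished.
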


\paragraph{Inclusion Exclusion}
If $\cA= \{A_1,\ldots,A_m \} \subseteq U$ is set family over a universe $U$, then 
\begin{equation}\label{eq:ie}
|\bigcup_{i=1}^{m} A_i|= \sum_{\emptyset \neq  R \subseteq [m]} (-1)^{|R|-1} |\bigcap_{i \in R}A_i|.
\end{equation}
Note that $R$ resembles a set of `required sets'.

\paragraph{Balanced Separators}

A $\beta$-proper weight assignment $w$ is an assignment of weight to vertices summing to $1$ with all weights being at most $\beta$.
An $\beta$-balanced cycle separator for $w$ in $G$ is a cycle $C$ such that the weight of all vertices in the strict interior of $C$ is at most $\beta$ and the weight of all vertices in the strict exterior of $C$ is at most $\beta$.
If $X$ is a set of vertices, we say $S$ is balanced for $X$ if it is balanced for the weight function that assigns $1/|X|$ to all vertices of $X$ and weight $0$ to all other vertices.
We use the following lemma:

\begin{lemma}[Folklore (see e.g~Lemma 5.3.2. in~\cite{planarity})]\label{lem:balcyc}
	There is a linear-time algorithm that, given a triangulated graph, spanning tree $T$ of $G$, and a $\tfrac{1}{4}$-proper assignment to vertices, returns a nontree edge $\hat{e}$ such that the fundamental cycle of $\hat{e}$ with respect to $T$ is a $\tfrac{3}{4}$-balanced cycle separator for $w$ in $G$.
\end{lemma}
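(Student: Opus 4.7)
The argument proceeds via planar duality applied to the triangulation $G$. A standard fact is that since $T$ is a spanning tree, the set of non-tree edges $E(G)\setminus E(T)$ forms a spanning tree $T^*$ of the dual graph $G^*$ (by Euler's formula, $|E(G)|-|V(G)|+1=|F(G)|-1$, so the non-tree edges are precisely the right number and dually acyclic). Moreover, since every face of $G$ is a triangle whose three bordering edges include at most two tree edges (otherwise $T$ would contain a cycle), every face has at least one incident non-tree edge and at most three, so $T^*$ has maximum degree at most three. Crucially, for each non-tree edge $\hat{e}$, removing the corresponding edge from $T^*$ partitions the faces of $G$ into exactly two sets: the faces enclosed by $C_{\hat{e}}$ and those outside. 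Computing $T^*$ from $T$ and the embedding takes linear time.

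To control $w(\sint_G(C_{\hat{e}}))$ and $w(\sext_G(C_{\hat{e}}))$, I would assign each face a weight
\[
\omega(f) \;:=\; \sum_{v \in f} \frac{w(v)}{\deg(v)},
\]
so that $\sum_f \omega(f) = \sum_{v \in V(G)} w(v) = 1$. This face weighting has the key property that for any non-tree edge $\hat{e}$, the total $\omega$-weight of faces on one side of $C_{\hat{e}}$ in $T^*$ is an \emph{upper bound} on the $w$-weight of the strict interior (resp.\ strict exterior) on that side: a vertex $v$ in the strict interior has all $\deg(v)$ of its incident faces in the interior and so contributes its full $w(v)$, while a vertex $v$ on $C_{\hat{e}}$ splits its weight across the two sides. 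Therefore, to produce a $\tfrac{3}{4}$-balanced cycle separator it suffices to find an edge of $T^*$ whose removal partitions $T^*$ into two subtrees of $\omega$-weight at most $\tfrac{3}{4}$ each.

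This reduces the problem to an edge-centroid task on a tree of maximum degree three with $\omega(f)\le 3\cdot \tfrac{1}{4}=\tfrac{3}{4}$ (using the $\tfrac{1}{4}$-properness of $w$). I would root $T^*$ at an arbitrary face, compute all subtree sums $s(f)$ with a single DFS, and then starting from the root descend along the child maximizing $s$ until encountering the first face $f^{*}$ with $s(f^{*})\le \tfrac{1}{2}$. If $s(f^{*})\ge\tfrac{1}{4}$ the edge just above $f^{*}$ already splits the weights into $s(f^{*})\in[\tfrac{1}{4},\tfrac{1}{2}]$ and $1-s(f^{*})\in[\tfrac{1}{2},\tfrac{3}{4}]$ and we are done. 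Otherwise, with $s(f^{*})<\tfrac{1}{4}$, I would examine the parent $f^{**}$: since $f^{**}$ has at most two children in the rooted tree (max degree three) and $f^*$ is its heaviest child, all its children-subtrees have weight at most $s(f^*)<\tfrac{1}{4}$, which together with $\omega(f^{**})\le \tfrac{3}{4}$ forces $s(f^{**})\le \tfrac{3}{4}$, so the edge above $f^{**}$ works.

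The main obstacle I expect is exactly this edge-centroid case analysis: the $\tfrac{3}{4}$ bound is tight, and the naive ``stop when $s$ first drops below $\tfrac{1}{2}$'' does not by itself produce an edge in $[\tfrac{1}{4},\tfrac{3}{4}]$ in all cases — one genuinely needs both the max-degree-$3$ property of $T^{*}$ (from triangulation plus acyclicity of $T$) and the $\tfrac{1}{4}$-proper cap on face weights. The overall running time is linear: computing $T^{*}$, evaluating the face weights $\omega(f)$, performing the rooted DFS, and identifying the output edge $\hat{e}$ are all linear in $|V(G)|+|E(G)|$.
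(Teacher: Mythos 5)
The paper offers no proof of this lemma (it is cited as folklore from Klein--Mozes), and your route --- interdigitating spanning trees, transferring vertex weight to faces, and an edge-centroid search in the dual tree --- is the standard one. The setup is sound: the non-tree edges do form a spanning tree $T^*$ of the dual with maximum degree at most $3$, removing the dual of $\hat e$ does split the faces into those enclosed by the fundamental cycle and the rest, and your face weights $\omega(f)=\sum_{v\in f} w(v)/\deg(v)$ do dominate the $w$-weight of the strict interior and strict exterior.

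However, the final quantitative step fails as written. You bound $\omega(f)\le 3\cdot\tfrac14=\tfrac34$ and then need that a max-degree-$3$ tree with node weights at most $\tfrac34$ admits an edge splitting it into two parts of weight at most $\tfrac34$ each; this is false in general (a path with weights $\tfrac18,\tfrac34,\tfrac18$ has no such edge), and concretely your deduction ``all child subtrees weigh $<\tfrac14$ and $\omega(f^{**})\le\tfrac34$, hence $s(f^{**})\le\tfrac34$'' only yields $s(f^{**})\le \tfrac34+2\cdot\tfrac14=\tfrac54$. Your descent can also fail to reach a node with $s\le\tfrac12$ if a leaf face may carry weight above $\tfrac12$. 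The repair is immediate from machinery you already introduced but did not use: in a triangulation with at least four vertices every vertex has degree at least $3$, so $\omega(f)\le 3\cdot\tfrac{1/4}{3}=\tfrac14$. Then every leaf satisfies $s\le\tfrac14$ (so the descent terminates), and in the remaining case $s(f^{**})<\tfrac14+2\cdot\tfrac14=\tfrac34$ while $s(f^{**})>\tfrac12$, so the edge above $f^{**}$ gives the required split; if $f^{**}$ is the root this case cannot occur, since $1=s(f^{**})<\tfrac14+3\cdot\tfrac14$ is a contradiction. With that correction the argument is complete, and all steps are linear time as you describe.
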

In particular, the lemma implies we can find such separators of size $d$ in polynomial time whenever the diameter of $G$ is at most $d$ or when the graph is $d/2$-outerplanar and triangulated.

\paragraph{A Menger-like Theorem for Nearly Disjoint Paths}

\begin{definition}
	A sequence $\sigma=(S_1,\ldots,S_\ell)$ of $(s,t)$-separators is called an $(s,t)$-separator chain if for each $1 \leq i <j \leq \ell$, the following holds:
	\[
	S_i\setminus S_{j} \subseteq \reach(s,G - S_{j} ) \qquad \text{and} \qquad S_{j}\setminus S_i \subseteq \reach(t,G - S_i).
	\]
	If $\sigma$ is clear from the context, we denote $\priv(S_i):= S_i \setminus \bigcup_{i' \neq i} S_{i'}$ for the private vertices of $S_i$, and $\pub(S_i) := S_i \cap \bigcup_{i' \neq i} S_{i'}$ for the public vertices of $S_i$. If $\pub(S_i)=\emptyset$ for all $i$ we call $\sigma$ \emph{disjoint}.
\end{definition}
Somewhat counter-intuitively, for $\sigma$ to be a separator chain some connectivity might be required that is not present in the graph, but assuming these connections exist will be notationally convenient as it gives some sense of linear order in the chain. We frequently will be interested in separator chains in graph not having the required connectivity and fix this issue by working with appropriate super graphs, which is allowed as this only filters out some separators.

A crucial tool in our approach the following useful lemma from~\cite{pattern}, already designed specifically to find patterns in planar graphs in sub-exponential parameterized time.
\begin{lemma}[\cite{pattern}]\label{lem:duality}
	There is a polynomial time algorithm that, given a connected graph $G$, a pair $s,t \in V(G)$ of distinct vertices, and integers $p,q \in \mathbb{N}$, outputs one of the following structures in $G$:
	\begin{enumerate}[(a)]
		\item A chain $(S_1,\ldots,S_p)$ of $(s,t)$-separators with $|S_j| \leq 2q$ for each $j \in [p]$,
		\item A sequence $(P_1,\ldots,P_q)$ of $(s,t)$-paths with $|\pub(S_i) \setminus \{s,t\}| \leq 4p$ for each $i \in [q]$.
	\end{enumerate}
\end{lemma}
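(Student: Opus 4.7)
The plan is to combine a Menger-type max-flow computation with a peeling argument that either extracts $p$ well-ordered small separators or delivers $q$ paths whose pairwise overlap is small.

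First I would compute, using a standard Menger/max-flow routine, a maximum family $\cP^*$ of internally vertex-disjoint $(s,t)$-paths in $G$. If $|\cP^*|\geq q$, I would output any $q$ of these as alternative~(b): being fully vertex-disjoint outside $\{s,t\}$, each satisfies $|\pub(P_i)\setminus\{s,t\}|=0\leq 4p$ trivially.

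Otherwise $|\cP^*|<q$, so by Menger's theorem there is an $(s,t)$-separator of size $<q$. I would then iteratively peel off separators. Set $G_0:=G$, and for $j=1,\dots,p$ compute a minimum $(s,t)$-separator $S_j$ in $G_{j-1}$; form $G_j$ from $G_{j-1}$ by contracting $\reach(s,G_{j-1}-S_j)\cup S_j$ into the vertex $s$. As long as every $|S_j|\leq 2q$, I return $(S_1,\dots,S_p)$ as alternative~(a): the chain condition $S_i\setminus S_j\subseteq\reach(s,G-S_j)$ for $i<j$ holds because, by the time $S_j$ is chosen, every vertex of an earlier separator that has not been reused sits inside the already contracted $s$-region and is hence reachable from $s$ in $G-S_j$; symmetrically $S_j\setminus S_i\subseteq\reach(t,G-S_i)$ follows from $S_j$ being strictly on the $t$-side of $S_i$.

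The remaining case is that at some first index $j^*\leq p$ the minimum $(s,t)$-cut in $G_{j^*-1}$ exceeds $2q$. By Menger in $G_{j^*-1}$ this yields $2q+1$ internally vertex-disjoint $(s,t)$-paths $Q_1,\dots,Q_{2q+1}$. Each $Q_i$ lifts to an $(s,t)$-path in $G$ whose only possible pairwise overlaps lie inside the previously contracted regions. The key bookkeeping is to show that, by lifting all paths through a single common Menger-flow inside each contracted region, the total public-vertex count obeys $\sum_i |\pub(P_i)\setminus\{s,t\}|\leq 2\sum_{j<j^*}|S_j|\leq 4pq$. Ordering the $2q+1$ lifted paths by public size $x_1\leq\cdots\leq x_{2q+1}$ and summing the top $q+1$ gives $(q+1)x_{q+1}\leq 4pq$, so $x_q\leq 4pq/(q+1)\leq 4p$; outputting the $q$ paths with smallest public sets thus gives alternative~(b).

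The main obstacle I expect is precisely this lifting step. A path in $G_{j^*-1}$ begins at a super-vertex that identifies $s$ with everything previously contracted, so turning it into a genuine $(s,t)$-path in $G$ requires, each time the path crosses one of the earlier separator vertices $v\in S_j$, reconstructing an internal route from the real $s$ to $v$ inside the previously contracted region. For the $4p$-bound to hold, one must select these internal routes for all $2q+1$ lifted paths simultaneously from a single fixed family of vertex-disjoint $s$-to-$S_j$ paths in each contracted region, which is guaranteed to exist because Menger applied within that region provides $|S_j|$ such disjoint paths. This global routing is the technical heart of the argument; the rest is Menger's theorem plus the averaging step.
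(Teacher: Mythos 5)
The paper does not prove this lemma at all --- it is imported verbatim from Fomin et al.~\cite{pattern} --- so the only question is whether your reconstruction is sound, and it is not. Your steps (1) and (2) are fine: if the maximum flow is at least $q$ you are done, and the peeling of minimum separators with contraction of the $s$-side does produce a pairwise disjoint $(s,t)$-separator chain (minimality gives each separator neighbours in both components, which is exactly what the two reachability conditions need). The gap is in step (3). First, the accounting identity $\sum_i |\pub(P_i)\setminus\{s,t\}|\leq 2\sum_{j<j^*}|S_j|$ is false: a single cut vertex $v$ (so $\sum_j|S_j|=1$) lies on all $2q+1$ lifted paths and already contributes $2q+1$ to the left-hand side, and the shared portions are not confined to separator vertices in the first place --- two paths routed through the same connector $\rho_v$ share \emph{all} of $\rho_v$, whose length is unbounded in $p$ and $q$. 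Since every path can individually have a huge public set, no averaging over the $2q+1$ candidates can rescue the per-path bound of $4p$.

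More fundamentally, the algorithm itself is wrong, not just its analysis. Take $G$ to be $s$ joined to both ends of the first rung of a ladder $x_1y_1,\dots,x_Ny_N$, with $x_N,y_N$ joined to a single vertex $v$ that fans out to $2q+1$ internally disjoint paths to $t$, where $N\gg 4p$. The maximum flow is $1<q$ and the unique minimum separator is $\{v\}$; your peeling contracts the entire ladder after one step, the min cut of $G_1$ jumps to $2q+1>2q$, and you enter case (b) --- but there are only two internally disjoint routes through the ladder, so for $q\geq 3$ every choice of $q$ lifted paths contains one whose public set has size at least $N>4p$. Meanwhile the lemma is true for this graph via case (a): the rungs $\{x_1,y_1\},\dots,\{x_p,y_p\}$ form a chain of $p$ disjoint separators of size $2\leq 2q$. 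The root cause is that peeling only \emph{minimum} separators lets the cut value leap from $1$ past $2q$ while the graph still contains many disjoint separators of size between $2$ and $2q$ that your procedure never sees; the correct argument must certify that \emph{no} further separator of size at most $2q$ exists beyond the ones extracted, and deriving the $4p$ overlap bound from that stronger failure condition is precisely the nontrivial content of the lemma in~\cite{pattern}.
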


\paragraph{Crossing Paths}
We use some standard definitions and tools for crossing paths in planar graphs:

\newcommand{\sort}{\ensuremath{\mathtt{sort}}}

\begin{definition}[\cite{10.1007/11830924_10}]\label{def:cross}A path $P$ \emph{crosses} another path $P_0$ if there exists a bounded connected region $X$ in $\mathbb{R}^2$ with the following properties: $P$ and $P_0$ each cross the boundary of $X$ exactly twice and these crossings are interleaved. A set of paths is said to be \emph{non-crossing} if every pair of paths is distinct and non-crossing.
\end{definition}

It is easy to modify a set of paths $\mathcal{P}$ with common endpoints, into another set of paths $\mathcal{P}'$ in which every edge occurs equally often as in $\mathcal{P}$ such that $\mathcal{P}'$ is non-crossing in polynomial time. See also~\cite{10.1007/11830924_10} for more details. A set of non-crossing paths with common endpoints in an embedded graph can be sorted in a natural way: sort all edges in clockwise order, and order the paths lexicographically according to this order. We denote the algorithm that does this for us $\sort(\mathcal{P})$.

\begin{definition}[Alignment of cycle]
	Let $C$ be a cycle and $V(C)=\{v_1,\ldots,v_l\}$ be an arbitrary but fixed consecutive ordering of its vertices. Suppose $h<i<j<k<l$, and let 
	\[
		C^{\pointleft}=\{v_1,\ldots,v_i\}, \quad C^{\pointdown}=\{v_{i+1},\ldots,v_{j}\}, \quad C^{\pointright}=\{v_{j+1},\ldots,v_{k}\}, \quad C^{\pointup}=\{v_{k+1},\ldots,v_l\}.
	\]
	Then $C^{\pointleft}, C^{\pointdown}, C^{\pointright}, C^{\pointup}$ form an \emph{alignment} of $C$.
\end{definition}


We will use the sorting step to obtain the following consequence of Lemma~\ref{lem:duality}.
\begin{lemma}\label{lem:chaineitherway}
	Let $G$ be an inner-triangular graph with outer boundary $C$, and let $C^{\pointleft},C^{\pointdown},C^{\pointright},C^{\pointup}$ be an alignment of $C$. 
	Let $\hat{G}$ be obtained by adding vertices $v^{d}$ adjacent to all vertices $C^d$ for every direction $d \in \{\pointleft,\pointdown,\pointright,\pointup\}$.
	There is a polynomial time algorithm $\mathtt{menger+}$ that, given $G,C, C^{\pointleft},C^{\pointdown},C^{\pointright},C^{\pointup}$ and integers $p,q$ either finds a chain of
	\begin{enumerate}
		\item disjoint $(v^\pointdown,v^\pointup)$-separators $(S_1,\ldots,S_p)$ in $\hat{G}$ with $|S_j| \leq 2q$ for each $j \in [p]$, or 
		\item $(v^\pointleft,v^\pointright)$-separators $(S_1,\ldots,S_q)$ in $\hat{G}$ with $|\pub(S_i) \setminus \{v^\pointleft,v^\pointright\}| \leq 4p$ for each $i \in [q]$.
	\end{enumerate}
\end{lemma}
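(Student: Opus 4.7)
The plan is to invoke Lemma~\ref{lem:duality} on $\hat{G}$ with source $v^\pointdown$, sink $v^\pointup$, and the same parameters $p,q$, and then translate each outcome into one of the two outputs. Before starting, I fix an embedding of $\hat{G}$ in which the four apex vertices $v^\pointleft, v^\pointdown, v^\pointright, v^\pointup$ lie on a common face in this cyclic order; this is possible because the arcs $C^\pointleft, C^\pointdown, C^\pointright, C^\pointup$ are consecutive along the outer boundary of $G$, so each $v^d$ can be placed in the outer face of $G$ adjacent to all of $C^d$ while preserving planarity.

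If Lemma~\ref{lem:duality} returns case (b), a sequence $(P_1,\dots,P_q)$ of $(v^\pointdown,v^\pointup)$-paths with $|\pub(P_i)\setminus\{v^\pointdown,v^\pointup\}|\le 4p$, I first run $\sort$ to replace them with a non-crossing family that uses each edge the same number of times and therefore preserves the public-vertex bound. Because the four apex vertices lie on a common face in the cyclic order $\pointleft,\pointdown,\pointright,\pointup$, each non-crossing $v^\pointdown\!\to\! v^\pointup$ path, drawn as a curve in the plane, cuts the disk bounded by that face into two regions with $v^\pointleft$ and $v^\pointright$ on opposite sides. Hence $S_i := V(P_i)\setminus\{v^\pointdown,v^\pointup\}$ is a $(v^\pointleft,v^\pointright)$-separator in $\hat{G}$, the bound $|\pub(S_i)\setminus\{v^\pointleft,v^\pointright\}|\le 4p$ carries over verbatim, and the left-to-right order from $\sort$ witnesses the chain conditions, producing output $2$.

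If Lemma~\ref{lem:duality} returns case (a), a chain $(S_1,\dots,S_p)$ of $(v^\pointdown,v^\pointup)$-separators with $|S_j|\le 2q$, the only property missing from output $1$ is vertex-disjointness. I plan to obtain it via the planar geometry: since $v^\pointdown$ and $v^\pointup$ share a face, each $(v^\pointdown,v^\pointup)$-separator of $\hat{G}$ splits the graph into a $v^\pointdown$-side and a $v^\pointup$-side, and two such separators can be uncrossed into non-crossing ones. Concretely, sweeping $i=1,\dots,p$ I would replace $S_i$ by the vertex boundary of the $v^\pointdown$-reachable component of $\hat{G}-(S_1'\cup\cdots\cup S_{i-1}')$ intersected with the original $S_i$; this stays a separator, sits inside the old $S_i$ (so keeps the size bound), and is disjoint from all earlier $S_j'$. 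A short check then confirms that the chain reachability conditions survive this refinement, yielding output $1$.

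The main obstacle I expect is case (a): the definition of chain explicitly permits $S_i\cap S_j\ne\emptyset$, so disjointness must be earned by a planar-specific argument that truly uses that the apex vertices are cofacial. If the clean sweep above turns out to be fragile, a robust fallback is to call Lemma~\ref{lem:duality} with the first parameter inflated from $p$ to $p(2q+1)$ and greedily extract $p$ pairwise-disjoint separators from the resulting chain via a pigeonhole argument on the number of chain members in which a single vertex can appear.
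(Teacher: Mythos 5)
Your proposal follows the paper's proof essentially verbatim: apply Lemma~\ref{lem:duality} with $v^{\pointdown},v^{\pointup}$, and in the path case run $\sort$ and observe that each resulting non-crossing $(v^{\pointdown},v^{\pointup})$-path is a $(v^{\pointleft},v^{\pointright})$-separator, with the sorted order witnessing the chain condition. The extra work you invest in case (a) to enforce disjointness does not appear in the paper --- there that case is dismissed as immediate, since the separator chain produced by the Fomin et al.\ result is in fact pairwise disjoint and the paper's restatement of Lemma~\ref{lem:duality} merely omits the word ``disjoint'' --- so your sweep/pigeonhole repair is a legitimate, if strictly speaking unnecessary, insurance policy rather than a divergence.
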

\begin{proof}
	Apply Lemma~\ref{lem:duality} with $v^\pointdown,v^\pointup$ as given.
	If a chain of $(v^\pointdown,v^\pointup)$ separators is found we are done immediately.
	Otherwise, $\sort(S_1,\ldots,S_p)$ gives a non-crossing ordered set of paths from $v^\pointdown,v^\pointup$.
	As each such a path is a $(v^\pointleft,v^\pointright)$-separator this is also a chain of $(v^\pointleft,v^\pointright)$-separators by the ordering.
\end{proof}

\paragraph{Tree decompositions and treewidth}
A \emph{tree decomposition} of a graph~$G$ is a pair~$\treedecomp=(T,\{B_x\}_{x \in V(T)})$ in which $T$ is a tree, and $B_x \subseteq V_x$ are subsets 
such that $\bigcup_{x \in V(T)} B_x = V$ with the following properties:
(i) for any $uv \in E$, there exists an~$x \in V(T)$ such that 
$u,v \in B_x$,
(ii) if $v \in B_x$ and $v \in B_y$, then $v \in B_z$ for all $z$ on 
the (unique) path from $x$ to $y$ in $T$.
%
%
%

The \emph{width}~$tw(\treedecomp)$ of a tree decomposition is the maximum bag size minus one, and the treewidth of a graph $G$ is the minimum treewidth over all nice tree decompositions of $G$. 

\section{Monitors, Subproblems and Helper Reductions}\label{sec:cis}

In this section we set up machinery that will be used in Section~\ref{sec:mainred}. In Subsection~\ref{subsec:monsubred}, we introduce definitions that will facilitate the presentation of the algorithm in subroutines.
In Subsection~\ref{subsec:preprocreasy}, we show how to ensure the input graph is $k$-outerplanar and how to solve subproblems with $\tO(\sqrt{k})$ pattern vertices in $2^{\tO(\sqrt{k})}$ time.
The latter will form a base case for our divide and conquer scheme leading to the proof Theorem~\ref{thm:main}.
In Subsection~\ref{subsec:sepsystemsreduction} we provide more technical lemma's that use systems of separators for reductions, and how to acquire a balanced separator.

\subsection{Monitors, Subproblems and Reductions}\label{subsec:monsubred}
A monitor is an object that `monitors' a particular set of vertices $M$, in the sense that for some range of intersection sizes $M_\low \leq  i \leq M_\upp $ one counts all pattern occurrence with exactly $i$ vertices in $M$:

\begin{definition}[Monitor]
	A \emph{monitor over $U$} is a triple $(M,M_\low,M_\upp) \in 2^U \times \mathbb{N}_{\geq 0} \times \mathbb{N}_{\geq 0}$.
	If $\cM$ is a set of monitors over $U$ and $M \subseteq U$, we denote
	\begin{align*}
		\low_{\cM}(M) &:=\max \{ M_\low : (M,M_\low,M_\upp) \in \cM \},\\
		\upp_{\cM}(M) &:=\min \{ M_\upp : (M,M_\low,M_\upp) \in \cM \}.
	\end{align*}
	 We let $\feas(\cM)$ denote the set of vectors $r \in [k]^{\cM}$ with $r_{(M,M_\low,M_\upp)} \in [M_\low,M_\upp]$ for every monitor $(M,M_\low,M_\upp) \in \cM$.
	 We say a monitor is \emph{small} if $|M_\upp| > 0$ and $|M| \leq k^4$, and it is \emph{large} if $|M_\upp| > 0$ and $|M| > k$.
	 We denote $\smallM(\cM)$ for the set of small monitors in $\cM$,  and $\largeM(\cM)$ for the set of laronitorge monitors.
	 If $f: A \rightarrow U$ and $r \in \feas(\cM)$, we say $f \picks r$ if for every $M  \in \cM$ it holds that $|f(A)\cap M|=r_M$.
\end{definition}

	
Now we define a `subproblem', which corresponds to a recursive call of the divide and conquer scheme we will employ to prove Theorem~\ref{thm:main} in Section~\ref{sec:mainred}.

\begin{definition}[Subproblem]
A \emph{subproblem} is a tuple $\pi=(G,B,\cM)$ where $G$ is a graph, $B \subseteq V(G)$, and $\cM$ is a family of monitors over $V(G)$.
The \emph{answer} to $\pi$ is the vector $a$ indexed by every $r \in \feas(\cM)$, $(X,Y) \in \sepsa{\upp_{B}(\cM),P}$ and function $f: X \cap Y \rightarrow B$ such that
	\[
		\ans{a}{r,(X,Y),f} := |\{ g \in \ind(P[X],G): g \textrm{ extends } f \wedge g \picks r\}|.
	\]
\end{definition}

We index answers also by non-canonical separations; the value in the answer then can be deduced algorithmically fast via finding the value in the answer with corresponding separator in the same equivalence class via basic data-structures.

Note that there are $k^{|\cM|}$ possible values for $r$, $|\sepsa{\upp_{\cM}(B),P}|$ options for $(X,Y)$, and $|B|^{\upp_{B}(\cM)}$ options for $f$.
While generating subproblems, we will therefore ensure that $|\cM| = \tO(\sqrt{k}) + O(\log n/\log k)$, $|B| \leq k^{O(1)}$ and that $\upp_{B}(\cM) \leq \tO(\sqrt{k})$  for any invoked subproblem.

The following lemma allows us to only compute $a[r,(X,Y),f]$ for a maximal set of non-isomorphic separations $(X,Y)$.
\begin{lemma}
	If $(X,Y)$ is isomorphic to $(X',Y')$, then $a[r,(X,Y),f]=a[r,(X',Y'),f]$.
\end{lemma}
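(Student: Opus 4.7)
The strategy is the classical one: since isomorphic separations are defined precisely so that there is an automorphism of $P$ witnessing the equivalence, I will build an explicit bijection between the two sets counted by $a[r,(X,Y),f]$ and $a[r,(X',Y'),f]$ by composing with this automorphism. Concretely, by the definition of isomorphic separations (together with the requirement that the separator is fixed pointwise), there exists $\alpha \in \auto(P)$ with $\alpha(X)=X'$, $\alpha(Y)=Y'$, and $\alpha(v)=v$ for all $v \in X\cap Y = X'\cap Y'$. For every $g \in \ind(P[X],G)$ contributing to $a[r,(X,Y),f]$, define $\Phi(g): X' \to V(G)$ by $\Phi(g)(v) := g(\alpha^{-1}(v))$; this is well-defined because $\alpha^{-1}(X')=X$.

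Next I would verify the three conditions needed for $\Phi(g)$ to be counted by $a[r,(X',Y'),f]$. First, $\Phi(g) \in \ind(P[X'],G)$: injectivity follows from injectivity of $g$ and bijectivity of $\alpha$, and for distinct $u,v \in X'$ the chain of equivalences
\[
\{u,v\}\in E(P[X']) \;\Leftrightarrow\; \{\alpha^{-1}(u),\alpha^{-1}(v)\}\in E(P[X]) \;\Leftrightarrow\; \{g(\alpha^{-1}(u)),g(\alpha^{-1}(v))\}\in E(G)
\]
uses that $\alpha$ is a graph automorphism and that $g \in \ind(P[X],G)$. Second, $\Phi(g)$ extends $f$: for $v \in X' \cap Y' = X \cap Y$, the fact that $\alpha$ fixes $v$ gives $\Phi(g)(v)=g(v)=f(v)$. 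Third, $\Phi(g) \picks r$: for every monitor $M \in \cM$, since $\alpha^{-1}(X')=X$ we obtain $\Phi(g)(X')=g(X)$, so $|\Phi(g)(X')\cap M|=|g(X)\cap M|=r_M$.

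Finally I would observe that $\Phi$ is a bijection: the map $g' \mapsto g'\circ \alpha|_X$ is an inverse, and the same three checks go through symmetrically using that $\alpha^{-1}$ is also an automorphism of $P$ fixing the separator. Therefore $\Phi$ witnesses $a[r,(X,Y),f]=a[r,(X',Y'),f]$. There is no real obstacle here; the only subtlety worth flagging is the compatibility of the pick-vector condition with $\alpha$, which works out cleanly exactly because each monitor is a property of the \emph{image} $g(X)$ in $G$ and $\Phi$ leaves this image unchanged.
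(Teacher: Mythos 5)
Your proof is correct and follows essentially the same route as the paper's: both take the automorphism $\alpha$ witnessing the isomorphism of separations and compose the counted maps with it (the paper writes $g'=f\circ\alpha$, evidently a typo for $g\circ\alpha$, and leaves the verification of the three conditions implicit). Your version is just a more careful write-up of the same bijection, with the composition oriented the right way so that it is defined on $X'$.
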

\begin{proof}
	Suppose there is an $\alpha \in \auto(P)$ such that $\alpha(v)=v$ for every $v \in X \cap Y$, and that $g \in \ind(P[X],G)$ such that $g \textrm{ extends } f$, and $\forall M \in \cM: |g^{-1}(M)|= r(M)$ (i.e. $g$ is counted in $a[r,(X,Y),f]$).
	Then $g'=f\circ\alpha$ contributes to $a[r,(X',Y'),f]$ and since $g'$ and $\alpha$ determine $g$ it follows that $a[r,(X \cap Y),f]=a[r,(X' \cap Y'),f]$.
\end{proof}

A recursive step in our divide and conquer scheme that reduces subproblems to supposedly easier subproblems is formalized as follows:

\begin{definition}[Reduction]
	A \emph{reduction} from a subproblem $\pi=(G,B,\cM)$ to a set of subproblems $\pi_1,\ldots,\pi_l$ is an algorithm that, (1) given $\pi$, outputs $\pi_1,\ldots,\pi_l$, and (2) given the answers of the subproblem $\pi_1,\ldots,\pi_l$ outputs the answer to $\pi$.  
	Denoting $\pi_i = (G_i,B_i,\cM_i)$, the reduction is said to be \emph{strict} if  $\upp_{\cM_i}(V(G_i)) \leq \upp_{\cM}(V(G))$, $|V(G_i)\setminus B_i| \leq |V(G)\setminus B|$ for every $i$, and both algorithms run in time polynomial in the input and output.
\end{definition}

We refer to $V(G) \setminus B$ as the \emph{non-boundary vertices} of the subproblem and to $\upp_{\cM}(V(G))$ as the \emph{mapped pattern vertices}. The number of those vertices will be a primary c

\subsection{Preprocessing and Solving Subproblems with few Pattern Vertices}\label{subsec:preprocreasy}

We now show how to preprocess the input to ensure $G$ is $k$-outerplanar.
As mentioned in the introduction, this is where one of new ideas dubbed `efficient inclusion-exclusion' is required.
It makes use of the following lemma that shows how to quickly evaluate a quantity that will later arise from the use of the inclusion-exclusion formula.
We state it separately here so we can reuse it in a later second application of efficient inclusion-exclusion.


\begin{lemma}\label{lem:techDP}
	Given a set $A$, an integer $h$ and a value $T[x,x'] \in \mathbb{Z}$ for every $x,x' \in A$, the value
	\begin{equation}\label{eq:sumproduct}
		\sum_{x_1,\ldots,x_h \in A} \prod_{i=1}^{h-1} T[x_i,x_{i+1}]	
	\end{equation}
	can be computed in $O(h|A|^2)$ time.
\end{lemma}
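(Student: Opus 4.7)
The plan is to evaluate the sum via a straightforward left-to-right dynamic programming, which amounts to computing a matrix-vector product $h-1$ times rather than expanding the sum of products naively.

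Define an auxiliary table $F_i : A \to \mathbb{Z}$ for $i = 1, \ldots, h$ by
\[
F_1(x) := 1, \qquad F_{i+1}(x') := \sum_{x \in A} F_i(x) \cdot T[x, x'].
\]
A direct induction on $i$ (distributivity and reordering of the finite sum) shows
\[
F_i(x_i) = \sum_{x_1, \ldots, x_{i-1} \in A} \prod_{j=1}^{i-1} T[x_j, x_{j+1}],
\]
so the quantity in \eqref{eq:sumproduct} equals $\sum_{x_h \in A} F_h(x_h)$.

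For the running time, computing $F_{i+1}$ from $F_i$ requires evaluating $|A|$ entries, each a sum of $|A|$ products, i.e.\ $O(|A|^2)$ arithmetic operations. There are $h-1$ such transitions, and the final summation over $x_h$ costs $O(|A|)$, giving total time $O(h|A|^2)$. The main (and essentially only) point of the argument is the distributivity-based refactoring that collapses the $|A|^h$-term sum into $h-1$ matrix-vector products; no obstacle arises.
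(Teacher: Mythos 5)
Your proof is correct and follows essentially the same route as the paper: the paper defines $T_h[x]=\sum_{x_{h-1}\in A}T_{h-1}[x_{h-1}]\,T[x_{h-1},x]$ and sums $T_h[x]$ over $x$, which is exactly your matrix-vector dynamic program with a slightly different base case. The running-time accounting matches as well.
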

\begin{proof}
	For $x \in A$, define $T_h[x] := \left( \sum_{x_1,\ldots,x_{h-1} \in A} \prod_{i=1}^{h-2} T[x_i,x_{i+1}] \right) T[x_{h-1},x]$.	For $h=2$, we see that $T_h[x]=\sum_{x_1}T[x_1,x]$ can be computed in $|A|$ time.
	For $h>3$, note that
	\begin{align*}
	T_h[x] 	&= \left(\sum_{x_1,\ldots,x_{h-1} \in A} \prod_{i=1}^{h-2} T[x_i,x_{i+1}] \right) T[x_{h-1},x] \\
	&= \left(\sum_{x_1,\ldots,x_{h-2} \in A} \prod_{i=1}^{h-2} T[x_i,x_{i+1}] \right)\sum_{x_{h-1} \in A}T[x_{h-2},x_{h-1}] T[x_{h-1},x] \\
	&= \sum_{x_{h-1} \in A}T_{h-1}[x_{h-1}] T[x_{h-1},x].
	\end{align*}
	Thus $T_{h}[x]$ can be computed in time $O(h|A|^2)$. As~\eqref{eq:sumproduct} equals $\sum_{x \in A}T_h[x]$, the lemma follows.
\end{proof}
By summing over all $x_1 \in A$ and applying the above Lemma, we also get the following corollary: 
\begin{corollary}\label{lem:techDPwrap}
	Given a set $A$, an integer $h$ and a value $T[x,x'] \in \mathbb{Z}$ for every $x,x' \in A$, the value
	\begin{equation}\label{eq:sumproductwrap}
	\sum_{x_1,\ldots,x_h \in A} T[x_h,x_1] \prod_{i=1}^{h-1} T[x_i,x_{i+1}]	
	\end{equation}
	can be computed in $O(h|A|^3)$ time.
\end{corollary}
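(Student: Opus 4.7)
The plan is to reduce the cyclic sum to $|A|$ applications of Lemma~\ref{lem:techDP}, one for each choice of the ``anchor'' $x_1$. More concretely, I would rewrite~\eqref{eq:sumproductwrap} as
\[
\sum_{x_1 \in A} \;\sum_{x_2,\ldots,x_h \in A} T[x_h,x_1] \prod_{i=1}^{h-1} T[x_i,x_{i+1}],
\]
treating the inner sum (with $x_1$ fixed) as a quantity to be computed for each $x_1 \in A$ separately. If I can show that the inner sum can be evaluated in $O(h|A|^2)$ time for each fixed $x_1$, then summing over the $|A|$ choices of $x_1$ gives the claimed $O(h|A|^3)$ total.

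For the inner computation I would mimic the dynamic programming used in the proof of Lemma~\ref{lem:techDP}. Fix $x_1 \in A$. For $j = 2,\ldots,h$ and $x \in A$, define
\[
U_j[x] := \sum_{x_2,\ldots,x_{j-1} \in A} \left( \prod_{i=1}^{j-2} T[x_i,x_{i+1}] \right) T[x_{j-1},x],
\]
with the convention that the empty product equals $1$, so that $U_2[x] = T[x_1,x]$. The same recurrence as in the proof of Lemma~\ref{lem:techDP} applies:
\[
U_j[x] \;=\; \sum_{y \in A} U_{j-1}[y] \, T[y,x],
\]
which takes $O(|A|)$ time per entry and hence $O(|A|^2)$ time per level, so all values $U_h[x]$ are computed in $O(h|A|^2)$ time. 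Finally, the inner sum (for this fixed $x_1$) equals $\sum_{x \in A} U_h[x] \, T[x,x_1]$, an additional $O(|A|)$ operations.

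Putting this together, the whole quantity~\eqref{eq:sumproductwrap} is computed in $|A| \cdot O(h|A|^2) = O(h|A|^3)$ time, which is the desired bound. There is essentially no obstacle here: the only care needed is to realize that Lemma~\ref{lem:techDP} is not being applied as a black box on the entire cyclic sum (the factor $T[x_h,x_1]$ couples both endpoints), but rather on the path-shaped sum obtained after fixing $x_1$, after which the closing edge $T[x_h,x_1]$ is just a weighted aggregation over the final layer $U_h[\cdot]$.
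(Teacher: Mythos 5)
Your proof is correct and follows essentially the same route as the paper: the paper's one-line justification is precisely to fix the anchor $x_1$, apply the dynamic program of Lemma~\ref{lem:techDP} to the resulting path-shaped sum, close the cycle with the factor $T[x_h,x_1]$, and sum over the $|A|$ choices of $x_1$ for a total of $O(h|A|^3)$ time. Your write-up just spells out the details that the paper leaves implicit.
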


Now we focus on reducing the outerplanarity
We are interested in the value $a_{\emptyset,\emptyset,(V(P),\emptyset)}$ in the answer of the subproblem $(G,\emptyset,\emptyset,0,k)$. We first show that to compute this answer, we may assume $G$ is $k$-outerplanar:
\begin{lemma}\label{lem:redkout}
	Given any planar graph $G$, one can in polynomial time compute graphs $G_{i,j}$ for $i,j \in [k+1]$ that are $k$-outerplanar graphs with the following property:
	given the answers to the subproblems $(G_{i,j},\emptyset,\{(V(G_{i,j}),0,k)\})$ for every $i,j \in [k+1]$, the quantity $|\ind(P,G)|$ can be computed in time $|\sepsa{0,V(P)}|n^{O(1)}$.
\end{lemma}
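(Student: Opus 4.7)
The plan is to combine Baker's layering with an \emph{efficient inclusion-exclusion} argument driven by Lemma~\ref{lem:techDP}/Corollary~\ref{lem:techDPwrap}. First, I apply Lemma~\ref{lem:outerplanarityreduction} with parameter $k$ to obtain $A_1, \dots, A_{k+1} \subseteq V(G)$ such that each $G[A_i]$ is $k$-outerplanar and every $k$-vertex subset of $V(G)$ is contained in some $A_i$. For each pair $(i, j) \in [k+1]^2$, set $G_{i, j} := G[A_i \cap A_j]$; as an induced subgraph of $G[A_i]$ this is $k$-outerplanar, producing the $(k+1)^2$ graphs in polynomial time. Since $B = \emptyset$ in the subproblem $(G_{i,j}, \emptyset, \{(V(G_{i,j}), 0, k)\})$, the only relevant entries of its answer are indexed by order-zero separations $(X, Y) \in \sepsa{0, V(P)}$, and each such entry equals $|\ind(P[X], G_{i,j})|$, i.e.\ the number of induced copies of $P[X]$ in $G_{i,j}$ for every disjoint union $X$ of connected components of $P$.

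By Baker's property every $f \in \ind(P, G)$ has image contained in some $A_i$, so
\begin{equation*}
|\ind(P, G)| \;=\; \sum_{\emptyset \neq T \subseteq [k+1]} (-1)^{|T|-1}\, |\ind(P, G_T)|, \qquad G_T := G\bigl[\textstyle\bigcap_{i\in T}A_i\bigr].
\end{equation*}
For every non-empty $T$, $G_T$ is a disjoint union of \emph{slabs} of at most $k$ consecutive BFS-layers, and each slab is also a slab of $G_{r_1, r_2}$ where $r_1, r_2 \in T$ are the residues of the two cut-layers bounding it. Since every connected component of $P$ must embed into a single slab,
\begin{equation*}
|\ind(P, G_T)| \;=\; \sum_{\text{partitions }(X_S)_S\text{ of the components of }P} \;\prod_S |\ind(P[X_S], S)|.
\end{equation*}
Swapping the sums over $T$ and over partitions, I process the $k+1$ residues cyclically in a DP with state $(X, r_{\mathrm{last}}) \in \sepsa{0, V(P)} \times ([k+1] \cup \{\bot\})$ recording the union of components already placed and the residue of the most recent cut. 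At each residue $r \in [k+1]$ I either declare it non-cut (state preserved) or a cut: choose a sub-pattern $Y \in \sepsa{0, V(P)}$ disjoint from $X$, multiply by the subproblem value $|\ind(P[Y], G_{r_{\mathrm{last}}, r})|$ and by $-1$, then update the state to $(X \cup Y, r)$. Closing the DP around the residue cycle via Corollary~\ref{lem:techDPwrap} (with alphabet $A = \sepsa{0, V(P)} \times [k+1]$) yields the IE sum in $|\sepsa{0, V(P)}|^{O(1)} n^{O(1)}$ time, as required.

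\textbf{The main obstacle} is aligning the inclusion-exclusion sign $(-1)^{|T|-1}$ with the per-cut $-1$ factors that arise naturally in the DP: distributing $-1$ across the $|T|$ cut-transitions yields $(-1)^{|T|}$, so a single overall compensating $-1$ must be applied, and the case $T = \emptyset$ must be explicitly excluded by insisting on at least one cut-transition along the cycle. Care is also needed to ensure that every slab of $G_T$ (including the wrap-around slab straddling the residue-cycle seam) is charged exactly once against the correct $G_{r_1, r_2}$ --- precisely the cyclic consistency condition that Corollary~\ref{lem:techDPwrap} is built to enforce.
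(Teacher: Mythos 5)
Your overall strategy is exactly the paper's: Baker layering by distance residues mod $k+1$, inclusion--exclusion over the set of ``cut'' residues, factorization of each IE term into a product over slabs (using that the relevant separations have order $0$, so each connected component of $P$ lands in a single slab), and evaluation of the resulting cyclic sum-of-products via Corollary~\ref{lem:techDPwrap}. Your sign bookkeeping and the wrap-around handling via the cyclic version of the DP are fine and match the paper's treatment.

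However, your definition $G_{i,j} := G[A_i \cap A_j]$ is wrong for the role these graphs must play in the DP. Writing $V_r$ for the $r$-th residue class (so $A_r = V(G)\setminus V_r$), you have $G[A_i \cap A_j] = G - (V_i \cup V_j)$, which is the disjoint union of \emph{two} arcs of the residue cycle: the slab of residues strictly between $i$ and $j$ going one way around, and the complementary slab going the other way. Your DP transition multiplies by $|\ind(P[Y], G_{r_{\mathrm{last}}, r})|$ and intends this to count embeddings of $P[Y]$ into the single slab bounded by the consecutive cuts $r_{\mathrm{last}}$ and $r$ of $T$; with your symmetric definition it instead counts embeddings into the union of both arcs, and for $|T|\ge 2$ the second arc contains layers lying beyond other cuts of $T$, so the product over slabs no longer equals $|\ind(P,G_T)|$. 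The fix is to make the definition asymmetric, as the paper does: for $i\neq j$ let $G_{i,j}$ be the induced subgraph on the residue classes encountered strictly between $i$ and $j$ when traversing $1,2,\dots,k+1,1$ in increasing cyclic order, and let $G_{i,i} := G - V_i$. Each such graph is still $k$-outerplanar, and with this definition the rest of your argument goes through.
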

\begin{proof}
	Arbitrarily pick a vertex $v_0 \in V(G)$. Partition $V(G)$ into sets $V_1,\ldots,V_{k+1}$ where $V_i := \{v\in V(G): \dist(v_0,v)\equiv_{k+1} i\}$.
	It is easily seen that $G - V_i$ is $k$-outerplanar for every $i$.
	
	Let $f \in \ind(P,G)$. As $|f(P)|=k$ and the $V_i$'s form a partition, $f(P)$ will be disjoint from $V_i$ for some $i$ and we see that $|\ind(P,G)|$ equals
	\begin{equation}\label{eq:incexcappbak}
	\left|\bigcup_{i \in [k+1]} \ind\left(P,G - V_i\right)\right| =	\sum_{\emptyset \neq  R \subseteq [k+1]} (-1)^{|R|-1} \left|\ind\left(P,G-\bigcup_{i\in R}V_i\right)\right|,
	\end{equation}
	where the equality is by the inclusion-exclusion formula~\eqref{eq:ie}. For integers $i,j \in [k+1]$ we define
	\[
	G_{ij} = 
	\begin{cases}
	G\left[\bigcup_{k \in [i+1,\ldots,j-1]} V_k\right], & \text{if } i < j\\
	G- V_i, & \text{if } i=j\\
	G\left[\bigcup_{k \in [i+1,\ldots,k+1,1,j-1]} V_k\right], & \text{if } i > j.
	\end{cases}
	\]
	We see that if $R= \{z_1 < z_2 < \ldots < z_h\}$, then $\left|\ind\left(P,G-\bigcup_{i\in R}V_i\right)\right|$ equals
	\begin{equation}\label{eq:simprewrite}
	\sum_{\substack{X_0 \subseteq \ldots \subseteq X_h = V(P) \\ |\partial X_i|=0} } \big|\ind(P[X_0],G_{z_h,z_1})\big| \prod_{i=1}^{h-1} \big|\ind(P[X_i\setminus X_{i-1}],G_{z_i,z_{i+1}})\big|.
	\end{equation}
	To see this, note we can group the functions in $\ind(P,G-\bigcup_{i\in R}V_i)$ on their image $I$ which defines $Y_0,\ldots,Y_h$ as $Y_0 = I \cap V(G_{z_h,z_1})$ and $Y_i= I \cap G_{z_i,z_{i+1}}$ for $i>0$.
	This uniquely defines $X_j=\cup_{i \leq j}Y_j$.
	Thus we may count the functions by summing over all such $X_i$.
	Note that if $X_i \setminus X_{i-1}$ is the set of vertices mapped to $G_{z_i,z_{i+1}}$ then $\partial (X_i \setminus X_{i-1})=\emptyset$ as $G_{z_i,z_{i+1}}$ is a connected component.
	
	Now we can combine~\eqref{eq:incexcappbak} and~\eqref{eq:simprewrite} to get that $|\ind(P,G)|$ equals $\sum_{h=1}^{k+1} (-1)^{h-1} T_{h}$ where $T_h$ equals
	\begin{equation}\label{eq:incexcsimp}
		\sum_{z_1<\ldots<z_h}\sum_{\substack{X_0 \subseteq \ldots \subseteq X_h = V(P) \\ |\partial X_i|=0} } \big|\ind(P[X_0],G_{z_h,z_1})\big| \prod_{i=1}^{h-1} \big|\ind(P[X_i\setminus X_{i-1}],G_{z_i,z_{i+1}})\big|
	\end{equation}
	Here we can recognize~\eqref{eq:sumproductwrap}, with $A=\sepsa{0} \times [k+1]$ and 
	\[
		T[((X_1,Y_1),r_1),((X_2,Y_2),r_2)]= \mult{((X_1,Y_2),X_2)}\cdot a[0,X_2\setminus X_1,\emptyset].
	\]
	Therefore, Lemma~\ref{lem:techDP} implies~\eqref{eq:incexcsimp} can be evaluated in the claimed time.
\end{proof}

\paragraph{Sparse Separation Reduction}
We now present a simple, but for our approach fundamental, reduction.
Informally, it states a separator of $G$ that only contains few pattern vertices can be used for a reduction into two subproblems.
We state the underlying equation that relates the answers of subproblems separately to facilitate further use.

\begin{lemma}[Sparse Separation Reduction]\label{lem:sparsesepform}
	Suppose that $a$ is the answer of subproblem $\pi= (G,B,\cM)$, $(V_1,V_2)$ is a separation of $G$ such that $|\upp_{\cM}(V_1 \cap V_2)| \leq b$, $a_1$ is the answer of subproblem 
	$(G[V_1],(B \cup V_1) \cap V_2,\cM[V_1])$, and that $a_2$ is the answer of subproblem $(G[V_2],(B \cup V_2) \cap V_1,\cM[V_2])$.
	
	Then the following formula holds:
	\[	
		\ans{a}{r,(X,Y),f} = \sum_{
			\substack{
				(X_1,X_2) \in \sepsa{\upp_{\cM}(B)+ b,X} \\
				 f': \partial{X_1} \cup \partial{X_2}  \rightarrow  B \cup (V_1 \cap V_2)  \\
				 f' \extends f\\
				 r_1+r_2=r+r'
			  }
			  }
			 \mu((X_1,X_2),X)\cdot\ans{a_1}{r_1,X_1,f'_{|X_1 }} \cdot \ans{a_2}{r_2,X_2,f'_{|X_2}},
	\]
	where $r'_M := |f^{-1}(M \cap (V_1 \cap V_2))|$ for every $(M,M_\low,M_\upp) \in \cM$.
\end{lemma}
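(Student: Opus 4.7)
My plan is to prove this identity by a bijective decomposition of the set $\{g \in \ind(P[X], G) : g \extends f, \ g \picks r\}$ induced by the separation $(V_1, V_2)$ of $G$, and then use the factor $\mu((X_1, X_2), X)$ to pass from the canonical representatives in $\sepsa{\cdot, X}$ used to index the subanswers $a_1, a_2$ back to a sum over all actual separations of $P$ below $X$.

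Forward direction: given such a $g$, I would set $X_i := g^{-1}(V_i)$. Then $X_1 \cup X_2 = X$ because $V_1 \cup V_2 = V(G)$, and since $g$ both preserves and reflects adjacency while $(V_1, V_2)$ has no $G$-edges between $V_1 \setminus V_2$ and $V_2 \setminus V_1$, no $P$-edge can run between $X_1 \setminus X_2$ and $X_2 \setminus X_1$, so $(X_1, X_2)$ is a separation of $P$ below $X$. Its order $|X_1 \cap X_2| = |g^{-1}(V_1 \cap V_2)|$ is at most $\upp_\cM(B) + b$: the portion inside $X \cap Y$ is bounded by $|X \cap Y| \leq \upp_\cM(B)$, while the remaining $g$-preimages of $V_1 \cap V_2$ are at most $b$ because $g \picks r$ and the hypothesis $\upp_\cM(V_1 \cap V_2) \leq b$ forces the corresponding coordinate of $r$ to be at most $b$. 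I then take $f' := g$ restricted to $\partial X_1 \cup \partial X_2$ and $g_i := g$ restricted to $X_i$, and check routinely that $f' \extends f$, that $f'$ takes values in $B \cup (V_1 \cap V_2)$, that $g_i \in \ind(P[X_i], G[V_i])$ extends $f'_{|X_i}$, and that the unique $r_1, r_2$ with $(r_i)_M = |g_i(X_i) \cap M|$ satisfy $r_1 + r_2 = r + r'$ by inclusion--exclusion on the overlap $g_1(X_1) \cap g_2(X_2) = g(X_1 \cap X_2)$.

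Reverse direction and multiplicity: any admissible tuple $((X_1, X_2), f', r_1, r_2)$ together with $g_i$ counted by $\ans{a_i}{r_i, X_i, f'_{|X_i}}$ reassembles uniquely into such a $g$ by gluing, since $g_1$ and $g_2$ agree on $\partial X_1 \cup \partial X_2$ (both equal $f'$ there) and have images in $V_1$ and $V_2$ respectively, and the separation property of $(V_1, V_2)$ forbids any spurious edges across the glue. The isomorphism-invariance lemma immediately preceding this statement ensures $\ans{a_i}{\cdot, X_i, \cdot}$ depends only on the isomorphism class of $(X_1, X_2)$, so summing over canonical $(X_1, X_2) \in \sepsa{\upp_\cM(B) + b, X}$ with weight $\mu((X_1, X_2), X)$ recovers the sum over all actual separations of $P$ below $X$ produced by the forward decomposition. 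The main obstacle I expect is the bookkeeping surrounding $f'$: one must verify that specifying $f'$ only on $\partial X_1 \cup \partial X_2$ (rather than on the full separator $X_1 \cap X_2$) still supplies exactly the data that the subanswers $a_1, a_2$ require as boundary functions, and that the boundary sets $(B \cup V_i) \cap V_{3-i}$ of $\pi_i$ contain the images of $f'_{|X_i}$ for every admissible $f'$; this is routine but is precisely where double-counting or missing constraints would otherwise slip in.
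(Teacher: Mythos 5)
Your proposal is correct and follows essentially the same route as the paper's proof: decompose each counted embedding $g$ along the host-graph separation $(V_1,V_2)$ into an induced separation $(X_1,X_2)$ of $P$ below $X$, split off the boundary function $f'$ and the monitor counts (with the correction term $r'$ for vertices counted on both sides), and use the multiplicity $\mu((X_1,X_2),X)$ together with isomorphism-invariance of answer entries to pass to canonical separations. Your write-up is in fact somewhat more explicit than the paper's about the bijection, the order bound on $(X_1,X_2)$, and the gluing step.
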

\begin{figure}
	\centering\resizebox{0.8\textwidth}{!}{
		\includegraphics{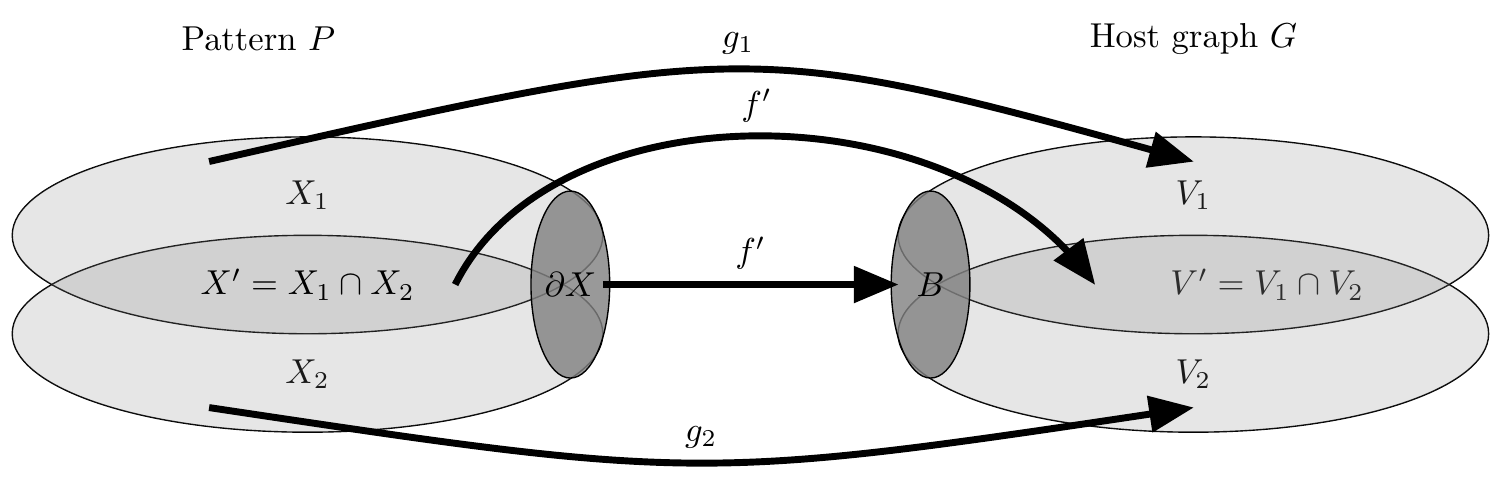}	
	}
	\caption{Mapping in the proof of Lemma~\ref{lem:sparsesepform}}
	\label{fig:mapping}
\end{figure}
\begin{proof}
	Since $\ans{a}{r,(X,Y),f}$ counts the number of mappings $g \in \ind(P[X],G)$, we can compute $\ans{a}{r,(X,Y),f}$ as the number combinations of appropriate mappings $g_1 \in \ind(P[X_1],G[V_1])$ and $g_2 \in \ind(P[X_2],G[V_2])$ (see also Figure~\ref{fig:mapping}).
	We sum over all $X_1 \subseteq X$ such that $|\boundary{X_1} \setminus \boundary{X}| \leq b$ which means that $X_1 \in \sepsa{\upp_{\cM}(B)+ b}$, and let $X_2= (X \setminus X_1) \cup \boundary{X_1}$. Then we have that
	\begin{align*}
		\ans{a}{r,(X,Y),f} &=\sum_{	\substack{	X_1 \subseteq X \\|\boundary{X_1} \setminus \boundary{X}| \leq b}}\sum_{r_1+r_2=r +r'} \sum_{\substack{f': (X_1 \cap X_2) \cup \partial X  \rightarrow  B \cup V'  \\ f' \extends f}} \ans{a_1}{r_1,X_1,f'_{|X_1 }} \cdot \ans{a_2}{r_2,X_2,f'_{|X_2}}\\
		&= \sum_{\substack{X_1 \in \sepsa{\upp_{\cM}(B)+ b} \\ r_1+r_2=r+r'}}\mu(P,X_1)  \sum_{\substack{f': \partial{X_1} \cup \partial{X_2}  \rightarrow  B \cup V'  \\ f' \extends f}} \ans{a_1}{r_1,X_1,f'_{|X_1 }} \cdot \ans{a_2}{r_2,X_2,f'_{|X_2}}.
	\end{align*}
	Note we sum over $r_1+r_2=r+r'$ as pattern vertices in $V_1 \cap V_2$ will contribute to the monitor count of monitors in both subproblems.
\end{proof}

We obtain the following directly as corollary:

\begin{corollary}\label{cor:sparsereduction}
	If $(V_1,V_2)$ is a separation of $G$ with $|\upp_{\cM}(V_1 \cap V_2)| \leq b$, there is a strict reduction of the subproblem $(G,B,\cM)$  to subproblems $(G[V_1],(B \cup V_1) \cap V_2,\cM[V_1])$ and $(G[V_2],(B \cup V_2) \cap V_1,\cM[V_2])$.
\end{corollary}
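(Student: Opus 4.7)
The plan is to package Lemma~\ref{lem:sparsesepform} directly as a strict reduction in the sense of the definition just above. The first algorithm, on input $\pi=(G,B,\cM)$ together with the witnessing separation $(V_1,V_2)$, outputs $\pi_i := (G[V_i], (B \cup V_i) \cap V_{3-i}, \cM[V_i])$ for $i \in \{1,2\}$; each is constructed from the input in polynomial time. The second algorithm, on input the answers $a_1, a_2$ of $\pi_1, \pi_2$, computes $a$ entry-by-entry by evaluating the right-hand side of the identity in Lemma~\ref{lem:sparsesepform} at each index triple $(r,(X,Y),f)$ in the index domain of $a$. Correctness is immediate, since the only hypothesis needed by that lemma, $|\upp_{\cM}(V_1 \cap V_2)| \leq b$, is precisely the assumption of the corollary.

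To verify strictness I would check the two inequalities that the definition requires. For the monitor upper bound, each monitor in $\cM[V_i]$ arises by restricting some $(M, M_\low, M_\upp) \in \cM$ to $(M \cap V_i, M_\low, M_\upp)$, so the $M_\upp$ coordinates are unchanged; taking the min over this (possibly larger, via newly coinciding supports) family can only decrease it, so $\upp_{\cM[V_i]}(V(G_i)) \leq \upp_{\cM}(V(G))$. For the non-boundary count, the separator $V_1 \cap V_2$ is absorbed into $B_i$ in each subproblem and the remainder of $B$ on the $V_i$-side also lies in $B_i$, so $V(G_i) \setminus B_i \subseteq V_i \setminus V_{3-i}$ and this set is disjoint from $B$; hence $|V(G_i) \setminus B_i| \leq |V(G) \setminus B|$.

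For polynomial running time, the subproblem construction is trivial, and computing a single entry of $a$ via the Lemma~\ref{lem:sparsesepform} identity is a sum whose support is bounded by $|\sepsa{\upp_{\cM}(B)+b,X}| \cdot |B|^{\upp_{\cM}(B)+b} \cdot k^{|\cM|}$, each summand a product of constants and two looked-up entries of $a_1, a_2$. This is polynomial in the combined size of the input description and the output tables, as required.

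I do not foresee a conceptual obstacle: the substantive work, namely the multiplicity-weighted summation identity over separations and monitor picks, is entirely discharged by Lemma~\ref{lem:sparsesepform}. The corollary is essentially a bookkeeping wrapper that aligns the statement of that lemma with the reduction interface laid out in the definition. The one step that merits a moment's care is confirming that the restriction operator $\cM[V_i]$ does not accidentally inflate the mapped-pattern-vertex count, which is handled by the observation about $M_\upp$ above.
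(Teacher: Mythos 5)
Your proposal is correct and matches the paper exactly: the paper states this as a direct corollary of Lemma~\ref{lem:sparsesepform} with no further argument, and your write-up simply supplies the routine bookkeeping (strictness of the two inequalities and polynomial evaluation of the summation identity) that the paper leaves implicit.
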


\paragraph{Base Case: Few Pattern Vertices}
Corollary~\ref{cor:sparsereduction} in combination with standard dynamic programming on tree decompositions gives us the following result that serves as a basis of the divide and conquer scheme we use to prove Theorem~\ref{thm:main}:
\begin{lemma}\label{lem:basecase}
	The answer to a subproblem $(G,B,\cM)$ that satisfies $\tw(G)=O(k)$, and $|V(G)|= O(\sqrt{k})$ or $|\upp_{\cM}(V(G))| = \tO(\sqrt{k})$ and $|B|=k^{O(1)}$ can be computed in $2^{\tilde{O}(\sqrt{k})}|\sepsa{\tO(\sqrt{k})}|n^{O(1)}$ time.
\end{lemma}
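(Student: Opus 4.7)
The plan is to split on the disjunction in the hypothesis and give a direct algorithm in each case. As a preliminary count, the answer table has at most $|\feas(\cM)| \cdot |\sepsa{\tO(\sqrt{k})}| \cdot |B|^{\tO(\sqrt{k})}$ entries, and the invariants maintained for subproblems stated just before, namely $|\cM| = \tO(\sqrt{k}) + O(\log n/\log k)$, $|B| \leq k^{O(1)}$ and $\upp_{\cM}(V(G)) \leq \tO(\sqrt{k})$, bound this product by $2^{\tO(\sqrt{k})} \cdot |\sepsa{\tO(\sqrt{k})}| \cdot n^{O(1)}$. Thus it suffices to fill each entry in amortised time polynomial in $n$ times $2^{\tO(\sqrt{k})}$.

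For the first subcase $|V(G)| = O(\sqrt{k})$, I would compute each entry $\ans{a}{r,(X,Y),f}$ by brute enumeration: if $|X| > |V(G)|$ no injection exists and the entry is $0$; otherwise $|X| \leq O(\sqrt{k})$, so we enumerate the $|V(G)|^{|X|} = 2^{\tO(\sqrt{k})}$ functions $g: X \to V(G)$ extending $f$ and check in polynomial time which of them lie in $\ind(P[X],G)$ and satisfy $g \picks r$. Summing over the table gives the claimed bound.

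For the second subcase $|\upp_\cM(V(G))| = \tO(\sqrt{k})$, I would first compute a tree decomposition $\treedecomp = (T,\{B_x\}_{x \in V(T)})$ of $G$ of width $O(k)$ in polynomial time (invoking Lemma~\ref{lem:twkout} in the $k$-outerplanar instances generated elsewhere in the divide-and-conquer scheme), and then run a bottom-up DP on $T$. Processing a node $x$ uses $B_x$ as a separator, which is justified because $|\upp_\cM(B_x)| \leq \upp_\cM(V(G)) = \tO(\sqrt{k})$, so Corollary~\ref{cor:sparsereduction} applies with $b=\tO(\sqrt{k})$ and the combination formulas are exactly those of Lemma~\ref{lem:sparsesepform}. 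A DP state at $x$ records a monitor vector $r \in \feas(\cM)$, a separation $(X,Y) \in \sepsa{\tO(\sqrt{k})}$, and an injective partial mapping $f': \partial X \to B \cup B_x$. Although $|B_x| = O(k)$, the global cap $\upp_\cM(V(G)) = \tO(\sqrt{k})$ forces $|f'(\partial X) \cap B_x| \leq \tO(\sqrt{k})$, so the number of admissible $f'$ is at most $\binom{O(k)}{\tO(\sqrt{k})} \cdot k^{\tO(\sqrt{k})} = 2^{\tO(\sqrt{k})}$; combined with the other indices this keeps the per-bag state count inside $2^{\tO(\sqrt{k})} \cdot |\sepsa{\tO(\sqrt{k})}| \cdot n^{O(1)}$, and $|V(T)|= O(n)$ bags deliver the overall bound.

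The main obstacle I anticipate lies at the join nodes, where one must combine two child states whose partial assignments agree on $B_x$, whose pattern separations merge to the target $(X,Y)$, and whose monitor vectors add (after correcting for the pattern vertices mapped into $B_x$ that are counted by both children) to $r$. Lemma~\ref{lem:sparsesepform} codifies exactly this combination, so no new algebraic identity is needed; the delicate point is to organise the enumeration of compatible child-state pairs so that each join is processed in time essentially linear in the state count rather than quadratic, thereby fitting the whole DP within the budget $2^{\tO(\sqrt{k})} \cdot |\sepsa{\tO(\sqrt{k})}| \cdot n^{O(1)}$ claimed by the lemma.
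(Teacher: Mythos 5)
Your proposal follows essentially the same route as the paper: the paper likewise computes a width-$O(k)$ tree decomposition via Lemma~\ref{lem:twkout}, observes that every bag contains at most $\tO(\sqrt{k})$ pattern vertices because $\upp_{\cM}(V(G))=\tO(\sqrt{k})$ so that Corollary~\ref{cor:sparsereduction} applies at each bag, and brute-forces the resulting $\tO(\sqrt{k})$-size leaf subproblems exactly as in your first subcase. Your explicit bookkeeping of the DP states and the join-node combination is more detailed than the paper's one-line invocation of the strict reduction, but it is the same argument.
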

\begin{proof}
	The answer can be computer as follows: First compute a tree decomposition $\mathbb{T}$ of width $O(k)$ using Lemma~\ref{lem:twkout}.
	Note that any bag of $\mathbb{T}$ gives a separator of size $O(k)$ and it will include at most $\tO(\sqrt{k})$ pattern vertices since $|\upp_{\cM}(V(G))|= \tO(\sqrt{k})$ (note that we can assume $|\upp_{\cM}(V(G))| \leq n$ as values indexed by larger indices are trivially zero).
	Therefore Corollary~\ref{cor:sparsereduction} can be used in combination with any bag to reduce the subproblems.
	It remains to compute the answer of the subproblems corresponding to all leaf bags of $\mathbb{T}$, but these are of size $\tO(\sqrt{k})$, we can compute any entry of the answer of the associated subproblems by brute-forcing over all mappings.
\end{proof}


\subsection{Reductions Based on Sets of Separators}\label{subsec:sepsystemsreduction}

In this section we present three reductions that will be the main building blocks in the divide and conquer scheme we will use to prove Theorem~\ref{thm:main}.
We first present two reductions that assume a set of separators is given, which will be due to an application of Lemma~\ref{lem:chaineitherway} when we apply the lemma's.
The third reduction uses the first two reductions to obtain a cycle separator with good balance properties.

\paragraph{Set of separators.}
The following relatively direct reduction shows applies if a set of disjoint separators is given.
It may not immediately look that the subproblems are in fact easier, but this will become more clear when we apply the reduction later.
\begin{lemma}[Set of separators]\label{lem:sepsred}
	There is a strict reduction that, given a
	\begin{itemize}
		\item subproblem $(G,B,\cM)$, a triangulation $G^\Delta:=(V,E \cup \Delta)$ of $G$, and an aligned cycle $C$ of $G^\Delta$,
		\item a chain of disjoint $(C^\pointdown,C^\pointup)$-separators $(S_1,\ldots,S_p)$ with $|S_j| \leq 2q= k^{O(1)}$ for each $j \in [p]$, 
	\end{itemize}
	outputs $p$ subproblems $\{(G,B,\cM_i)\}_{i \leq p}$ such that $|\smallM(\cM_i)| \leq |\smallM(\cM)| + p$, $|\largeM(\cM_i)| = |\largeM(\cM)|$  and $\upp_{\cM_i}(V(S_i)) =  k/p$.
\end{lemma}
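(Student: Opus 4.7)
The plan is to partition pattern mappings across the $p$ subproblems according to the \emph{smallest} index $i$ for which the separator $S_i$ captures at most $k/p$ image vertices. Since $S_1,\ldots,S_p$ are pairwise disjoint and $|V(P)|=k$, every $g \in \ind(P[X],G)$ satisfies $\sum_{j=1}^p |g(V(P)) \cap S_j| \le k$, so by averaging some index $i^\ast(g) \in [p]$ exists. Assigning each $g$ to $i^\ast(g)$ yields a partition of pattern occurrences that lets us sum over $i$ without double counting.

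To realise this partition through the monitor mechanism, for each $i \in [p]$ I would set
\[
  \cM_i \;:=\; \cM \;\cup\; \{(S_j,\,k/p+1,\,k) : j < i\} \;\cup\; \{(S_i,\,0,\,k/p)\}.
\]
Each new monitor has $|S_j| \le 2q = k^{O(1)} \le k^4$ and strictly positive upper bound, hence qualifies as small. This immediately gives $|\smallM(\cM_i)| \le |\smallM(\cM)| + p$, $|\largeM(\cM_i)| = |\largeM(\cM)|$, and $\upp_{\cM_i}(V(S_i)) = k/p$ as required (using that $\cM$ carries no prior monitor with vertex set $S_i$, since the $S_i$ are freshly produced by the separator construction). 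Strictness of the reduction is immediate: $V(G_i) = V(G)$, $B_i = B$, and no new monitor has vertex set $V(G)$, so $\upp_{\cM_i}(V(G_i)) \le \upp_{\cM}(V(G))$ and $|V(G_i)\setminus B_i| \le |V(G)\setminus B|$.

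For the combining step, given the answers $a_1,\ldots,a_p$ I would output
\[
  \ans{a}{r,(X,Y),f} \;=\; \sum_{i=1}^{p}\;\sum_{\substack{r' \in \feas(\cM_i)\\ r'|_{\cM} = r}} \ans{a_i}{r',(X,Y),f}.
\]
Correctness follows from the pigeonhole partition: each $g \in \ind(P[X],G)$ extending $f$ and picking $r$ contributes precisely once on the right, at $i = i^\ast(g)$ with $r'_{S_j} = |g(V(P)) \cap S_j|$ for $j \le i$. The strict lower bound $k/p+1$ inside the monitors $(S_j,k/p+1,k)$ for $j<i$ forces $|g \cap S_j| > k/p$ (incompatible with $i^\ast(g) < i$), while the upper bound $k/p$ inside $(S_i,0,k/p)$ forces $|g \cap S_i| \le k/p$ (incompatible with $i^\ast(g) > i$). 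Both the output generation and the combining formula clearly run in time polynomial in the input and output sizes. There is no genuine combinatorial obstacle; the only subtlety is the careful coordination of the bounds, which is why the lower bound must be the \emph{strict} value $k/p+1$ rather than $k/p$.
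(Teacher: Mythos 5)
Your proof is correct and takes essentially the same route as the paper: the same pigeonhole choice of the unique smallest index $i$ with $|g(X)\cap V(S_i)|\le k/p$, the identical monitor families $\cM_i = \cM \cup \{(V(S_j),k/p+1,k)\}_{j<i} \cup \{(V(S_i),0,k/p)\}$, and the same recombination by summing over $i$ and over the feasible extensions of $r$ to the new monitors. The only cosmetic difference is that the paper's prose states the threshold as $\sqrt{k}$ (the value of $k/p$ in the intended application with $p=\sqrt{k}$), whereas you keep it as $k/p$ throughout, which is consistent with the monitors actually defined.
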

\begin{proof}
	We need to show how to compute $\ans{a}{r,(X,Y),f}$, where $a$ is the answer to subproblem $(G,B,\cM)$, given the answers $a_i$ to subproblems $(G,B_i,\cM_i)$.
	Note that $\ans{a}{r,(X,Y),f}$ counts mappings $g \in \ind(P[X],G)$, so $|g(X)| = |X| \leq k$. Since the $S_i$'s are disjoint, for every such $g$ there exists a unique $i$ such that $|g(X) \cap V(S_i)| \leq \sqrt{k}$ and $|g(X) \cap V(S_j)| > \sqrt{k}$ for $j < i$. We define the subproblem by setting the monitors as follows:
	\[
		\cM_i = \cM \cup \{ (V(S_j),k/p+1,k) \}_{j < i} \cup \{(V(S_i),0,k/p)\}.
	\]
	Summing over all such $i$ and counting the corresponding functions $g$ we see that
	\begin{equation}\label{eq:setofseps}
		\ans{a}{r,(X,Y),f} = \sum_{i=1}^p \sum_{r^i_{V(S_1)},\ldots,r^{i}_{V(S_i)}} \ans{a_i}{r^i,(X,Y),f},
	\end{equation}
	where in the second sum $r^i_{V(S_j)}$ ranges over $\{k/p+1,\ldots,k\}$ if $j < i$ and over $\{0,\ldots,k/p\}$ if $i=j$.
	
	This gives a strict reduction since all values of the answer can be computed in time linear in the size of the output answer table by straightforward evaluation of~\eqref{eq:setofseps}, and the number of non-boundary vertices and pattern vertices in the subproblems is not larger than in $(G,B,\cM))$.
\end{proof}

\paragraph{Set of nearly disjoint paths.}
The setting of the following reduction is similar to that of the previous reduction, but the reduction itself is harder.
The reason is that, since we will use $q \geq k$, we cannot afford to add monitors for every $S_i$.
To circumvent adding all these monitors, we apply efficient inclusion-exclusion in a way similar to Lemma~\ref{lem:redkout}.
But to use this strategy, we require a chain of separators in $G$ and the given separators $S_1,\ldots,S_q$ (which can also be viewed as paths in $G^\Delta$) only are separators in the graph $G[\int(C)]$.
To resolve this, we consider two vertices $s,t$ that are on sufficiently many of the paths and pair-wise combine the paths to obtain a chain of nearly-disjoint cycles.
This brings us to the same setting as in Lemma~\ref{lem:redkout}, except that the cycles may overlap in few vertices.
To deal with this overlap, we sum over all mappings of the pattern to these boundary vertices; this can be done implicitly by including them as indices in the dynamic programming tables.

\begin{lemma}[Set of nearly disjoint paths]\label{lem:dispathred}
	There is a strict reduction that, given a
	\begin{itemize}
		\item subproblem $(G,B,\cM)$, a triangulation $G^\Delta:=(V,E \cup \Delta)$ of $G$, and an aligned cycle $C$ of $G^\Delta$,
		\item system of $(C^{\pointleft},C^{\pointright})$-separators $(S_1,\ldots,S_q)$ of $G^\Delta$ with $|\pub(S_j)| \leq 4\sqrt{k}$ for $j \in [q]$ where $q\geq 4k^3$,
	\end{itemize}
	outputs a set $\cI \cup \cO$ of $O(q^2)$ subproblems $\{(G_l,B_l,\cM_l)\}_{l}$ with associated separator $S_l$ such that
	\begin{itemize}
		\item $|B_l \setminus B|\leq 16\sqrt{k}$,
		\item $\upp_{\cM_l}(B_i)\leq |\upp_{\cM}(B)|+ 16\sqrt{k}$,
		\item $\cM_l = \cM[V(G_l)] \cup (\priv(S_l)\cup \priv(S_{j'}),0,0) \text{ for some } j' \in [q]$.
	\end{itemize}
	Moreover, $G_l=G$ if $(G_l,B_l,\cM_l) \in \cI$, and $G_l$ is a subgraph of $\int(C)$ if $(G_l,B_l,\cM_l) \in \cO$ .
\end{lemma}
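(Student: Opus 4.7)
The plan is to apply an efficient inclusion--exclusion analogous to the proof of Lemma~\ref{lem:redkout}, with the nearly--disjoint paths $S_1,\ldots,S_q$ taking the role of the BFS layers. Since the private parts $\priv(S_j)$ are pairwise disjoint and any $g \in \ind(P,G)$ covers at most $k$ vertices, the assumption $q \geq 4k^3 \gg k$ guarantees that every relevant $g$ misses $\priv(S_j)$ for at least $q-k$ indices $j$. Hence, analogously to~\eqref{eq:incexcappbak}, the count $\ans{a}{r,(X,Y),f}$ equals the size of the union $\bigcup_{j \in [q]} \{ g : g(P) \cap \priv(S_j) = \emptyset,\ g \text{ meets the } r,X,Y,f \text{ constraints}\}$, which by~\eqref{eq:ie} is a signed sum over $R \subseteq [q]$.

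For each ordered pair $(l, j')$, combine the paths $S_l$ and $S_{j'}$ with the corresponding arcs of $C^\pointdown$ and $C^\pointup$ into a cycle $D_{l,j'}$ of $G^\Delta$ whose interior is contained in $\int(C)$. Define the $\cO$-subproblem indexed by $(l,j')$ on the subgraph of $G$ inside $D_{l,j'}$, with boundary $B_l$ containing $\pub(S_l) \cup \pub(S_{j'})$ (plus the constantly many endpoint-pairs on $C^\pointdown, C^\pointup$), and with the monitor $(\priv(S_l) \cup \priv(S_{j'}),0,0)$ adjoined; the hypothesis $|\pub(S_j)| \leq 4\sqrt{k}$ then gives $|B_l \setminus B| \leq 16\sqrt{k}$ and the required monitor/boundary bounds. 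Define the analogous $\cI$-subproblem on the full graph $G$; these will realise the ``wrap-around'' term of the telescoping sum. For a fixed ordered $R = \{z_1 < \cdots < z_h\}$, expanding the count of mappings avoiding $\bigcup_{j\in R}\priv(S_{z_j})$ by grouping the image of $V(P)$ into chains of separations $X_0 \subseteq \cdots \subseteq X_h = V(P)$ with $|\partial X_i|=0$ (as in~\eqref{eq:simprewrite}) factors it into a product of $\cO$-subproblem answers between consecutive pairs $(S_{z_i},S_{z_{i+1}})$ and one $\cI$-subproblem answer for the wrap. Summing over $R$ then yields a cyclic sum--product of precisely the form~\eqref{eq:sumproductwrap}, with states indexed by pairs (separation, path-index), which Corollary~\ref{lem:techDPwrap} evaluates in polynomial time; hence only the $O(q^2)$ per-pair subproblems need be output.

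The main technical obstacle is that $S_j$ is only a separator in $G^\Delta$ and not in $G$, so removing $\priv(S_j)$ leaves the public vertices $\pub(S_j)$ in place and the strips are not cleanly disconnected---pattern mappings may cross a strip boundary through a public vertex, ruining the naïve factorisation. The remedy is to absorb $\pub(S_l) \cup \pub(S_{j'})$ into the boundary $B_l$ of each subproblem, so that the mapping of pattern vertices onto public vertices is enumerated implicitly through the DP index $f : \partial X \to B_l$. Arranging the bookkeeping so that (i) each true $g \in \ind(P,G)$ is counted with the correct inclusion--exclusion coefficient across the $(l,j')$-subproblems, and (ii) the resulting expression still collapses to~\eqref{eq:sumproductwrap} with only $O(q^2)$ subproblem tables, is the crux of the argument; modulo this, all the quantitative bounds in the statement fall out directly from $|\pub(S_j)| \leq 4\sqrt{k}$ and the definition of the adjoined monitor.
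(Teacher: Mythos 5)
Your overall strategy --- efficient inclusion--exclusion over which private parts the pattern avoids, absorbing public vertices into the boundary, and collapsing the signed sum via Corollary~\ref{lem:techDPwrap} --- is the right one and matches the paper in spirit. But the step you yourself flag as the crux is a genuine gap, and the specific way you set up the inclusion--exclusion cannot be made to work. You include--exclude over the events ``$g$ avoids $\priv(S_j)$'' for \emph{individual} paths $S_j$. A single path $S_j$, even after its private part is removed and $\pub(S_j)$ is placed in the boundary, does not separate anything in $G$: it is only a separator inside $\int(C)$, and the regions between consecutive chosen paths remain mutually connected through $\sext(C)$ and through the arcs of $C^\pointdown$ and $C^\pointup$ themselves, none of which are removed. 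Hence the claimed factorisation of the $R$-term into a product over consecutive pairs $(S_{z_i},S_{z_{i+1}})$ fails, and without it neither the collapse to~\eqref{eq:sumproductwrap} nor the restriction to $O(q^2)$ subproblems goes through. Closing up each pair with arcs of $C^\pointdown,C^\pointup$, as you propose for $D_{l,j'}$, does not repair this: those arcs would also have to enter the boundary to stop the pattern from tunnelling through them, and they may contain up to $|C|$ vertices, destroying the bound $|B_l\setminus B|\le 16\sqrt{k}$.

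The missing idea is the paper's pigeonhole-plus-pairing step. Since $q\ge 4k^3$ and $|C^\pointdown|\cdot|C^\pointup|\le k^2$, at least $2k+2$ of the paths share a common pair of endpoints $s\in C^\pointdown$, $t\in C^\pointup$; keeping only these and pairing the $i$-th with the $(2k+3-i)$-th yields $k+1$ \emph{nested closed cycles} $C_1,\ldots,C_{k+1}$ with $\int(C_i)\subseteq\int(C_{i-1})$. Each $C_i$ is a genuine separator of all of $G$, so after removing $\priv(C_i)$ the set $\pub(C_i)$ (still of size $O(\sqrt{k})$, with no long arcs of $C$ needed since the two constituent paths already meet at $s$ and $t$) really does separate the annulus between consecutive cycles from everything else, and Lemma~\ref{lem:sparsesepform} then gives the telescoping product over the annuli $G_{z_i,z_{i+1}}$ plus one exterior term. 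The inclusion--exclusion is carried out over these $k+1$ nested cycles --- at least one of which the $k$-vertex pattern must miss privately --- rather than over the $q$ original paths, after which Corollary~\ref{lem:techDPwrap} applies exactly as you intended.
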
 
\begin{proof}
	Note that since $G^\Delta$ is a triangulated graph, each $(C^\pointleft,C^\pointright)$-separator $S_i$ is a path between a vertex from $C^\pointdown$ and a vertex from $C^\pointup$.
	By the pigeon-hole principle we can find two vertices $s \in C^{\pointdown}$ and $t \in C^{\pointup}$ such that there are at least $k+1$ paths $S_i$ that contain both $s$ and $t$.
	Let $S'_1,\ldots,S'_{2k+2}$ be a subsequence of $S_1,\ldots,S_q$ that consists of only paths from $s$ to $t$.
	Note that if we let $C_i = S_i \cup S_{2k+3-i}$, then $C_1,\ldots,C_{k+1}$ forms a chain of nested cycles, i.e. $\int(C_i) \subseteq \int(C_{i-1})$ for every $i$.
	
	Define $\pub(C_i)=\pub(S_i) \cup \pub(S_{2k+3-i})$ and $\priv(C_i)=\priv(S_i) \cup \priv(S_{2k+3-i})$. Since $g(X) \cap \priv(C_i) = \emptyset$ for some $i$ we have by~\eqref{eq:ie} that
	\begin{align}\label{eq:bigie}
		a[r,(X,Y),f] &= \Big|\bigcup_{i \in [k+1]}\Big\{ g\in \ind(P[X],G - \priv(C_i)): g \extends f \wedge g \picks r \Big\}\Big| \nonumber\\
					 &= \sum_{\emptyset = R \subseteq [k+1]}(-1)^{|R|-1} a_R,\\
\text{where} \quad a_R &= \Big|\Big\{ g\in \ind\Big(P[X],G - \bigcup_{i \in R} \priv(C_i)\Big):  g \extends f \wedge g \picks r \Big\}\Big|.\nonumber
	\end{align}
	For integers $i<j \in [k+1]$ we define $G_{ij}$ as $G[(\int(C_i) \setminus \int(C_j))\cup \pub(C_j)]$. Let $R= \{z_1 < \ldots < z_h\}$. Applying Lemma~\ref{lem:sparsesepform} for separators $\pub(S_{r_i})$ in the graph $G - \bigcup_{i=1}^h \priv(S_{r_i})$, we claim that $a_R$ equals
	\begin{equation}\label{eq:bigformula1}
		\begin{aligned}
		\sum_{
			\substack{
				X_0 \subseteq \ldots \subseteq X_h = X \\
				|\partial X_i|\leq |\pub(C_i)| \\
				f^1,\ldots,f^h\\
				r^1+\ldots+r^{h+1}=r+r'
			}
		 }
	 &\Big|\Big\{g_0 \in \ind\Big(P[X_0],\ext(C_{z_1})-\priv(C_{z_1})\Big): g_0 \extends f^1 \wedge g_0 \picks r^1 \Big\}\Big|*\\
	 &\prod_{i=1}^{h-1} \Big|\Big\{g_i \in \ind\Big(P[X_i\setminus X_{i-1}],G_{z_i,z_{i+1}}\Big): g_i \extends f^i \text{ and } f^{i+1} \wedge g_i \picks r^i \Big\}\Big|.
	 	\end{aligned}
	\end{equation}
	Here the sum ranges over all functions $f^1,\ldots,f^h$ where $f^i: X \rightarrow \pub(C_i)$ that agree with $f$, and 
	all $r^1,\ldots,r^{h+1}$ such that $r^1+\ldots+r^{h+1}=r+r'$, where $r'_M=\sum_{v \in M}c_v$ and $c_v$ is the number of $i$ such that $v \in X_i$.

	To see this more directly than to apply Lemma~\ref{lem:sparsesepform}, note we can group the set of all possible functions $g$ in $\ind(P[X],G-\cup_{i \in R}\priv(C_i))$ that contribute to $a_R$ based on their image $I$ by summing over all $Y_0,\ldots,Y_h$ where $Y_0 = I \cap V(\ext(C_{z_1}))$ and $Y_i= I \cap V(G_{z_i,z_{i+1}})$ for $i>0$.
	This uniquely defines $X_j=\cup_{i \leq j}Y_j$. 
	Thus we may count the functions by summing over all such $X_i$.
	Note that if $X_i \setminus X_{i-1}$ is the set of vertices mapped by $g$ to $G_{z_i,z_{i+1}}$ then $g(\partial (X_i)) \subseteq \pub(C_i)$ as $G_{z_i,z_{i+1}}$ is a connected component of the graph $G-\cup_{i \in R}\priv(C_i)$.
	Thus, if we subsequently group the mapping $g$ on its projected mappings $f^1,\ldots,f^h$ we can count the remaining functions $g$ as a product of mappings in the graphs $\ext(C_{z_1})$ and $G_{z_i,z_{i+1}}$ separately.

	Now we combine (\ref{eq:bigie}) and (\ref{eq:bigformula1}) to get that $\ans{a}{r,(X,Y),f}$ equals $\sum_{h=1}^{k+1}T_h$ where $T_h$ equals
	\begin{equation}\label{eq:biggerformula1}
		\begin{aligned}
		\sum_{z_1 < \ldots < z_h}&
	\sum_{
		\substack{
			X_0 \subseteq \ldots \subseteq X_h = X \\
			|\partial X_i|\leq |\pub(C_{z_i})| \\
			f^1,\ldots,f^h\\
			r^1+\ldots+r^{h+1}=r+r'
		}
	}
	\Big|\Big\{g_0 \in \ind\Big(P[X_0],\ext(C_{z_1}-\priv(C_{z_1}))\Big): g_0 \extends f^1 \wedge g_0 \picks r^1 \Big\}\Big|*\\
	&\prod_{i=1}^{h-1} \Big|\Big\{g_i \in \ind\Big(P[X_i\setminus X_{i-1}],G_{z_i,z_{i+1}}\Big): g_i \extends f^i \text{ and } f^{i+1} \wedge g_i \picks r^i \Big\}\Big|.
		\end{aligned}
	\end{equation}
	Now we can see that \eqref{eq:biggerformula1} equals~\eqref{eq:sumproductwrap} when we substitute $A$ with $\sepsa{\max_i |\pub(C_i)|} \times [k+1]$ and $T[((X_1,Y_1),r^1,f^1,z),((X_2,Y_2),r^2,f^2,z')]$ with $\mult{((X_1,Y_2),X_2)}\cdot X$ where
	\[Q =
	\begin{cases}
		 \big|\big\{ g \in \ind(P[X_i \setminus X_{i-1}],G'_{z,z'}): g \extends f^1,f^2 \text{ and } g \picks r^1+r^2\big\}\big|, & \text{if } z < z'\\
		 \big|\big\{ g \in \ind(P[X_i \setminus X_{i-1}],\ext(C_{z'})-\priv(C_{z_1})): g \extends f^2 \text{ and } g \picks r^2\big\}\big|, & \text{if } z > z' \text{ and }f^1=r^1=\emptyset,\\
		 0,& \text{otherwise}.
	\end{cases}
	\]
	Therefore, by Corollary~\ref{lem:techDPwrap} we can evaluate~\eqref{eq:biggerformula1} and thus $\ans{a}{r,(X,Y),f}$ if the value of $Q$ for all relevant parameters is given.
	Thus the values of $Q$ can be read off from the answers of the subproblems

	\begin{align}
		\nonumber&(G_{ij},&&(B \cap V(G_{ij})) \cup \pub(C_i) \cup \pub(C_j),&&\cM[V(G_{ij})] )\\
		\nonumber&(\ext(C_{z'}),&&(B \cap V(\ext(C_{z'}))\setminus \priv(C_{z_1})) \cup \pub(C_i) \cup \pub(C_j),&&\cM[V(\ext(C_{z'}))])
	\end{align}

	It remains to show that these subproblems satisfy the properties of the lemma statement.
	Indeed $B$ increases by at most $2|\pub(C_j)| \leq 4|\pub(S_j)| \leq 16\sqrt{k}$ vertices in every subproblem.
	It is easy to verify that the subproblems satisfy the required properties stated in the Lemma, and they can be computed in time polynomial in the input/output by Corollary~\ref{lem:techDPwrap}.
\end{proof}

\paragraph{Acquiring Balance}
Now we show how to obtain a short cycle separator with good balance properties.
In particular, we use a win-win approach:
We consider a cycle $C$ on at most $k$ vertices, and distinguish the following two cases. Either only $8\sqrt{k}$ vertices of the pattern are contained in $V(C)$, in which case we can use $C$ as a good separator (as the number of mapping of the pattern to $C$ will be $2^{\tO(\sqrt{k})}$), or we have the rough location of at least $\sqrt{k}$ vertices.
We show below the latter can be leveraged via Lemma~\ref{lem:chaineitherway} and the previous lemma's to reduce the computation of the subproblem at hand to a subproblem that amounts to count patterns for which another cycle separator $C'$ separates at least $\sqrt{k}$ vertices of the pattern.

\begin{lemma}[Acquiring Balance]\label{lem:acqbal}
	Fix $\theta=100\sqrt{k}\log_{4/3}(k)$. There is a strict reduction from subproblem $(G,B,\cM)$ to $O(k^6)$ subproblems $\{(G_i,B_i,\cM_i)\}_{i \leq l}$ with the following properties: For $i \in \{1,2\}$ we have 
	\begin{equation}\label{eq:easysp}
			|V(G_i) \setminus B_i| \leq \tfrac{3}{4}|V(G)\setminus B|,\ \ \cM_i = \cM[V(G_i)], \ \ |B_i| = |B|+ O(k^3), \ \ 
			\upp_{\cM_i}(B_i) \leq |\upp_\cM(B)|+\theta
	\end{equation}
	and for $i>2$ we have that
	\begin{itemize}
		\item $\upp_{\cM_l}(B_i)\leq |\upp_{\cM}(B)|+ 16\sqrt{k}$,
		\item $|B_i \setminus B| \leq 16\sqrt{k}$,
		\item $|\smallM(\cM_i)| \leq |\smallM(\cM)|+4+\sqrt{k}$,
		\item $|\largeM(\cM_i)| \leq 4+|\largeM(\cM)|$.
	\end{itemize}
	Moreover, for $i>2$ either $\upp_{\cM_i}(V(G_i))\leq \upp_{\cM}(V(G))-\sqrt{k}$ or the reduction outputs an associated cycle $C_i$ with the property that $\low_{\cM_i}(\int_{G_i}(C_i)), \low_{\cM_i}(\ext_{G_i}(C_i)) \geq \theta/4$.
\end{lemma}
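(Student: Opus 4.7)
The proof is a win-win argument built on a balanced cycle separator. First, I triangulate $G$ via Lemma~\ref{lem:outerplanar} and apply Lemma~\ref{lem:balcyc} with the weight function assigning $1/|V(G)\setminus B|$ to each non-boundary vertex and $0$ to $B$, yielding a $3/4$-balanced cycle separator $C$. The cycle has size polynomial in $k$ (at most $O(k^3)$) under the standing outerplanarity assumption that will hold by the preprocessing from Section~\ref{subsec:preprocreasy} and the recursive divide-and-conquer in Section~\ref{sec:mainred}.

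The easy case is $\upp_\cM(V(C)) \leq \theta/4$. Here I invoke Corollary~\ref{cor:sparsereduction} on the separation $(V(\int_G(C)), V(\ext_G(C)))$ with $b = \theta/4$, directly yielding subproblems $1$ and $2$ satisfying (\ref{eq:easysp}): the $3/4$-balance gives $|V(G_i)\setminus B_i| \leq (3/4)|V(G)\setminus B|$, while $|B_i| \leq |B|+|V(C)| \leq |B|+O(k^3)$ and $\upp_{\cM_i}(B_i) \leq \upp_\cM(B)+\theta/4 < \upp_\cM(B)+\theta$.

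The hard case is $\upp_\cM(V(C)) > \theta/4$. Here I iteratively refine the cycle through at most $\log_{4/3}(k)$ substeps. Given a current cycle $C^*$ with many pattern vertices, I align its vertices into four arcs $C^{*\pointleft}, C^{*\pointdown}, C^{*\pointright}, C^{*\pointup}$ each containing $\Omega(\upp_\cM(V(C^*)))$ pattern vertices (possible by walking around $C^*$), and apply Lemma~\ref{lem:chaineitherway} to the heavier of $\int_G(C^*)$ or $\ext_G(C^*)$ with $p=\sqrt{k}$ and $q = \Theta(k^3)$. In the disjoint separator case I invoke Lemma~\ref{lem:sepsred}; for each resulting subproblem $i$, the path $S_i$ contains at most $k/p=\sqrt{k}$ pattern vertices, and pairing $S_i$ with arcs of $C^*$ yields two candidate cycles, the lighter of which becomes $C^{**}$ for the next substep and improves the weight of the heavier side by a factor of $4/3$. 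In the nearly-disjoint case I invoke Lemma~\ref{lem:dispathred} to produce $O(q^2) = O(k^6)$ subproblems; those in $\cO$ live on a strict subgraph of $\int(C^*)$ (reducing $\upp_{\cM_i}(V(G_i))$ by $\sqrt{k}$ via a monitor that accounts for the $\geq \sqrt{k}$ pattern vertices exiled to the excluded region), and those in $\cI$ carry monitors restricting two private path sets simultaneously.

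The main obstacle is bookkeeping: ensuring that across all substeps the counts $|\smallM(\cM_i)|$, $|\largeM(\cM_i)|$, the boundary growth $|B_i \setminus B|$, and $\upp_{\cM_i}(B_i)$ stay within the prescribed budgets. Each substep contributes $O(\sqrt{k})$ pattern vertices to $\upp_{\cM_i}(B_i)$ from the path $S_i$, and after $\log_{4/3}(k)$ substeps we reach the bound $\theta = 100\sqrt{k}\log_{4/3}(k)$. Each substep adds at most a constant number of large monitors (from the exclusion sets of Lemma~\ref{lem:dispathred}) and at most one small monitor, keeping $|\smallM(\cM_i)|$ within $|\smallM(\cM)|+4+\sqrt{k}$ and $|\largeM(\cM_i)|$ within $|\largeM(\cM)|+4$. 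Termination in the hard case gives, for each output subproblem $i>2$, either a strict decrease $\upp_{\cM_i}(V(G_i)) \leq \upp_\cM(V(G))-\sqrt{k}$ (recorded through the monitors that enforce known pattern placements outside $V(G_i)$), or a cycle $C_i$ whose interior and exterior each host $\geq \theta/4$ pattern vertices, recorded via $\low$-monitors on $\int_{G_i}(C_i)$ and $\ext_{G_i}(C_i)$.
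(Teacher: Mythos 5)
Your proposal assembles the right ingredients (balanced cycle separator, alignment, Lemma~\ref{lem:chaineitherway}, Lemmas~\ref{lem:sepsred} and~\ref{lem:dispathred}, Corollary~\ref{cor:sparsereduction}), but its structure diverges from what the lemma actually requires, in two ways that break the proof. First, you phrase the argument as a deterministic case distinction on $\upp_\cM(V(C))$, deciding up front whether to run the ``easy'' or the ``hard'' branch. In a counting reduction the number of pattern vertices landing on $C$ is a property of each (unknown) occurrence, not of the subproblem, so both branches must always be produced and the occurrences partitioned between them via monitors: the paper runs Corollary~\ref{cor:sparsereduction} on $(G,B,\cM\cup\{(V(C),0,\theta)\})$ to get subproblems $1,2$ (counting occurrences with at most $\theta$ vertices on $C$), and separately enumerates, for each occurrence with more than $\theta$ vertices on $C$, its unique alignment with $|C^{\pointleft}\cap X|=|C^{\pointdown}\cap X|=|C^{\pointright}\cap X|=\theta/4$, enforced by explicit monitors $(C^d,\theta/4,\theta/4)$. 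Relatedly, your ``the lighter of the two candidate cycles becomes $C^{**}$'' step is not executable, since lighter/heavier with respect to pattern vertices cannot be tested; the paper instead enumerates both options and certifies the choice with the monitors of Lines~\ref{lin:largecycmon1}--\ref{lin:largecycmon2}, which is also exactly how the guarantee $\low_{\cM_i}(\int_{G_i}(C_i)),\low_{\cM_i}(\ext_{G_i}(C_i))\ge\theta/4$ is obtained (interior $\ge\theta/2$ from the side monitor, split between $\int(C^{\circlearrowleft})$ and $\int(C^{\circlearrowright})$, so one gets $\ge\theta/4$).

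Second, the iterative refinement over $\log_{4/3}(k)$ substeps does not belong in this lemma; it is the content of the Main Reduction (Lemma~\ref{lem:mainred}), whose job is to sparsify the balanced cycle that this lemma merely has to produce. Importing the iteration here violates every quantitative bound in the statement: $\log_{4/3}(k)$ rounds of branching through Lemma~\ref{lem:dispathred} yields $k^{O(\log k)}$ subproblems rather than $O(k^6)$; each round of Lemma~\ref{lem:sepsred} can add up to $p=\sqrt{k}$ small monitors, so $\log k$ rounds exceed the budget $|\smallM(\cM_i)|\le|\smallM(\cM)|+4+\sqrt{k}$; a constant number of large monitors per round gives $O(\log k)$ rather than at most $4$; and the boundary and $\upp_{\cM_i}(B_i)$ growth become $O(\sqrt{k}\log k)$ rather than $16\sqrt{k}$ (you even concede reaching $+\theta$, which the lemma permits only for $i\in\{1,2\}$). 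A single application of $\mathtt{menger+}$ per alignment and per side, followed by forming $C^{\circlearrowleft},C^{\circlearrowright}$ and adding the corresponding monitors, is both sufficient for the stated guarantee and necessary to stay within the stated budgets.
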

\begin{algorithm}[h]
	\caption{Reduction to obtain a small cycle with good balance properties from Lemma~\ref{lem:acqbal}}
	\label{alg:acquireBalance}
	\begin{algorithmic}[1]
		\REQUIRE $\texttt{acquireBalance}(G,B,\cM,\Delta)$ \hfill\algcomment{Assumes $G^\Delta=G + \Delta$ is triangulated and $k$-outerplanar}
		\ENSURE Subproblems $\{(G_i,B_i,\cM_i)\}_{i \leq l}$ and associated cycles $C_i$ as stated in Lemma~\ref{lem:acqbal}
		\STATE Find a cycle $C$ on at most $k$ vertices that is $\tfrac{3}{4}$-balanced for $V(G) \setminus B$ with Lemma~\ref{lem:balcyc}
		\STATE Use Corollary~\ref{cor:sparsereduction} on subproblem $(G,B,\cM \cup \{V(C),0,\theta\})$ and separation $(\int_{G^\Delta}(C),\ext_{G^\Delta}(C))$\label{lin:redsparse}
		\STATE Let $res$ be the set with the resulting two subproblems
		\FOR{every alignment $\{C^\pointleft,C^\pointdown,C^\pointright,C^\pointup\}$ of $C$}\label{lin:foralign}
		\STATE $\cM \gets \cM \cup \{ (C^\pointleft, \theta/4,\theta/4),(C^\pointdown, \theta/4,\theta/4),(C^\pointright, \theta/4,\theta/4),(C^\pointup, \theta/4+1,\infty)\}$
		\STATE $\cM_{\int_G(C)} \gets \{(V(\int_G(C)),\theta/2,k)\}$\label{lin:acqmonint}
		\STATE $\cM_{\ext_G(C)} \gets \{(V(\int_G(C)),0,\theta/2-1), (V(\ext_G(C)),\theta/2,k)\}$
			\FOR{$\side \in \{\int_G(C),\ext_G(C)\}$}\label{lin:chooseside}
				\STATE $\cM \gets \cM \cup \cM_\side$
				\STATE Apply $\mathtt{menger+}$ (Lemma~\ref{lem:chaineitherway}) with $\side(G^\Delta)$, aligned cycle $C$, $p=\sqrt{k}$, and $q=3k^3$
				\STATE Let $S_1,\ldots,S_r$ be the obtained output\label{lin:getseps}
				\IF[$S_1,\ldots,S_r$ are disjoint separators]{$r=p$}
				\STATE Use Lemma~\ref{lem:sepsred} on $(G,B,\cM)$, $G^\Delta$, $C$ and $S_1,\ldots,S_r$. Add obtained subproblems to $\cA$.
				\ELSE[$S_1,\ldots,S_r$ are nearly disjoint paths]
				\STATE Use Lemma~\ref{lem:dispathred} on $(G,B,\cM)$, $G^\Delta$, $C$ and $S_1,\ldots,S_r$; let $\cI \cup \cO$ be the output
				\STATE Add all obtained subproblems in $\cI$ to $res$, and all subproblems in $\cO$ to $\cA$.
				\ENDIF
				\FOR{every subproblem $(G_i,B_i,\cM_i) \in \cA$ with $S_i$ associated separator $S_i$}\label{lin:associate}
				\STATE Consider the cycles $C^{\circlearrowleft}$ and $C^{\circlearrowright}$ of the graph $C \cup S_i$ that contain $S_i$\label{lin:buildcyc}
				\STATE Add subproblem $(G_i,B_i,\cM_i \cup \{(\int_G(C^{\circlearrowleft}),\theta/4,k) \})$ to $res$\label{lin:largecycmon1}
				\STATE Add subproblem $(G_i,B_i,\cM_i \cup \{(\int_G(C^{\circlearrowleft}),0,\theta/4-1),(\int_G(C^{\circlearrowright}),\theta/4,k) \})$ to $res$\label{lin:largecycmon2}
				\STATE Output as associated cycles of these subproblems respectively $C^{\circlearrowleft}$ and $C^{\circlearrowright}$\label{lin:acqoutput}
				\ENDFOR
			\ENDFOR
		\ENDFOR
		\STATE \algorithmicreturn\ $res$
	\end{algorithmic}
\end{algorithm}
\begin{proof}
The promised reduction is implemented in Algorithm~\ref{alg:acquireBalance}.
Let $g$ be a function contributing to a value of an answer, and let $X$ be the image of $g$.
The algorithm distinguishes functions $g$ with at most, and respectively more than, $\theta$ vertices in $V(C) \cap X$ separately.
At Line~\ref{lin:redsparse} we count all pattern occurrences with at most $\theta$ such vertices: We count these by adding two subproblems created by Corollary~\ref{cor:sparsereduction} to the output and using the reduction from Corollary~\ref{cor:sparsereduction} to compute the associated values.

It remains to count pattern occurrences $g$ with more than $\theta$ vertices in $V(C) \cap X$.
Note that for each such pattern there is a unique alignment $\{C^\pointleft,C^\pointdown,C^\pointright,C^\pointup\}$ such that $|C^\pointleft \cap X|=|C^\pointdown \cap X|=|C^\pointright \cap X|=\theta/4$, and $|C^\pointup \cap X|> \theta/4$: Since the ordering $v_1,\ldots,v_l$ is fixed, $C^\pointleft$ must be equal to $\{v_1,\ldots,v_j\}$ where $j$ is the $\theta/4$-smallest element of $X$ in the ordering (and similar arguments fixed $C^\pointdown, C^\pointright$ and $C^\pointup$). Moreover, if $|V(C) \cap X| \leq \theta$ such an alignment does not exist.
In the corresponding iteration of the loop at Line~\ref{lin:foralign} we will show the algorithm counts the corresponding matching occurrences.

At Line~\ref{lin:chooseside} we choose to continue with either the interior or the exterior of the cycle $C$.
The large monitor $\cM_\side$ ensures that the chosen side contains at least $\theta/2$ vertices of $X$.
Note that every vector feasible for $\cM$ is feasible for exactly one of the monitor families $\cM_{\int_G(C)}$ and $\cM_{\ext_G(C)}$ and for this choice the algorithm will count the corresponding functions $g$.

Now we obtain a set of separators in the restricted graph at Line~\ref{lin:getseps} that we use to create subproblems according to Lemma~\ref{lem:sepsred} and Lemma~\ref{lem:chaineitherway}.
Note the graph $C \cup S_i$ is a cycle together with a path between two vertices on the cycle. In this graph $S_i$ is contained in two simple cycles that we call $C^{\circlearrowleft}$ and $C^{\circlearrowright}$ on Line~\ref{lin:buildcyc}.
Since $V(\int_{G^\Delta}(C^{\circlearrowleft})) \cup V(\int_{G^\Delta}(C^{\circlearrowright}))=\int_{G^\Delta}(C)$ we have that the interior of at least one of $C^{\circlearrowleft}$ and $C^{\circlearrowright}$ contains at least $\theta/4$ pattern vertices. Therefore we can add the large monitors on Line~\ref{lin:largecycmon1} and Line~\ref{lin:largecycmon2} and still ensure to count every function exactly once.

Therefore, the reduction is correct, and it can be easily verified that all created subproblems have the properties stated in the lemma, and that the reduction is strict.
\end{proof}
\section{Main Reductions and Algorithm}\label{sec:mainred}

With all tools from the previous section set up, we are ready to present our two main subroutines.
On a high level, our main algorithm (called \texttt{solveSubproblem} in Algorithm~\ref{alg:main}) is a divide and conquer scheme that may invoke either of the following two reductions:

\paragraph{The main reduction}\hspace{-1em} starts with a cycle separator as given by Lemma~\ref{lem:acqbal}.
Its goal is to use this to reduce the subproblem to two simpler subproblems as in Lemma~\ref{lem:basecase}.
In order to do so, we show how to find a family of cycle separators with equally good balance properties with the property that at least one of them must have few vertices of the pattern (which we informally call `sparsifying balanced separators' in the introduction). The construction of this family is based on Lemma~\ref{lem:chaineitherway} in a way very similar way to the proof of Lemma~\ref{lem:acqbal}.

\paragraph{The clean-up reduction}\hspace{-1em} is to ensure we only create subproblems $(G,B,\cM)$ satisfying $\upp_{\cM}(B)=\tO(\sqrt{k})$, $|B| = k^{O(1)}$ and 
$|\smallM(\cM)|,|\largeM(\cM)| \leq \tO(\sqrt{k})+ O(\log n / \log k)$.
Note that this is sufficient to guarantee that all the sizes of the answers are $2^{\tO(\sqrt{k})}n^{O(1)}|\sepsa{\tO(\sqrt{k})}|$.

To ensure this property on all created subproblems we choose our initial balanced cycle separator to be balanced on the set $W=B \cup_{M \in \smallM(\cM)}M$ which we sparsify in a way similar to as in the main reduction. Since $|W|$ is polynomial in $k$, we can repeat this $O(\log k)$ steps to ensure each created subproblem has only $\tO(\sqrt{k})$ vertices of $W$, which suffices to ensure the subproblem satisfies all above guarantees.
	
Large monitors cannot be removed in this way as their size may be unbounded in terms of $k$, but fortunately this is not a problem since we only add relatively few of them.
In Subsection~\ref{subsec:mainalgo} we will discuss this in more detail when we combine the two above reductions.

\subsection{Main Reduction: Decreasing the Number of (Pattern) Vertices}
Now we outline the most important reduction.
On a high level we give a reduction that decreases at least one of the two most important quantifications of the complexity of a subproblem, the size of the host graph (i.e~$|V(G)|$), and the number of pattern vertices that still need to be mapped (i.e. $\upp_{\cM}(V(G))$). The main reduction produces 2 subproblems in which $|V(G)|$ decreases with a constant fraction, and $k^{O(\log k)}$ subproblem in which $\upp_{\cM}(V(G))$ decreases with at least $\Omega(\sqrt{k})$.

In the algorithm the following notation will be useful:
\begin{definition}[Annotated Cycle]
An \emph{annotated cycle} is a cycle $C$ along with a partition of $V(C)$ into $C_\heavy,C_\light$ and $C_\discard$.
\end{definition}
The underlying meaning of $C_\heavy,C_\light$ and $C_\discard$ are that they denote a set of `heavy' vertices (from which any subset could occur in the pattern), `light' vertices (from which only $\tO(\sqrt{k})$ of such vertices can be selected), and `discarded' vertices (from which no such vertices can be selected). In the following we slightly abuse notation by referring to $C$ as the cycle together with the partition.
\begin{lemma}[Main Reduction]\label{lem:mainred}
	There is a strict reduction from the subproblem $(G,B,\cM)$ to $l = k^{O(\log k)}$ subproblems $\{(G_i,B_i,\cM_i)\}_{i \leq l}$ with the following properties: For $i \in \{1,2\}$ we have
	\begin{align*}
	|V(G_i) \setminus B_i| \leq \tfrac{3}{4}|V(G) \setminus B|,\qquad  &\cM_i = \cM[V(G_i)], \\
	|B_i| = |B|+ O(k^3)\qquad &\upp_{\cM_i}(B_i) \leq |\upp_\cM(B)|+\theta:=100\sqrt{k}\log k,
	\end{align*}
	and for $i > 2$ we have	
	\begin{align*}
	 	|\largeM(\cM_i)| &\leq |\largeM(\cM)| + O(\log k),     &|\smallM(\cM_i)| \leq |\smallM(\cM)| + O(\sqrt{k}\log k),\\
		\upp_{\cM_i}(B_i) &\leq \upp_{\cM}(B)+O(\sqrt{k} \log k), &\upp_{\cM_i}(V(G_i)) \leq \upp_{\cM}(V(G)) - \Omega(\sqrt{k}),\\
		|B_i| &\leq |B|+O(k^3 \log k).
	\end{align*}
\end{lemma}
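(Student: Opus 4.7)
The plan is to first feed the subproblem $(G,B,\cM)$ into Lemma~\ref{lem:acqbal}, obtaining $O(k^6)$ subproblems. The two subproblems indexed $i \in \{1,2\}$ there already meet the shrinkage, $|B|$-growth, and $\upp_{\cM}$-growth bounds demanded in our first case; pass them through unchanged. Among the remaining subproblems, any that already satisfies $\upp_{\cM_i}(V(G_i)) \le \upp_{\cM}(V(G)) - \sqrt{k}$ can be emitted directly into the $i>2$ output, because it meets the target decrease of $\Omega(\sqrt{k})$. The interesting case is the remaining subproblems, each carrying an associated cycle $C$ whose interior and exterior are guaranteed (by large monitors forced in) to contain at least $\theta/4$ pattern vertices in every counted image.

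For those, I would iteratively sparsify $C$ until it hosts only $\tO(\sqrt{k})$ pattern vertices, while preserving balance. In each round, take the current cycle, fix an alignment $C^\pointleft,C^\pointdown,C^\pointright,C^\pointup$ that splits its pattern vertices into four quarters (branch over the $O(k^4)$ choices of quarter-breakpoints), and invoke $\mathtt{menger+}$ of Lemma~\ref{lem:chaineitherway} inside whichever of $\int_{G}(C)$ or $\ext_{G}(C)$ is pattern-heavier, with $p=\sqrt{k}$ and $q=4k^3$. If it returns a chain of $p$ disjoint separators, I would invoke Lemma~\ref{lem:sepsred}, producing $p$ subproblems one of which forces the chosen separator $S_i$ to contain at most $\sqrt{k}$ pattern vertices. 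If instead it returns $q$ nearly disjoint paths, I would invoke Lemma~\ref{lem:dispathred}; its subproblems are already annotated with the right separator $S_l$ and the appropriate small monitor. Either way, for each resulting subproblem I would glue the associated $S_i$ to one arc of $C$ to build a new cycle $C'$, and add a single large monitor enforcing that $\int_{G}(C')$ contains $\Omega(\sqrt{k}\log k)$ pattern vertices (branching over which of the two cycles formed by $S_i\cup C$ has the heavier interior). By the balance of $C$ and a pigeon-hole argument identical to the one used in Lemma~\ref{lem:acqbal}, this is always possible.

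After one round the number of pattern vertices that can appear on $C'$ has essentially halved (since $S_i$ contributes $\tO(\sqrt{k})$ and $C'$ uses only one arc of $C$), so $O(\log k)$ rounds leave a final cycle $C^\star$ on which the allowed pattern-vertex count is $\tO(\sqrt{k})$, while $C^\star$ remains $\tfrac{3}{4}$-balanced with respect to $V(G)\setminus B$ (this balance is inherited from the acqbal step, since each round's rewiring only replaces one arc by a separator internal to the previously chosen side). At that point I would apply Corollary~\ref{cor:sparsereduction} with $C^\star$, which splits the graph along a separator of only $\tO(\sqrt{k})$ pattern vertices into two subproblems each losing a constant fraction of $|V(G)\setminus B|$; these become the $i\in\{1,2\}$ output subproblems. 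All intermediate subproblems that were branched off but never reached the final split are emitted as $i>2$ subproblems, and each of them has the property that the mapped-pattern-vertex count on $V(G_i)$ has dropped by at least the $\Omega(\sqrt{k})$ forced through the accumulated small monitors on the $S_i$'s, giving the promised $\upp_{\cM_i}(V(G_i)) \le \upp_{\cM}(V(G)) - \Omega(\sqrt{k})$.

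For the accounting: each of the $O(\log k)$ sparsification rounds multiplies the number of subproblems by $k^{O(1)}$ (from the $O(k^4)$ alignment choices, the two side choices, and the $q=O(k^3)$ output of $\mathtt{menger+}$), so the final count is $k^{O(\log k)}$. Each round contributes $O(\sqrt{k})$ to $|\smallM|$ (the small monitor on $S_i$ plus the arc-constraint monitors) and $O(1)$ to $|\largeM|$ (the balance monitor on $C'$), giving the stated totals after $O(\log k)$ rounds; similarly $B$ grows by $\tO(\sqrt{k})$ per round (the $\pub(S_i)$ vertices absorbed into the boundary) and $\upp_{\cM}(B)$ grows by $\tO(\sqrt{k})$ per round, yielding the $O(k^3\log k)$ and $O(\sqrt{k}\log k)$ bounds. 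Strictness (polynomial time in input plus output, no increase in $\upp_{\cM}(V(G))$, and monotonicity of non-boundary vertex counts across compositions) is inherited from strictness of all the building-block reductions. The main technical obstacle will be checking that the recursive application of $\mathtt{menger+}$ remains valid after the cycle has been rewired several times: specifically, that the reduced graph on which we apply Lemma~\ref{lem:chaineitherway} is still inner-triangulated with the current $C$ as outer boundary, which requires a short triangulation argument on the subgraph $\int_G(C)$ at the start of every round and a verification that the four-way alignment of the quarter-arcs respects the already-installed balance monitors.
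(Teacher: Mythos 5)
Your overall architecture matches the paper's: invoke Lemma~\ref{lem:acqbal} first, pass the two balanced subproblems and the already-$\upp$-decreased subproblems straight through, and for each remaining subproblem iteratively sparsify its associated cycle via an alignment, $\mathtt{menger+}$, and Lemmas~\ref{lem:sepsred}/\ref{lem:dispathred}, replacing one quarter-arc of the cycle by the chosen separator so that its heavy part shrinks by a constant factor per round, and finally splitting with Corollary~\ref{cor:sparsereduction}. This is Algorithm~\ref{alg:mainreduction1} almost verbatim.

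There is, however, a genuine gap in how you classify and verify the output. You route the two subproblems produced by the final Corollary~\ref{cor:sparsereduction} split into the $i\in\{1,2\}$ class, but that class requires $\cM_i=\cM[V(G_i)]$ and $|B_i|=|B|+O(k^3)$; the final-split subproblems carry all the monitors and boundary vertices accumulated over the $O(\log k)$ sparsification rounds (and there are $k^{O(\log k)}$ of them, two per member of the final family, not two in total), so they must land in the $i>2$ class. For that class the key guarantee is $\upp_{\cM_i}(V(G_i))\leq\upp_{\cM}(V(G))-\Omega(\sqrt{k})$, and you attribute this decrease to ``the accumulated small monitors on the $S_i$'s'' --- but those monitors only cap the number of pattern vertices lying \emph{on} the separators and say nothing about $\upp_{\cM_i}(V(G_i))$. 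The decrease actually comes from the \emph{large} balance monitors: one must verify that every cycle produced in every round inherits the invariant that at least $\theta/4$ pattern vertices lie in its strict interior and at least $\theta/4$ in its strict exterior (treating the cases $\side=\int_G(C)$ and $\side=\ext_G(C)$ separately, as the paper does), so that after the final split each side is guaranteed to miss $\theta/2-O(\sqrt{k}\log k)=\Omega(\sqrt{k})$ pattern vertices. Relatedly, in the counting setting you cannot deterministically ``invoke $\mathtt{menger+}$ inside whichever of $\int_G(C)$ or $\ext_G(C)$ is pattern-heavier'': different occurrences of $P$ may be heavier on different sides, so, as in Lines~\ref{lin:mainmonint}--\ref{lin:mainoutputcycs} of Algorithm~\ref{alg:mainreduction1}, you must branch over both sides with complementary large monitors that partition the feasible vectors. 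Without these repairs the stated properties of the output subproblems are not established.
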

\begin{algorithm}[h]
	\caption{The main reduction from Lemma~\ref{lem:mainred}}
	\label{alg:mainreduction1}
	\begin{algorithmic}[1]
		\REQUIRE $\texttt{mainReduce}(G,B,\cM)$ \hfill\algcomment{Assumes $G$ to be $k$-outerplanar}
		\ENSURE $k^{O(log k)}$ subproblems $\{(G_i,B_i,\cM_i)\}_{i \leq l}$ with the properties of Lemma~\ref{lem:mainred}
		\STATE Triangulate $G$ with Lemma~\ref{lem:outerplanar} so that $G^\Delta:=G\cup \Delta$ is triangular and $(k+1)$-outerplanar
		\STATE $\cS \gets \mathtt{acquireBalance}(G,B,\cM,\Delta)$\label{lin:invokeab}
		\STATE $res \gets  \Big\{ \pi_i=(G_i,B_i,\cM_i) \in \cS : \pi_i \text{ satisfies } \eqref{eq:easysp} \text{ or } \upp_{\cM_i}(V(G_i))\leq \upp_{\cM}(V(G))-\sqrt{k} \Big\} $ 
		\FOR{$(G,B,\cM) \in \cS \setminus res$ with associated cycle $C$}\label{lin:mainconsidercycle}
		\STATE Let $C=(C_\heavy, C_\light, C_\discard) := (V(C), \emptyset, \emptyset)$ \hfill\algcomment{$C$ denotes the annotated cycle} \label{lin:initialannotation}
		\STATE Let $\cA^0 := \{((G,B,\cM),C)\}$ \label{lin:initialcycle}
		\FOR{$d=1,\ldots,z:= \log_{4/3} k$}\label{lin:loopsparsify}
		\STATE Initiate $\cA^d := \emptyset$
		\FOR{$((G,B,\cM),C) \in \cA^{d-1}$}\label{lin:pickcycl}
		\STATE Find alignment $C^\pointleft,C^\pointdown,C^\pointright,C^\pointup$ of $C$ with $\forall x\in \{\pointleft,\pointdown,\pointright,\pointup\}: |C^x \cap C_{\heavy}| \geq \lfloor |C_{\heavy}| / 4 \rfloor$  
		\STATE $\cM_{\int_G(C)} \gets \{(V(\int_G(C)),4\sqrt{k},k)\}$\label{lin:mainmonint}
		\STATE $\cM_{\ext_G(C)} \gets \{(V(\int_G(C)),0,4\sqrt{k}-1), (V(\ext_G(C)),4\sqrt{k},k)\}$
		\FOR{$\side_G(C) \in \{\int_G(C),\ext_G(C)\}$}
		\STATE $\cM \gets \cM \cup \cM_\side$
		\STATE Apply $\mathtt{menger+}$ (Lemma~\ref{lem:chaineitherway}) with $\side(G')$, aligned cycle $C$, $p=\sqrt{k}$, and $q=3k^3$
		\STATE Let $S_1,\ldots,S_r$ be the obtained output
		\IF[$S_1,\ldots,S_r$ are disjoint separators]{$r=p$}
		\STATE Use Lemma~\ref{lem:sepsred} on $(G,B,\cM)$, $G^\Delta$, $C$ and $S_1,\ldots,S_r$. Add obtained subproblems to $\cB$.
		\ELSE[$S_1,\ldots,S_r$ are nearly disjoint paths]
		\STATE Use Lemma~\ref{lem:dispathred} on $(G,B,\cM)$, $G^\Delta$, $C$ and $S_1,\ldots,S_r$; let $\cI \cup \cO$ be the output
		\STATE Add all obtained subproblems in $\cI$ to $res$, and all subproblems in $\cO$ to $\cB$.
		\ENDIF
		\FOR{every subproblem $(G_i,B_i,\cM_i) \in \cB$ with $S_i$ associated separator $S_i$}
		\STATE Consider the cycles $C^{\circlearrowleft}$ and $C^{\circlearrowright}$ of the graph $C \cup S_i$ that contain $S_i$
		\item[] \hfill\algcomment{We let $\ann(\cdot)$ annotate a cycle according to~\eqref{eq:anncyc}}
		\STATE Add $\big( (G_i,B_i,\cM_i \cup \{(\int_G(C^{\circlearrowleft}),2\sqrt{k},k) \}), \ann(C^{\circlearrowleft}) \big)$ to $\cA^d$\label{lin:mainoutputcycss}
		\STATE Add $\big( (G_i,B_i,\cM_i \cup \{(\int_G(C^{\circlearrowleft}),0,2\sqrt{k}-1),(\int_G(C^{\circlearrowright}),2\sqrt{k},k)\}), \ann(C^{\circlearrowright}) \big)$ to $\cA^d$ \label{lin:mainoutputcycs}
		\ENDFOR
		\ENDFOR
		\ENDFOR
		\ENDFOR
		\FOR{$( (G_i,B_i,\cM_i), C) \in \cA^z$}
		\STATE Apply Corollary~\ref{cor:sparsereduction} on $(G_i,B_i,\cM_i)$ and separator $C_\heavy \cup C_\light$
		\STATE Add the resulting two subproblems to $res$\label{lin:mainaddtores}\hfill\algcomment{Uses that $C_\heavy \cup C_\light$ is a sparse separator}
		\ENDFOR
		\ENDFOR
		\RETURN $res$
	\end{algorithmic}
\end{algorithm}
The reduction from Lemma~\ref{lem:mainred} is implemented in Algorithm~\ref{alg:mainreduction1} (note Lines~\ref{lin:mainmonint}-\ref{lin:mainoutputcycs} are similar to Lines~\ref{lin:acqmonint}-\ref{lin:acqoutput} of Algorithm~\ref{alg:acquireBalance}).
We complete the description of the algorithm by defining the annotation of the produced cycles as follows:
\begin{equation}\label{eq:anncyc}
\ann(C^\alpha) :=
\begin{cases}
\big(C_\heavy \cap V(C^\alpha),C_\light \cap V(C^\alpha) \cup S_i,C_\discard \cap V(C^\alpha)\big), & \text{ if $r=p$}\\
\big(C_\heavy \cap V(C^\alpha) \cup \pub(S_i),C_\light \cap V(C^\alpha),C_\discard \cap V(C^\alpha) \cup \priv(S_i)\big), & \text{ otherwise}.
\end{cases}
\end{equation}
Note this is consistent with the monitors produced by the reductions from Lemma~\ref{lem:sepsred} and Lemma~\ref{lem:dispathred}.	
\begin{proof}
	We may only compute answer values corresponding to functions that map $\Omega(\sqrt{k})$ pattern vertices to $G$ as otherwise the other functions can be computed using Lemma~\ref{lem:basecase}.
	
	At Line~\ref{lin:invokeab} we obtain a set $\cS$ of subproblems. By Lemma~\ref{lem:acqbal} this set contains $2$ problems $(G_i,B_i,\cM_i)$ that satisfy $|V(G_i)|\leq \tfrac{3}{4}|V(G)|+k$, and $O(k^6)$ subproblems that satisfy $\upp_{\cM_i}(V(G_i))\leq \upp_{\cM}(V(G))-\sqrt{k}$. As the first two subproblems meet the criteria stated in the lemma, we directly add them to the output of the reduction and don't consider them anymore.
	The last category of subproblem in $\cS$ come with an associated cycle $C_i$ with the property that $\low_{\cM_i}(\int_{G_i}(C_i)), \low_{\cM_i}(\ext_{G_i}(C_i)) \geq \theta/4$. These subproblems cannot be added directly to the output.
	
	Therefore, we consider each of these subproblems and their associated cycles in Lines~\ref{lin:mainconsidercycle}-\ref{lin:mainaddtores}, and reduce the subproblems to subproblems with different associated cycles with equally good balance properties but with the guarantee that they contain few vertices of the pattern.
	Then we can use Corollary~\ref{cor:sparsereduction} to reduce each subproblem to two subproblems that have the requirements of the lemma statement by the balance properties of the cycles.
	
	To do so, we maintain a set of pairs $\cA^d$ consisting of subproblems and annotated cycles, such that the subproblem contains monitors that ensure few vertices from $C_\light$ and no vertices from $C_\discard$ are included in the pattern.
	Initially, the associated cycle $C$ only consists of heavy vertices.
	Note that in any of the $d$ iterations of the loop at Line~\ref{lin:loopsparsify}, $C_\heavy$ is decreased with a multiplicative factor $3/4$ if it is of size $\Omega(\sqrt{k})$, since one of the four parts of the alignment that contains $|C_\heavy|/4$ heavy vertices is replaced with $C_i$ that contains at most $4\sqrt{k}$ heavy vertices. Moreover, we only add $O(k^3)$ vertices to $|C_\light|$ per iteration of the loop at Line~\ref{lin:loopsparsify} and thus $O(k^3\log k)$ in total.
	Thus indeed $C$ is a sparse separator at Line~\ref{lin:mainaddtores} and Corollary~\ref{cor:sparsereduction} applies.

	Furthermore, we claim that all associated cycles maintain the property that at least $\theta/4$ pattern vertices are in the strict interior and strict exterior of the cycle. Let us call this the balance property.	
	To see this, consider some cycle $C$ at Line~\ref{lin:mainconsidercycle} and assume it has the balance property. 
	
	Consider the cycles $C^\alpha$, $\alpha \in \{\circlearrowleft,\circlearrowright\}$ we add to $\cA^d$ on Lines~\ref{lin:mainoutputcycss} and~\ref{lin:mainoutputcycs}.
	If $\side=\int_G(C)$, then $\int_G(C^\alpha)\subseteq \int_G(C)$, and thus $\ext_G(C^\alpha)$ contains at least $\theta/4$ pattern vertices as $\ext_G(C)$ does so.
	If $\side=\ext_G(C)$, then $\int_G(C) \subseteq \ext_G(C^\alpha)$, and thus $\ext_G(C^\alpha)$ contains at least $2\sqrt{k}$ pattern vertices as $\int_G(C)$ does so.
	Since at Line~\ref{lin:mainapplysparse} $C_\heavy \cup C_\light$ has only $16\sqrt{k}\log_{4/3}k$ pattern vertices, both the strict interior and the strict exterior of $C$ must also contain $\theta/2-16\sqrt{k}\log_{4/3}k=\Omega(\sqrt{k})$ pattern vertices.
	
	Therefore, the reduction is correct, and it can be easily verified that all created subproblems have the remaining properties stated in the lemma.
\end{proof}

\subsection{Clean-Up Reduction: Shrinking The Boundary and Small Monitored Sets}
\begin{algorithm}[h]
	\caption{Clean-Up Step reduction from Lemma~\ref{lem:cleanstep}}
	\label{alg:halvewaste}
	\begin{algorithmic}[1]
		\REQUIRE $\mathtt{cleanStep}(G,B,\cM,W)$\hfill\algcomment{Assumes $W \subseteq V(G)$, $|W|=\poly(k)$ and $G$ is $k$-outerplanar}
		\ENSURE $k^{O(\log k)}$ subproblems with the properties of Lemma~\ref{lem:cleanstep}
		\STATE Triangulate $G$ with Lemma~\ref{lem:outerplanar} so that $G^\Delta:=G\cup \Delta$ is triangular and $(k+1)$-outerplanar
		\STATE Find a separator $C$ on at most $k$ vertices balanced for vertices in $W$ using Lemma~\ref{lem:balcyc}\label{lin:balW}
		\STATE Let $C=(C_\heavy, C_\light, C_\discard) := (V(C), \emptyset, \emptyset)$ \hfill\algcomment{$C$ denotes the annotated cycle} \label{lin:initialannotation2}
		\STATE Let $\cA^0 := \{((G,B,\cM),C)\}$ \label{lin:initialcycle2}
		\FOR{$d=1,\ldots,z:= \log_{4/3} k$}\label{lin:loopsparsify2}
		\STATE Initiate $\cA^d := \emptyset$
		\FOR{$((G,B,\cM),C) \in \cA^{d-1}$}\label{lin:pickcycl2}
		\STATE Find alignment $C^\pointleft,C^\pointdown,C^\pointright,C^\pointup$ of $C$ with $\forall x\in \{\pointleft,\pointdown,\pointright,\pointup\}: |C^x \cap C_{\heavy}| \geq \lfloor |C_{\heavy}| / 4 \rfloor$  
		\LineIfElse{$|V(\int_G(C)) \cap W|\geq |V(\ext_G(C)) \cap W|$}{$\side \gets \int_G(C)$}{$\side \gets \ext_G(C)$}
		\STATE Apply $\mathtt{menger+}$ (Lemma~\ref{lem:chaineitherway}) with $\side(G')$, aligned cycle $C$, $p=\sqrt{k}$, and $q=3k^3$
		\STATE Let $S_1,\ldots,S_r$ be the obtained output
		\IF[$S_1,\ldots,S_r$ are disjoint separators]{$r=p$}
		\STATE Use Lemma~\ref{lem:sepsred} on $(G,B,\cM)$, $G^\Delta$, $C$ and $S_1,\ldots,S_r$. Add obtained subproblems to $\cB$.
		\ELSE[$S_1,\ldots,S_r$ are nearly disjoint paths]
		\STATE Use Lemma~\ref{lem:dispathred} on $(G,B,\cM)$, $G^\Delta$, $C$ and $S_1,\ldots,S_r$; let $\cI \cup \cO$ be the output
		\STATE Add all obtained subproblems in $\cI$ to $res$, and all subproblems in $\cO$ to $\cB$.
		\ENDIF
		\FOR{every subproblem $(G_i,B_i,\cM_i) \in \cB$ with $S_i$ associated separator $S_i$}
		\STATE Consider the cycles $C^{\circlearrowleft}$ and $C^{\circlearrowright}$ of the graph $C \cup S_i$ that contain $S_i$
		\STATE Assume $|V(\sext_{G^\circlearrowleft}(C)) \cap W|\leq |V(\sext_{G^\circlearrowright}(C)) \cap W|$ by relabeling, if needed
		\item[] \hfill\algcomment{We let $\ann(\cdot)$ annotate a cycle according to~\eqref{eq:anncyc}}
		\STATE Add $\big( (G_i,B_i,\cM_i), \ann(C^{\circlearrowleft}) \big)$ to $\cA^d$\label{lin:mainoutputcycss2}
		\ENDFOR
		\ENDFOR
		\ENDFOR
		\FOR{$( (G_i,B_i,\cM_i), C) \in \cA^z$}\label{lin:pickcyc}
		\STATE Apply Corollary~\ref{cor:sparsereduction} on $(G_i,B_i,\cM_i)$ and separator $C_\heavy \cup C_\light$\label{lin:mainapplysparse}
		\STATE Add the resulting two subproblems to $res$\label{lin:mainaddtores2}\hfill\algcomment{Uses that $C_\heavy \cup C_\light$ is a sparse separator}
		\ENDFOR
		\RETURN $res$
	\end{algorithmic}
\end{algorithm}
We continue with presenting the clean-up reduction. Our reduction applies a procedure $\texttt{cleanStep}$ (listed in Algorithm~\ref{alg:halvewaste}) for $O(\log |W|)=O(\log k)$ times.
We first present this procedure and its properties.
As mentioned in the beginning of this section, its goal is to reduce the subproblem at hand to subproblems with only few vertices of $W$.
Algorithm~\ref{alg:halvewaste} follows Algorithm~\ref{alg:mainreduction1} closely, and indeed the output subproblems have parameters similar to the subproblems output by Algorithm~\ref{alg:mainreduction1}.

The main difference is how the set $W$ splits.
We guarantee that $|(W\setminus (B_i \setminus B)) \cap V(G_i)| \leq \tfrac{3}{4}|W|$, that is, the number of vertices of $W$ in the subproblem that are not added to the boundary $B$ is at most $\tfrac{3}{4}|W|$.
This gives a handle on the vertices added to the boundary $B$. In particular, we only add $O(k^3 \log k)$ of them and few of them will be pattern vertices: $\upp_{\cM_i}(B_i\setminus B) \leq O(\sqrt{k} \log k)$.

\begin{lemma}[Clean Step Reduction]\label{lem:cleanstep}
	There is a strict reduction that given a subproblem $(G,B,\cM)$ and a set $W \subseteq V(G)$ with $|W|= \Omega(\sqrt{k})$ outputs $l = k^{O(\log k)}$ subproblems $\{(G_i,B_i,\cM_i)\}_{i \leq l}$ with the following properties for every $1 \leq i \leq l$:
	\begin{align*}
		|B_i \setminus B| \leq O(k^3 \log k), \qquad \upp_{\cM_i}(B_i\setminus B) \leq O(\sqrt{k} \log k), \qquad |(W\setminus (B_i \setminus B)) \cap V(G_i)| \leq \tfrac{3}{4}|W|,\\
		|\largeM(\cM_i)| \leq |\largeM(\cM)| + O(\log k), \qquad |\smallM(\cM_i) \setminus \smallM(\cM[V[G_i]])| \leq O(\sqrt{k}\log k).
	\end{align*}
\end{lemma}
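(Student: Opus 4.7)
The plan is to mirror Algorithm~\ref{alg:mainreduction1} but to drive the sparsification toward balancing $W$ rather than the pattern vertices. I would first triangulate $G$ via Lemma~\ref{lem:outerplanar} and apply Lemma~\ref{lem:balcyc} with the uniform $1/|W|$-weight function on $W$, obtaining a cycle $C$ of length $O(k)$ whose strict interior and strict exterior each contain at most $3/4 |W|$ vertices of $W$. I would seed the sparsification with the annotated cycle $(V(C), \emptyset, \emptyset)$, and iterate $z := \log_{4/3} k$ rounds, maintaining the invariant that each annotated cycle currently in the frontier has both $|\sint(C) \cap W|$ and $|\sext(C) \cap W|$ at most $3/4 |W|$.

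In each round I would replicate the scheme of the main reduction: compute an aligned partition of $C_\heavy$ into four quarters; select $\side \in \{\int_G(C), \ext_G(C)\}$ to contain at least half of the remaining $W$-vertices; invoke $\mathtt{menger+}$ (Lemma~\ref{lem:chaineitherway}) with $p = \sqrt{k}$ and $q = 3k^3$; and dispatch to Lemma~\ref{lem:sepsred} (disjoint separators) or Lemma~\ref{lem:dispathred} (nearly-disjoint paths, via efficient inclusion-exclusion). For each produced subproblem with associated separator $S_i$, I would form the two candidate cycles $C^\circlearrowleft, C^\circlearrowright$ of $C \cup S_i$ and, in contrast to the main reduction, keep only the one whose strict exterior contains fewer $W$-vertices, annotated per rule~\eqref{eq:anncyc}. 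Because $C^\alpha \subseteq \side$ and the quarter of $C_\heavy$ crossed by $S_i$ is replaced by $S_i$ itself (which enters $C_\light$ or $C_\heavy \cup C_\discard$ per the annotation rule), each round shrinks $|C_\heavy|$ by a factor of $3/4$ until it drops to $O(\sqrt{k})$, while adding at most $|S_i| = O(k^3)$ vertices to $C_\light$.

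After $z$ rounds, $|C_\heavy \cup C_\light| = O(k^3 \log k)$; the small monitors added along the way (one monitor of pattern-cap $O(\sqrt{k})$ per Lemma~\ref{lem:sepsred} round) contribute $|\smallM(\cM_i) \setminus \smallM(\cM[V(G_i)])| = O(\sqrt{k}\log k)$, and the inclusion-exclusion branches of Lemma~\ref{lem:dispathred} contribute $O(\log k)$ large monitors, so that $\upp_{\cM_i}(B_i \setminus B) \leq O(\sqrt{k} \log k)$. I would close each branch by invoking Corollary~\ref{cor:sparsereduction} on the separator $C_\heavy \cup C_\light$, producing two subproblems per branch for a total of $k^{O(\log k)}$ subproblems; the maintained invariant yields $|(W \setminus (B_i \setminus B)) \cap V(G_i)| \leq 3/4 |W|$ on both sides of the split. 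Strictness of the overall reduction follows by composing the strict reductions of Lemma~\ref{lem:sepsred}, Lemma~\ref{lem:dispathred}, and Corollary~\ref{cor:sparsereduction}.

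The main obstacle will be verifying the $W$-balance invariant across rounds. The interior bound $|\sint(C^\alpha) \cap W| \leq |\sint(\side) \cap W| \leq 3/4 |W|$ is immediate from $C^\alpha \subseteq \side$ and induction, but the exterior bound $|\sext(C^\alpha) \cap W| \leq 3/4 |W|$ is delicate: $\sext(C^\alpha)$ contains $\sext(C)$ together with the "other" sub-region of $\sint(C)$ carved off by $S_i$, so one must combine the choice $\side = \int_G(C)$ (which forces $|\sext(C) \cap W| \leq |W|/2$), the choice of the $\circlearrowleft$-candidate with larger $|\sint \cap W|$ (which halves the other sub-region's $W$-count), and the bound $|S_i \cap W| = O(k^3) \ll |W|$ coming from $|W| = \Omega(\sqrt{k})$ scaling appropriately across rounds. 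A secondary technical hurdle will be tracking the sizes of $\smallM(\cdot)$ and $\largeM(\cdot)$ through the $O(\log k)$ cascaded sub-reductions, since each application of Lemma~\ref{lem:dispathred} threads a constant number of large monitors into the subproblems and Lemma~\ref{lem:sepsred} threads $O(\sqrt{k})$ small ones, and these must accumulate into exactly the stated $O(\log k)$ and $O(\sqrt{k}\log k)$ totals.
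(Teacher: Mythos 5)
Your proposal is essentially the paper's own proof: it reconstructs Algorithm~\ref{alg:halvewaste}, i.e.\ the main reduction re-run with the initial cycle balanced for $W$ via Lemma~\ref{lem:balcyc}, the side chosen to hold the majority of $W$, the candidate cycle $C^{\circlearrowleft}$ kept by minimizing $|\sext(\cdot)\cap W|$, and a final application of Corollary~\ref{cor:sparsereduction} to $C_\heavy\cup C_\light$. The ``obstacle'' you flag is resolved in the paper exactly as you sketch it, via the inequality $|V(\sext_{G^{\circlearrowleft}}(C))\cap W|\leq |\sext_G(C)\cap W|+|W|/2\leq \tfrac{3}{4}|W|$, and your monitor accounting matches the paper's.
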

\begin{proof}
	The reduction is given in Algorithm~\ref{alg:halvewaste}. Note the algorithm is identical to Algorithm~\ref{alg:mainreduction1} except that it handles weights in a more direct way as they are known in this setting.
	
	Indeed, we start with a cycle $C$ that is $\tfrac{3}{4}$-balanced for $W$ in Line~\ref{lin:balW}. 
	Assume $|W \cap V(\int_G(C))| \geq |W \cap V(\ext_G(C))|$, and that $|V(\ext_{G^\circlearrowleft}(C)) \cap W|\leq |V(\ext_{G^\circlearrowright}(C)) \cap W|$ (the reverse case is symmetric). We see that $|V(\ext_{G^\circlearrowleft}(C)) \cap W| \leq|\sext_{G}(C) \cap W|+|W|/2\leq 3|W|/4$. Thus the cycle picked at Line~\ref{lin:pickcyc} satisfies the same property, and the subproblems at created at Line~\ref{lin:mainaddtores2} indeed satisfy $|(W\setminus V(C)) \cap V(G_i)| \leq \tfrac{3}{4}|W|$.
		
	The remaining part of the correctness is identical to the one in the proof of Lemma~\ref{lem:mainred}.
\end{proof}

Now we present the main clean-up reduction and its properties.
The intuition is that the clean step from Lemma~\ref{lem:cleanstep} is on its own not sufficient to make the subproblem amenable for a main reduction because it does not necessarily decrease $\upp_{\cM}(W)$ to $c\sqrt{k}\log k$ for some constant $c$ (we cannot split the set of pattern vertices in $W$ equally because we don't know them). However, if we apply it $\log |W|$ times on remaining subsets $W$, we ensure only $O(1)$ will remain on top of the vertices added to $B$ during the $O(\log^2 k)$ reductions.

\begin{lemma}[Clean-Up Reduction]\label{lem:shrinkboundred}
	There is a strict reduction that given a subproblem $(G,B,\cM)$ and a set $W \subseteq V(G)$ with $\Omega(\sqrt{k})\leq |W| \leq \poly(k)$ outputs $l = k^{O(\log k)}$ subproblems $\{(G_i,B_i,\cM_i)\}_{i \leq l}$ with the following properties for every $1 \leq i \leq l$:
	\begin{align*}
		|B_i \setminus B| \leq O(k^3 \log^2 k), \qquad
		\upp_{\cM_i}(B_i\setminus B) \leq O(\sqrt{k} \log^2 k), \qquad
		|(W\setminus (B_i \setminus B)) \cap V(G_i)| \leq O(1),\\
	|\largeM(\cM_i)| \leq |\largeM(\cM)| + O(\log^2 k), \qquad
	|\smallM(\cM_i) \setminus \smallM(\cM[V[G_i]])| \leq O(\sqrt{k}\log^2 k).
	\end{align*}
\end{lemma}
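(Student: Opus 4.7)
The plan is to iterate the clean step reduction of Lemma~\ref{lem:cleanstep} for $z := O(\log |W|) = O(\log k)$ rounds, passing into each round the \emph{residual} subset of $W$ that has not yet been absorbed into the boundary. Since each round shrinks this residual by a factor of at least $\tfrac{3}{4}$, after $z$ rounds the residual is $O(1)$, giving the third bullet of the lemma; the remaining bounds then follow by summing the per-round guarantees across $O(\log k)$ rounds.

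Concretely, I would maintain a collection of active pairs $(\pi_j, W_j)$ where $\pi_j = (G_j,B_j,\cM_j)$ is a subproblem and $W_j := W \cap V(G_j) \setminus (B_j \setminus B)$ is its residual, initialized with the single pair $((G,B,\cM), W)$. In iteration $j = 1,\ldots, z$, for every active pair $(\pi, W')$ with $|W'| \geq \Omega(\sqrt{k})$ I would invoke $\mathtt{cleanStep}(\pi, W')$ from Lemma~\ref{lem:cleanstep}, producing $k^{O(\log k)}$ new subproblems each with residual at most $\tfrac{3}{4}|W'|$. Once $|W'|$ falls below the $\Omega(\sqrt{k})$ threshold needed by Lemma~\ref{lem:cleanstep}, the remaining $W'$ can be moved wholesale into the boundary $B$; this adds at most $O(\sqrt{k})$ vertices and contributes at most $O(\sqrt{k})$ to $\upp_{\cM}$, well within the stated budgets.

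Summing the per-round increments of Lemma~\ref{lem:cleanstep} across the $O(\log k)$ rounds, the contributions telescope: $O(k^3\log k)$ per round to $|B_i\setminus B|$ gives $O(k^3\log^2 k)$ in total; $O(\sqrt{k}\log k)$ per round to $\upp_{\cM_i}(B_i\setminus B)$ gives $O(\sqrt{k}\log^2 k)$; $O(\log k)$ large monitors per round give $O(\log^2 k)$; and $O(\sqrt{k}\log k)$ new small monitors per round give $O(\sqrt{k}\log^2 k)$. The residual bound $|(W\setminus(B_i\setminus B))\cap V(G_i)|\leq O(1)$ follows from the geometric shrinkage combined with the assumption $|W|\leq \poly(k)$. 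Strictness is preserved because each clean step is strict, and strictness composes.

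The main subtlety I expect is maintaining the invariant on the residual correctly between rounds: if a $W$-vertex $v$ is promoted to the boundary during round $j$, it must be excluded from the residual used as the weighted set in round $j+1$, since otherwise the balanced separator in round $j+1$ might merely ``re-split'' already-absorbed vertices and fail to shrink the truly unabsorbed set. Defining $W_j := W \cap V(G_j) \setminus (B_j \setminus B)$ and recomputing it at the start of each round sidesteps this pitfall. A secondary technical point is that one must check that each added large monitor in Lemma~\ref{lem:cleanstep} genuinely has $|M| > k$, so that it indeed counts as large and does not silently inflate the small-monitor budget—this follows from the construction in the clean step, which only adds monitors supported on interiors/exteriors of cycle separators of subgraphs that individually contain $\Omega(\sqrt{k})$ residual vertices by the balance property.
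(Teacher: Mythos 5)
Your proposal is correct and follows essentially the same route as the paper: Algorithm~\ref{alg:cleanup} likewise just iterates $\mathtt{cleanStep}$ for $\log_{4/3}|W| = O(\log k)$ rounds and multiplies the per-round bounds of Lemma~\ref{lem:cleanstep} by $O(\log k)$, with the $O(1)$ residual bound coming from the geometric shrinkage. Your explicit bookkeeping of the residual $W_j = W \cap V(G_j) \setminus (B_j \setminus B)$ (rather than $W \cap V(G_j)$ as literally written in the paper's pseudocode) and your handling of the case $|W_j| = o(\sqrt{k})$ are minor but welcome refinements of the same argument.
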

\begin{proof}
	The reduction simply applies Algorithm~\ref{alg:halvewaste} $O(\log |W|)$ times (see Algorithm~\ref{alg:cleanup}). By Lemma~\ref{lem:cleanstep} the number of produced subproblems is $(k^{O(\log k)})^{O(\log k)}$, and $|B_i \setminus B|$, $\upp_{\cM_i}(B_i\setminus B)$, $|\largeM(\cM_i)|$ and $|\smallM(\cM_i) \setminus \smallM(\cM[V[G_i]])|$ all satisfy the bound from Lemma~\ref{lem:cleanstep} multiplied with $\log_{4/3}|W|=O(\log k)$.
	
	Since we reduce $W$ with a $3/4$ factor for $\log_{4/3}|W|$ times in each call of $\mathtt{cleanStep}$ the upper bound $|(W\setminus (B_i \setminus B)) \cap V(G_i)| \leq O(1)$ follows.
	
		\begin{algorithm}[H]
		\caption{Clean-Up Reduction}
		\label{alg:cleanup}
		\begin{algorithmic}[1]
			\REQUIRE $\mathtt{cleanUp}(G,B,\cM,W)$\hfill\algcomment{Assumes $W \subseteq V(G)$, $|W|=\poly(k)$ and $G$ is $k$-outerplanar}
			\ENSURE $k^{O(\log^2 k)}$ subproblems with the properties of Lemma~\ref{lem:cleanstep}
			\STATE Let $\cA^0=\{(G,B,\cM)\}$
			\FOR{$d=1,\ldots,z:=\log_{4/3} |W|$}
				\STATE $\cA^d \gets \emptyset$
				\FOR{$(G_i,B_i,\cM_i) \in \cA^{d-1}$}
					\STATE Add all subproblems output by $\mathtt{cleanStep}(G_i,B_i,\cM_i,W \cap V(G_i))$ to $\cA^d$
				\ENDFOR
			\ENDFOR
			\STATE \algorithmicreturn\ $\cA^z$
		\end{algorithmic}
	\end{algorithm}
\end{proof}

Note that if $\mathtt{cleanUp}(G,B,\cM,W)$ is invoked, $B$ is large, and $W$ contains $B$ then the created subproblems will significantly fewer vertices of $B$ by Lemma~\ref{lem:shrinkboundred} because the quantity $|(W\setminus (B_i \setminus B)) \cap V(G_i)|$ that is guaranteed to be constant.

\subsection{The Algorithm: Combining All The Above}
\label{subsec:mainalgo}

Now we use the two reductions from the previous sections recursively to prove Theorem~\ref{thm:main}.
		
\begin{algorithm}[H]
	\caption{Main algorithm to solve a subproblem}
	\label{alg:main}
	\begin{algorithmic}[1]
		\REQUIRE $\mathtt{solveSubproblem}(G,B,\cM)$\hfill\algcomment{Assumes $G$ is $k$-outerplanar}
		\ENSURE Answer of subproblem $(G,B,\cM)$
		\IF{$\min \{ \upp_{\cM}(V(G)), |V(G) \setminus B| \} \leq \sqrt{k}\log^{5} k$}
			\STATE \algorithmicreturn\ answer as computed by Lemma~\ref{lem:basecase}\label{lin:basecase}
		\ELSIF[$c_1 >0$ is a constant chosen later]{$|B| \geq k^4$\textbf{ or }$ |\upp_{\cM_i}(B)| + |\smallM(\cM_i)| \geq c_1\sqrt{k}\log^5 k$}\label{lin:thresholds}
			\STATE $\cA \gets \texttt{cleanUp}(G,B,\cM,B \cup_{M \in \cM}M)$\label{lin:clean}
		\ELSE
			\STATE $\cA \gets \texttt{mainReduce}(G,B,\cM)$\label{lin:main}
		\ENDIF			
		\FOR{$(G_i,B_i,\cM_i) \in \cA$}
			\STATE $a_i \gets \mathtt{solveSubproblem}(G,B,\cM)$
		\ENDFOR
		\STATE \algorithmicreturn\ answer to $(G,B,\cM)$ from reduction of Lemma~\ref{lem:shrinkboundred} and~\ref{lem:mainred} based on answers $\{a_i\}$
	\end{algorithmic}
\end{algorithm}

\begin{proof}[Proof of Theorem~\ref{thm:main}]
	We claim that it suffices to show how to compute $\ind(P,G)$ in the stated time bound: To compute $\sub(P,G)$, let $P'$ and $G'$ be obtained from $P$ and $G$ by replacing every edge $\{x,y\}$ with triangle on vertices $u_{xy},v_{xy},w_{xy}$ and with edges $\{u_{xy},x\}$ and $\{u_{x,y},v\}$. It is easy to see that $\sub(P,G)=|\ind(P',G')|/2^{|E(P)|}$, and that $|\sepsa{\tO(\sqrt{k}),P'}|$ is at most $|\sepsa{\tO(\sqrt{k}),P}|^c$ for some $c$.
	
	We use Algorithm~\ref{alg:main} to solve the subproblem $(G,\emptyset,\{(V(G),0,k)\})$;
	note we can indeed assume $G$ to be $k$-outerplanar by Lemma~\ref{lem:redkout}. 
	The correctness of the algorithm follows directly from Lemma~\ref{lem:shrinkboundred} and~\ref{lem:mainred}.
	Note that at Line~\ref{lin:basecase} we can assume $\upp_{\cM}(V(G)) \leq O(\sqrt{k}\log^5 k)$ since $\upp_{\cM}(V(G)) \leq O(\sqrt{k}\log^5 k)$.

	\paragraph{Running Time}
	We first focus on the number of subproblems generated throughout the algorithm.
	We assign a potential $\varphi$ to a subproblem that estimates how soon a cleaning step occurs, defined as follows:
	\[
		\varphi((G,B,\cM))= \frac{|B|}{k^4}+\frac{\upp(B)+|\smallM(\cM)|}{c_1\sqrt{k}\log^6 k},
	\]
	for some large enough constant $c_1$.
	Note that if a cleaning step occurs, then $\varphi(G,B,\cM)\geq 1$, and afterwards $\varphi \leq 1/(c_1\log^3 k)+1/k$, which is at most $1/\log^{-3}k$ for large enough $c_1$ and $k>1$.	
	We define $T(\nu,u,\varphi)$ as the number of subproblems generated by $\mathtt{solveSubproblem}(G,B,\cM)$ where $|V(G)\setminus B|=\nu$, $\upp_{\cM}(B)=u$ and $\varphi=\varphi(G,B,\cM)$. By Lemma's~\ref{lem:shrinkboundred} and~\ref{lem:mainred} we have:
	\[
		T(\nu,u,\varphi) \leq
		\begin{cases}
		1,& \text{if Line~\ref{lin:basecase} is reached }\\
		k^{O( \log k)} T(\nu,u,\log^{-3} k), & \text{if Line~\ref{lin:clean} is reached} \\
		 2T\big(\tfrac{3}{4}\nu,u,\varphi+\log^{-3} k\big)+ k^{O(\log k)}T\Big(\nu,u-\Omega\big(\sqrt{k}\big),\varphi+\log^{-3} k\Big),& \text{if Line~\ref{lin:main} is reached}.
		\end{cases}
	\]

	We claim that
	\[
		T(\nu,u,\varphi) \leq \nu^4 \exp\left(c_2\left(\frac{u}{\sqrt{k}}+\varphi\right)\log^{2} k\right),
	\]
	for some large enough constant $c_2$.
	To see this, first note the base case where Line~\ref{lin:basecase} is reached is trivial. Moreover, if Line~\ref{lin:clean} is reached, we have for some constant $c_3$ that,
	\begin{align*}
		T(\nu,u,\varphi) &\leq k^{O(\log k)}\nu^4\exp\left(c_2\left(\frac{u}{\sqrt{k}}+\log^{-3}k\right)\log^{2} k\right)\\
		&\leq \nu^4\exp\left(c_2\left(\frac{u}{\sqrt{k}}+\log^{-3}k+c_3/c_2\right)\log^{2} k\right)\\
		&\leq \nu^4\exp\left(c_2\left(\frac{u}{\sqrt{k}}+1\right)\log^{2} k\right),
	\end{align*}
	where the last inequality assumes $c_2$ is chosen to be at least twice $c_3$.
	
	Finally, if Line~\ref{lin:main} is reached then $T(\nu,u,\varphi)=A+B$ where
	\begin{align*}
		 A&\leq 2(\tfrac{3}{4}\nu)^4\exp\left(c_2\left(\frac{u}{\sqrt{k}}+\varphi+\log^{-3}k\right)\log^{2} k\right)\\
		 &\leq 2(\tfrac{3}{4})^4\exp(c_2\log^{-1}k) \nu^4\exp\left(c_2\left(\frac{u}{\sqrt{k}}+\varphi\right)\log^{2}k\right)\\
		 &\leq \tfrac{2}{3}\nu^4 \exp\left(c_2\left(\frac{u}{\sqrt{k}}+\varphi\right)\log^{2} k\right),
	\end{align*}
	where the last inequality uses $c_2$ is chosen large enough,
	and for some constants $c_4,c_5$ we have
	\begin{align*}
		B &= k^{c_4 \log k}\exp\left(c_2\left(\frac{u-c_5\sqrt{k}}{\sqrt{k}}+\varphi+\log^{-3} k\right)\log^{2} k\right)\\
		&\leq \exp\left(c_2\left(\frac{u}{\sqrt{k}}-c_5+c_4/c_2+\varphi+\log^{-2} k\right)\log^{2} k\right)\\
		&\leq \tfrac{1}{3}\nu^4 \exp\left(c_2\left(\frac{u}{\sqrt{k}}+\varphi\right)\log^{2} k\right),
	\end{align*}
	where the last inequality uses that $c_2$ is chosen large enough and that $k$ is large enough. Thus the claim holds in this case as well.
	
	Now we focus on the time spent per subproblem. Since algorithm Lemma~\ref{lem:basecase} runs in the claimed time bound $2^{\tilde{O}(\sqrt{k})}|\sepsa{\tO(\sqrt{k})}|n^{O(1)}$ and both reductions $\texttt{cleanUp}$ and $\texttt{mainReduce}$ are strict, it suffices that the sizes of the input and outputs to all subproblems is at most $2^{\tilde{O}(\sqrt{k})}|\sepsa{\tO(\sqrt{k})}|n^{O(1)}$.
	
	To see that this is true, note $|B| = O(k^4)$ and $|\upp_{\cM_i}(B)|,|\smallM(\cM_i)|\leq O(\sqrt{k}\log^5 k)$ due to the check and cleaning step at Lines~\ref{lin:thresholds} and~\ref{lin:clean}. Moreover we claim that, $|\largeM(\cM)| \leq O(\sqrt{k}\log^3k + \log n / \log^4 k)$ for any generated subproblem $(G,B,\cM)$.
	To see this, first note that during a recursive call the procedure $\texttt{mainReduce}$ is invoked at most $O(\sqrt{k}+\log n)$ times as in each call either $\upp_{\cM}(V(G))$ is decreased with $\Omega(k)$ or $V(G) \setminus B$ is decreased with a constant factor.
	The number of large monitors added by $\texttt{mainReduce}$ is at most $O(\sqrt{k} \log k)$ since it only add large monitors to generated subproblems with $\upp_{\cM}(V(G))$ being decreased with $\Omega(k)$.
	
	Second, the number of call to $\texttt{cleanUp}$ is at most $O(\log^{-4} k)$ times the number of calls to $\texttt{mainreduce}$ as this number of calls is needed to increase $|\upp_{\cM}(B)|$ or $|\smallM(\cM)|$ to be large enough so the condition at Line~\ref{lin:thresholds} holds.
	As in each call to $\texttt{cleanUp}$ only $O(\log^2 k)$ large monitors are added, we add a total of $O((\sqrt{k}+\log n)\log^{-2}k)$, and hence the number of possibilities $k^{|\largeM(\cM)|}$ for counter of large monitors is at most $2^{O(\sqrt{k})}n^{O(1)}$, as required.
\end{proof}

\paragraph{Acknowledgements}The author thanks Radu Curticapean, Viresh Patel and Guus Regts for discussions.
In particular Viresh Patel and Guus Regts for an inspiring explanation of~\cite{patelregts}, and Viresh Patel for posing Question 8.3 from~\cite{patelregts} during a `training week' of the NETWORKS project (which was the starting point of this research).
Moreover, Radu Curticapean pointed out the $O^*(2^{O(k)})$ time algorithm from~\cite{DBLP:conf/stacs/Dorn10} and enriched the author's vocabulary with the verb `to include-exclude'. 

\bibliographystyle{alpha}
\bibliography{main}



\end{document}